\date{}
\newcommand{\gl}{\mf{gl}}
\begin{document}
\title{Twisted supergravity and its quantization}
\author{Kevin Costello and Si Li}
\thanks{}

\address{Perimeter Institue for Theoretical Physics}
\email{kcostello@perimeterinstitute.ca}

\address{Yau Mathematical Sciences Center, Tsinghua University}
\email{sli@math.tsinghua.edu.cn}
\maketitle

\begin{abstract}
Twisted supergravity is supergravity in a background where the bosonic ghost field takes a non-zero value.  This is the supergravity counterpart of the familiar concept of twisting supersymmetric field theories. 

In this paper, we  give conjectural descriptions of type IIA and IIB supergravity in $10$ dimensions. Our conjectural descriptions are in terms of the closed-string field theories associated to certain topological string theories, and we conjecture that these topological string theories are twists of the physical string theories. 

For type IIB, the results of \cite{CosLi15} show that our candidate twisted supergravity theory admits a unique quantization in perturbation theory. This is despite the fact that the theories, like the original physical theories, are non-renormalizable.   

Although we do not prove our conjectures,  we amass considerable evidence. We find that our candidates for the twisted supergravity theories contain the residual supersymmetry one would expect. We also prove (using heavily a result of Baulieu \cite{Bau10}) the open string  version of our conjecture:  the theory living on a brane in the topological string theory is a twist of the maximally supersymmetric gauge theory living on the brane in the physical string theory.

\end{abstract}
 
% \tableofcontents
\section{Introduction}
Since its introduction by Witten \cite{Wit88a}, the idea of twisting supersymmetric field theories has had a profound impact on quantum field theory and on related areas of mathematics.  Twisted supersymmetric field theories know about certain BPS operators in the original physical theory, and often one can perform exact computations in the twisted theory.  In this way one can obtain many checks on the conjectural dualities of string and $M$-theory.

One important duality is not accessible by this method, however: the AdS/CFT correspondence.  This is because the AdS/CFT correspondence relates a supersymmetric field theory with a supergravity theory, and Witten's construction of twisting does not apply to supergravity.

In this paper we remedy this gap, by proposing a definition of twisted supergravity which we conjecture is AdS dual to twisted supersymmetric field theory. We give conjectural formulations of twisted type IIA and type IIB supergravity in $10$ dimensions.  For example, we conjecture that twisted type IIB supergravity is the Kodaira-Spencer theory \cite{BerCecOog94} (or BCOV theory as extended for arbitrary Calabi-Yau manifolds in \cite{CosLi12}) in $5$ complex dimensions.  

Perhaps surprisingly, type IIB  twisted supergravity theory can be quantized uniquely in perturbation theory, despite the fact that the theory is non-renormalizable. This result was proved in \cite{CosLi15} where we showed that BCOV theory on $\C^5$ admits a canonically defined perturbative quantization.  We believe a similar argument applies to type IIA; we plan to present the details elsewhere.  

These results are the first step in a program to give a precise formulation of the AdS/CFT correspondence in terms of rigorously defined mathematical objects: quantum twisted supergravity and quantum twisted supersymmetric gauge theories.  We will explore the AdS/CFT correspondence in detail in  subsequent publications.  The primary aim of this article is to define and motivate the concept of twisted supergravity, to state our conjectural description for various twisted supergravity theories, and to provide evidence for these conjectures.

To be able to formulate the twisted AdS/CFT correspondence, we need to understand not just twisted supergravity, but also how branes appear in twisted supergravity.  In this paper we describe how $D$-branes behave in our twists of type IIA and type IIB. We find that the theory living on a $D$-brane in twisted type IIB supergravity can be described as holomorphic Chern-Simons theory on a certain supermanifold, and that there is a similar description for $D$-branes in type IIA. We prove (using a result of Baulieu) that the theory living on a  $D$-brane in twisted supergravity in $10$ dimensions is a twist of the theory living on the brane in the physical theory.  

We hope that our formulation of twisted string theory and supergravity will make aspects of the subject accessible to mathematicians.  Indeed, we conjecture that twists of type IIA and IIB superstring theory can be described in terms of topological string theories which are mixtures of $A$- and $B$-models. 

%In a different direction, we expect that our conjectural descriptions of twisted string theory and supergravity will elucidate the known connections between the physical superstring compactified to four dimensions on a Calabi-Yau $3$-fold $X$ and the topological string on $X$. We hope to return to this topic in future work.  

\section{Twisted supergravity}
In this section, we will introduce the concept of twisted supergravity and discuss its relation with twisted supersymmetric field theories.

\subsection{Reminders on twisting supersymmetric field theories}
Before describing twisted supergravity, let us first recall what a twisted supersymmetric field theory is.  Suppose we have a supersymmetric field theory in dimension $d$, which is acted on by the Lorentz group $\op{Spin}(d-1,1)$ and has an $R$-symmetry group $G_R$. Both $\op{Spin}(d-1,1)$ and $G_R$ act on the supersymmetries of the theory.   As Witten described it, twisting the theory amounts to performing the following three steps.
\begin{enumerate} 
 \item First, we choose a homomorphism $\phi : \op{Spin}(d-1,1) \to G_R$.   This gives rise to a new action of $\op{Spin}(d-1,1)$ on the fields of the theory, using the homomorphism
 $$
(1,\phi) : \op{Spin}(d-1,1) \to \op{Spin}(d-1,1) \times G_R 
 $$
and the action of $\op{Spin}(d-1,1) \times G_R$ on the fields. Using these new action of $\op{Spin}(d-1,1)$ means that we have changed the spin of the fields, and also of the supersymmetries. 
\item Next, we choose a supercharge $Q$ which is invariant under the new action of  $\op{Spin}(d-1,1)$, and which has $Q^2 = 0$. 
\item Finally, we add $Q$ to the BRST differential $\d_{BRST}$ of the theory, and treat the differential $\d_{BRST} + Q$ as the BRST differential of the twisted theory. For example, the space of physical operators in the twisted theory is the cohomology of the space of operators of the original theory with respect to $\d_{BRST} + Q$. Typically, this cohomology is much smaller than the cohomology with respect to just $\d_{BRST}$. 
\end{enumerate}

The criterion that $Q$ must be invariant under a copy of $\op{Spin}(d-1,1)$ implies that the image of the operation of bracketing with $Q$ in the supersymmetry algebra is an $\op{Spin}(d-1,1)$-invariant subspace of $\R^{d-1,1}$, and so must be the whole space (otherwise $Q$ would be central).  This tells us that translation in every direction in space-time is $Q$-exact.  This implies that the Hamiltonian and momentum operators in the twisted theory are zero, and that the correlation functions of local operators are independent of position. These are hallmarks of having a topological theory.

If one is willing to consider non Lorentz-invariant theories, there is no need for step 1.  We can just choose any supercharge $Q$ of square zero, and then add $Q$ to the BRST differential.  These leads to a more general class of twists than those originally considered by Witten. These twists may not be topological.

For example, working in Euclidean signature on $\R^{2d} = \C^d$,  there is typically\footnote{There is such a $Q$ whenever there are at least $4$ supercharges} a unique $Q$ (up to rotation by $R$-symmetry) which is $SU(d)$-invariant, and for which the translations $\dpa{\zbar_i}$ in anti-holomorphic directions are $Q$-exact.  This is a holomorphic twist.   The holomorphic twist is present even when there is no topological twist, and will contain more information about the physical theory than any topological twist will.  

We mention this because the twists of supergravity that we will be interested in will be of this more general kind.

\subsection{Twisted supergravity}
In supergravity, the local supersymmetry algebra is a gauge symmetry of the theory.  

In ordinary Einstein gravity, the fundamental field is a metric, and the diffeomorphisms play the role of gauge symmetries. The functional integral (morally speaking) is an integral over a space of metrics modulo gauge.  In the BV formalism, which is how we like to approach field theory, one models the quotient of the space of metrics by diffeomorphisms by introducing ghost fields corresponding to vector fields on the manifold.  These ghost fields are fermionic (i.e. anti-commuting); they have cohomological degree $-1$.

In supergravity, there are more fields and more symmetries.  The precise field content will not matter right now; however, it is important to understand the extra symmetries.  As well as diffeomorphism symmetries, there are extra \emph{fermionic} symmetries called local supersymmetries.  These are sections of a bundle of spinors (which bundle of spinors appears depends on the theory considered).  Like in ordinary gravity and in gauge theory, one should introduce ghost fields to model taking the quotient of the space of fields by these symmetries.  

The ghost field corresponding to an ordinary bosonic symmetry is fermionic. Similarly, the ghost corresponding to a fermionic symmetry is \emph{bosonic}.  For a supergravity theory on $\R^n$, where the local supersymmetries take value in some spin representation $S$ of $\op{Spin}(n,1)$, then the bosonic ghost fields are elements
$$q \in \cinfty(\R^{n}, S),$$
that is, $q$ is an $S$-valued smooth functions on $\R^{n}$. 

The main idea is that: 

\fbox{\parbox[c]{14cm}{Twisted supergravity is supergravity in a background where the bosonic ghost field takes some non-zero value}}

  To understand this better, however, we need to understand the equations of motion satisfied by this bosonic ghost field.  Suppose, for simplicity, we consider a supergravity configuration on $\R^{n}$ where the only fields that take a non-zero value are the metric $g$ and the bosonic ghost field $q$.  Then, this configuration satisfies the equations of motion if and only if 
\begin{enumerate}
\item $g$ satisfies the usual equations of motion of supergravity, that is, it is Ricci flat. 
\item The spinor $q$ is covariant constant for the metric given by $g$.
\item The spinor $q$ satisfies $\Gamma(q , q) = 0$, where $\Gamma : S \otimes S \to \R^n$ is the map defining the Lie bracket on the supersymmetry algebra. (Thus, the components of $\Gamma$ are $\Gamma$-matrices). 
\end{enumerate}
 
The fact that $g$ must be Ricci flat is immediate.  To see the second equation, note that the supergravity action functional, in the BV formalism, will have a term like 
$$\int \psi^\ast \nabla_g q $$ 
where $\psi^\ast$ is the anti-field to the gravitino, and $\nabla_g$ refers to the covariant derivative of the ghost field $q$ with respect to the metric $g$.  This term reflects the gauge symmetry of the gravitino of the form $\psi \mapsto \psi + \delta \nabla_g q$. Variation of this term with respect to $\psi^\ast$ will tell us that $\nabla_g q = 0$. 

There will also be a term of the form 
$$\int V^\ast  \Gamma(q,q)$$
where $V^\ast$ is the anti-field to the ghost for the  diffeomorphism gauge symmetry, and $\Gamma(q,q)$ is viewed as a vector field on $\R^{n}$ and thus as a ghost for the diffeomorphism symmetry. Varying this term with respect to $V^\ast$ leads to the equation that $\Gamma(q,q) = 0$.

\subsection{}
Recall that when we discuss quantum field theory on $\R^n$,  to define the functional integral, we need to not only specify the Lagrangian, but we also need to specify the behaviour of the fields at infinity.  Typically one requires that the fields converge towards some fixed constant solution to the equations of motion at infinity.  This is the choice of a vacuum of the theory. 

We can do the same thing in supergravity. We can consider the theory on flat space where at $\infty$ we require the fields to converge towards some fixed vacuum solution. 
\begin{definition}
Twisted supergravity on $\R^{n}$ is supergravity in the vacuum where the metric is the standard flat metric, and the bosonic ghost field take some non-zero value $q$ with $\Gamma(q,q) = 0$. 
\end{definition}
Note that this vacuum will not be Lorentz invariant, only translation invariant. If one considers supergravity in perturbation theory, the fields will be small fluctuations of this vacuum. 

Of course, one can consider twisted supergravity on other manifolds than $\R^{n}$.  
\begin{definition}
If $M$ is a manifold of dimension $n$, then a twisted background for supergravity on $M$ is a solution to the equations of motion where the bosonic ghost takes some non-zero value. 
\end{definition}
Such a solution to the equations of motion will consist of a metric on $M$, together with  some other bosonic fields (such as flux fields), which satisfy the usual equations of motion.  These fields must admit a generalized Killing spinor (i.e.\ be a BPS solution), and the value of the bosonic ghost is some generalized Killing spinor of square zero.

If we work in perturbation theory, then it makes sense to consider twisted supergravity on a compact manifold. This is just supergravity in perturbation theory around a twisted background. Non-perturbatively, however, this doesn't really make sense, as one must perform the path integral over all field  configurations.  One can, however, imagine finding an approximation to the full non-perturbative supergravity path integral by doing perturbation theory around each solution to the equations of motion and then performing a finite-dimensional integral over the space of solutions to the equations of motion.  If one does this, one sees that twisted supergravity backgrounds will contribute to the full supergravity path integral.  

\subsection{Analogy with the Higgs and Coulomb branches of gauge theories}
In this section we will explain an analogy that supergravity in a twisted vacuum is similar to a gauge theory on the Coulomb branch.  

Suppose we have a gauge theory on $\R^n$ with gauge group $G$, with some matter fields. The precise field content is not important for this discussion, but let us assume that the matter fields include a scalar field $\phi$ living in a representation $R$ of the group $G$.  

Suppose we are in a vacuum of the theory where the scalar field $\phi$ has some non-zero expectation value. Then, the theory is said to be on the Higgs branch. We should think of the field $\phi$ as an excitation of a constant field $\phi_0$ valued in the representation $R$ of $G$.  In this situation, the gauge symmetry of the theory is broken to the subgroup $\op{Stab}(\phi_0) \subset G$ of elements which fix the element $\phi_0 \in R$.  Often, this subgroup is trivial.

At the origin of the Higgs branch, the element $\phi_0 \in R$ takes value $0$. In this case, the gauge symmetry of the theory is not broken at all, and another branch of the moduli of vacua breaks off: the Coulomb branch. On the Coulomb branch, expectation values of operators constructed from the gauge field have some non-zero value.  

Something similar happens in supergravity. Suppose we have a supergravity theory in $d$ dimensions, coupled to some supersymmetric field theory.  Let us suppose that this supersymmetric theory has some (possibly disconnected) space of vacua, which contains a smaller space of supersymmetric vacua. Let's fist consider ``ordinary'' vacua of this coupled theory, where the bosonic ghost fields have zero expectation value.  A generic such vacuum will have no supersymmetry, just like a generic vacuum on the Higgs branch of a suitable gauge theory will have no gauge symmetry.   At supersymmetric vacua an extra branch of the moduli space of vacua breaks off, on which the bosonic ghosts have non-zero expectation value. This is completely analogous to the way the moduli of  vacua of a gauge theory can acquire a Coulomb branch connecting to the Higgs branch at vacua with non-trivial gauge symmetry.

\subsection{The relation between twisted supergravity and twisted supersymmetric field theories}
Any supersymmetric field theory in dimension $n$ with   $\mscr{N}$-extended supersymmetry can be coupled to the corresponding $\mscr{N}$-extended supergravity theory. In particular, if we treat the supergravity theory as a classical theory, we can consider the supersymmetric field theory in any supergravity background.  We are interested in what happens to a supersymmetric field theory placed in a twisted supergravity background, i.e. a background where the bosonic ghost field has non-zero value. 

To be concrete, let's consider our supersymmetric theory on a manifold $M$ in a supergravity background where the only non-zero fields are the metric and the bosonic ghost field $q$. We require that the bosonic ghost $q$ takes value $q = \Psi$, where $\Psi$  is a covariant-constant spinor-field $\Psi$ with $\Gamma(\Psi,\Psi) = 0$.   Let $Q_{\Psi}$ be the corresponding supersymmetry acting on the super-symmetric field theory.
\begin{lemma}
The supersymmetric field theory in this supergravity background is the same as the supersymmetric theory  twisted by the supercharge $Q_{\Psi}$.  (Recall that by twisting we mean that we add $Q_{\Psi}$ to the BRST operator $Q_{BRST}$). 
\end{lemma}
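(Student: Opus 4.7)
The plan is to work in the BV formalism, write the BRST differential of the full matter-plus-supergravity system acting on the matter fields, and then read off what happens when the bosonic ghost is expanded around the background value $q = \Psi$. The key input is the structure of the coupling of the matter theory to supergravity. Just as the pure supergravity action contains a term $\int \psi^{\ast} \nabla_g q$ that encodes the local supersymmetry transformation of the gravitino by the bosonic ghost, the coupled action contains a term of the schematic form
\begin{equation*}
S_{\mathrm{coup}} \supset \int \Phi^{\ast} \cdot \delta^{\mathrm{SUSY}}_{q} \Phi,
\end{equation*}
where $\Phi$ ranges over the matter fields, $\Phi^{\ast}$ are their antifields, and $\delta^{\mathrm{SUSY}}_{q}\Phi$ is the local supersymmetry transformation of $\Phi$ with fermionic parameter equal to the bosonic ghost $q$. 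This is simply the BV-theoretic way of saying that local supersymmetries are part of the gauge symmetry of the coupled theory.

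First I would isolate this coupling term in the combined BV action. Second, writing $q = \Psi + q'$ with $\Psi$ the fixed background (covariantly constant for the flat metric, with $\Gamma(\Psi,\Psi) = 0$) and $q'$ the fluctuation, the contribution of $S_{\mathrm{coup}}$ to the BRST variation of a matter field $\Phi$ becomes
\begin{equation*}
\{S_{\mathrm{coup}},\Phi\} = \delta^{\mathrm{SUSY}}_{\Psi}\Phi + \delta^{\mathrm{SUSY}}_{q'}\Phi = Q_{\Psi}\Phi + \delta^{\mathrm{SUSY}}_{q'}\Phi.
\end{equation*}
In perturbation theory around the chosen vacuum we treat the fluctuating supergravity fields (in particular $q'$) as dynamical fields of the full coupled theory. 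If we restrict attention to the effective operator on the matter sector alone, only the piece linear in the background remains, and the total BRST differential acting on matter fields reduces exactly to $Q_{\mathrm{BRST}}^{\mathrm{mat}} + Q_{\Psi}$, which is by definition the twist of the matter theory by $Q_{\Psi}$.

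Finally I would check self-consistency. Nilpotency of $Q_{\mathrm{BRST}}^{\mathrm{mat}} + Q_{\Psi}$ follows from two inputs: $Q_{\Psi}$ commutes with $Q_{\mathrm{BRST}}^{\mathrm{mat}}$ because it is a global symmetry of the untwisted matter theory, and $Q_{\Psi}^{2} = \tfrac{1}{2}\Gamma(\Psi,\Psi)\cdot\partial = 0$ by the background equations of motion for $\Psi$ (the same equations derived in the discussion preceding the lemma). These are precisely the conditions that single out admissible twisted supergravity backgrounds, so the algebraic compatibility is automatic once the background is a solution. The main obstacle is not algebraic but structural: one has to know that the coupling to supergravity produces a term of the form $\int \Phi^{\ast}\,\delta^{\mathrm{SUSY}}_{q}\Phi$ in the combined BV action. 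This is guaranteed whenever the coupled matter-supergravity system admits a BV formulation in which local supersymmetry is gauged together with diffeomorphisms, which holds for the theories of interest here; verifying this identification explicitly in any given model is where real work would be required, but the remainder of the argument is then essentially bookkeeping.
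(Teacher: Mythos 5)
Your proof is correct and takes essentially the same approach as the paper: both arguments locate the term in the coupled BV/BRST structure that implements local supersymmetry on matter fields with the bosonic ghost as parameter (you write it as the action-functional term $\int \Phi^{\ast}\,\delta^{\mathrm{SUSY}}_{q}\Phi$, the paper as the corresponding Chevalley--Eilenberg piece $\mathcal{O}\mapsto\sum f^{i}Q_{\Psi_i}\mathcal{O}$ of the BRST differential), and then specialize the ghost to the background value $\Psi$ to obtain $Q_\Psi$ added to the matter BRST operator. Your additional nilpotency check is a useful sanity check but does not change the route.
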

\begin{proof}
Let us consider the BRST operator of the theory coupling the supersymmetric field theory to the supergravity theory (where we consider both theories, for now, at the classical level).  In this coupled theory, we have gauged the action of local supersymmetry on both the supergravity theory and the supersymmetric field theory.  

It follows that one of the terms in this BRST operator must be the Chevalley-Eilenberg differential for the action of local supersymmetry on fields.  Let us write down this term in the BRST operator explicitly.  To fix notation, let us pick a basis of the (infinite-dimensional) space of local supersymmetries, say $\Psi_i$, and let $f^i$ be the corresponding dual basis of the dual vector space. We will view the functions $f^i$ as parametrizing the bosonic ghost-field. Then, the term in the BRST operator which comes from this Chevalley-Eilenberg differential, applied to a local operator $\Oo$ of the supersymmetric field theory,  is  of the form
$$
\Oo \mapsto \sum f^i Q_{\Psi_i}\Oo
$$ 
Here by $Q_{\Psi_i}$ we mean the action of supersymmetry corresponding to $\Psi_i$ on the operators of the gauge theory.

If the bosonic ghost field $q$ is  given a particular value $q =\Psi$, we find that we have added $\sum f^i(q) Q_{\Psi_i} =Q_{\Psi}$ to the BRST differential of the field theory. 
\end{proof}

Although we will not focus so much in this paper on the other aspect of twisting supersymmetric field theories, whereby one changes the spin if the fields using a homomorphism from $\op{Spin}(d)$ to the $R$-symmetry group, this can also be given a supergravity interpretation \cite{KarRoc88}.  For this to work, we need to use a model of supergravity in which the $R$ symmetry group is gauged.  Then, given a homomorphism $\rho: \op{Spin}(d) \to G_R$, one can construct a principal $G_R$ bundle with connection on any Riemannian $d$-manifold as the bundle associated to the frame bundle with its Levi-Civita connection, using the homomorphism $\rho$.  Since the $R$ symmetry group is gauged,  a Riemannian manifold equipped with this $G_R$ bundle defines a supergravity background.  Putting a supersymmetric field theory in a background of this form has the effect of changing the spin of the fields.  

\section{Summary of the rest of the paper}
The rest of the paper contains conjectural descriptions of certain twists of type IIA and type IIB supergravity theories, together with evidence for these conjectures.  We start by reviewing some features of BCOV theory, which we will need to formulate our conjectural descriptions of twisted supergravity theories.  Then, we discuss twisted type IIB supergravity. We conjecture that twisted type IIB supergravity on $\R^{10}$ is BCOV theory on $\C^5$. We further conjecture that the twist of the type IIB superstring is the $B$-model topological string theory on $\C^5$. 

For formal reasons, one expects to find the $Q$-cohomology of the $(2,0)$ supersymmetry algebra of type $IIB$ supergravity appearing in the fields of the twisted supergravity theory.  Here $Q$ is the supercharge which we use to twist.  We calculate this $Q$ cohomology algebra and find that it does indeed appear in the fields of BCOV theory on $\C^5$.  There are some subtle points here: the supersymmetry algebra of type IIB that we analyze is not simply the $(2,0)$ supersymmetry algebra in $10$ dimensions, but rather its central extension corresponding to the fundamental string.  We find that, for the commutator in the residual algebra of supersymmetries to hold in BCOV theory, it is necessary to take account of this central extension.

Then, we analyze the relation between $D$-branes in type IIB superstring theory and branes in the topological $B$-model on $\C^5$. We find that the holomorphic twist of the theory living on a $D_{2k-1}$ brane (i.e.\ of the maximally supersymmetric gauge theory in $2k$ dimensions) is the theory living on a topological $B$-brane on $\C^k \subset \C^5$, which is a version of holomorphic Chern-Simons.  Baulieu \cite{Bau10} proved this in the case that $k = 5$, and we derive the general case as a consquence of his result.  

As a further consistency check, we calculate that the residual supersymmetry that acts on the holomorphic twist of the theory on a $D_{2k-1}$ brane matches with part of the $Q$-cohomology of the $10$-dimensional $(2,0)$ supersymmetry algebra that appears in BCOV theory.

\subsection{Quantization}
The results of \cite{CosLi15} to show that BCOV theory (and so, conjecturally, twisted type IIB supergravity) can be perturbatively quantized on $\C^5$.  To have a self-contained story, we will explain a little about these results from the point of view expounded here.

The main theorem of \cite{CosLi15} concerns BCOV theory on $\C^5$ coupled to holomorphic Chern-Simons theory for the group $\mf{gl}(N \mid N)$.  The statement is that the coupled open-closed theory can be quantized in  a unique way when we work in a uniform way  in $N$. The proof has two parts: first, we verify that a certain one-loop anomaly vanishes. This is somewhat analogous to the Green-Schwarz anomaly cancellation.  Then, we construct the theory at all loops using a cohomological argument, whereby possible counter-terms from the open-string sector cancel precisely with those from the closed string sector.

Holomorphic Chern-Simons on $\C^5$ is a twist of $10$-dimensional supersymmetric gauge theor.  This supersymmetric gauge theory with gauge group $GL(N)$ is the theory living on a stack of $N$ $D9$ branes in type IIB.  It has recently been argued \cite{DijHeiJefVaf16}, that one can realize the theory with gauge group the super-group $GL(N \mid N)$ from a stack of $N$ $D9$ and $N$ anti-$D9$ branes. 

Thus, the main result of \cite{CosLi15} can be interpreted as saying that, after twisting, the theory coupling type IIB supergravity with $N$ $D9$ and $N$ anti-$D9$ branes has a unique quantization in perturbation theory.  

It is natural to ask whether a similar statement holds before twisting, and in particular whether the one-loop anomaly cancellation result of \cite{CosLi15} can be seen in the physical theory.   This would be an interesting type IIB analog of Green-Schwarz's classic result.  
\subsection{Type IIA}
We perform a similar analysis for type IIA. We consider an $SU(4)$ invariant twist of type IIA supergravity and superstring theory.  We conjecture that the twist of the string theory is a topological string theory on $\R^2 \times \C^4$ which is the topological $A$-model on $\R^2$ and the topological $B$-model on $\C^4$.  Similarly, we conjecture that the twist of type IIA supergravity theory is the closed string field theory constructed from this topological string (which we write down explicitly). 

We analyze the $Q$-cohomology of the $10$-dimensional $(1,1)$ supersymmetry algebra of type IIA. We find this supersymmetry algebra present in our conjectural description of twisted type IIA supergravity.  As in the case of type IIB, for this to work we need to use the central extension of the $(1,1)$ supersymmetry algebra in $10$ dimensions corresponding to the fundamental string. 

We then analyze the $D$-branes in type IIA in a manner parallel to our analysis of branes in type IIB.  A brane in the topological string theory which we conjecture is the twist of type IIA lives on a submanifold of $\R^2 \times \C^4$ of the form $\R \times \C^k$ where $\R \subset \R^2$ and $\C^k \subset \C^4$ are linear subspaces.  We prove that the theory living on this brane is a twist of the theory living on the $D_{2k}$ brane in the physical string theory.  We also show that the residual supersymmetries present in our twisted type IIA supergravity which preserve the brane are precisely the residual supersymmetries present in the twist of the $D_{2k}$-brane gauge theory.

One expects twists of  type IIA and type IIB superstring theory to be related by $T$-duality. We verify that our twist of type IIA on $\R \times S^1 \times \C^4$ is $T$-dual to our twist of type IIB on $\C^\times \times \C^4$.  We also verify that type IIA and type IIB supergravity theories become the same when reduced to $9$ dimensions.

\begin{comment}
\subsection{$11$-dimensional supergravity theory}
The final supergravity theory we consider is the maximal supergravity theory in $11$ dimensions.  We consider a twist of this theory which is invariant under the subgroup
$$
G_2 \times SU(2) \subset \op{Spin}(7) \times \op{Spin}(4) \subset \op{Spin}(11).
$$
We give a conjectural description of this twist, as an $11$-dimensional field theory which is topological in $7$ directions and holomorphic in $2$ complex directions.  

We verify that our conjectural description of $11$-dimensional supergravity becomes equivalent to a certain twist of type IIA supergravity when reduced along a circle.  

We also discuss the $M2$ and $M5$ branes, and give conjectural descriptions of the theories living on a single $M2$ or $M5$ brane.  We argue that the theory living on an $M2$ brane becomes, when reduced on a circle, the fundamental string of the topological string theory describing the twist of type IIA supergravity. 

\subsection{$NS5$ branes and applications}
The final section concerns $NS5$ branes and applications of our story.  
\end{comment}
\section{BCOV theory}
Our conjectural descriptions of twisted supergravity theories are mostly in terms of BCOV theory (also known as Kodaira-Spenser theory), whose definition (as formulated in \cite{CosLi12} generalizing that in \cite{BerCecOog94}) we briefly recall. We refer to \cite{CosLi12, CosLi15} for detailed discussion. 

 If $X$ is a Calabi-Yau of dimension $d$, let 
$$
\PV^{i,j}(X) = \Omega^{0,j}(X,\wedge^j TX)
$$ 
be the space of poly-vector fields of type $(i,j)$ on $X$. Contraction with the holomorphic volume form yields an isomorphism $\PV^{i,j}(X) \iso \Omega^{d-i,j}(X)$ and so to operators $\dbar, \partial$ on $\PV^{\ast,\ast}(X)$ which correspond via this isomorphism to the operators of the same name on $\Omega^{\ast,\ast}(X)$. Note that $\partial$ maps $\PV^{i,j}$ to $\PV^{i-1,j}$, which is the divergence operator with respect to the holomorphic volume form.  Further, $\oplus \PV^{i,j}(X)$ is a graded commutative algebra, with product being wedge product in the exterior algebra of the tangent bundle. When $X$ is non-compact, we will denote by 
$$
   \PV_c^{i,j}(X)\subset \PV^{i,j}(X), \quad \Omega^{i,j}_c(X)\subset \Omega^{i,j}(X)
$$
the subspace consisting of compactly supported elements. 

There is an integration map $\int: \PV^{d,d}_c(X) \to \C$ which comes from the isomorphisms 
$$\PV^{d,d}(X) \iso \Omega^{0,d}_c(X) \iso \Omega^{d,d}_c(X)$$ where the first isomorphism is given by contraction with the holomorphic volume form and the second by wedging with the holomorphic volume form.  We will extend this integration map to all of $\PV^{\ast,\ast}(X)$ by setting it to be zero on $\PV^{i,j}(X)$ if $(i,j) \neq (d,d)$.  

In the original formulation \cite{BerCecOog94}, the fields of Kodaira-Spencer theory is the subspace
$$
\op{Ker} \partial \subset \oplus \PV^{i,j}(X)[2].
$$
Here $[2]$ is the shift of cohomology degree by $2$ such that $\PV^{1,1}$ is of degree $0$. The action functional is the non-local functional
$$
\tfrac{1}{2} \int \alpha \dbar \partial^{-1} \alpha + \tfrac{1}{6} \int \alpha^3.
$$

The presence of a non-local action functional is a little unsatisfactory, as is the fact that the fields are subject to a constraint which does not come from the equations of motion. In our formulation \cite{CosLi12}, we present a different approach to the theory which does not have these defects.

We introduce a larger space of fields which form a complex locally resolving the space $\op{Ker} \partial$.  Our space of fields is
$$
\PV^{\ast,\ast}(X)\left\llbracket t\right\rrbracket[2]
$$
where $t$ is a formal variable of cohomological degree $2$.  This space of fields has a differential, corresponding to the linearized BRST operator, which is $\dbar + t \partial$.  

Rather than presenting an action functional with a non-local quadratic term, we will present the theory as a degenerate theory in the BV formalism. In the BV formalism, a field theory is specified by a dg manifold with an odd symplectic structure. Locally, one reconstructs the action functional from this data by representing the differential as the Hamiltonian vector field associated to a function; this function is the action functional. 

 A degenerate theory is specified by  a dg manifold with an odd Poisson structure.  Since not every Poisson vector field is Hamiltonian, a degenerate theory does not necessarily have an action functional.   We will find that our theory can be specified by a dg odd Poisson manifold where the linear term in the differential is not Hamiltonian, but the non-linear terms are.  We can alternatively describe the theory as being given by a linear odd dg Poisson manifold, with constant-coefficient Poisson tensor, together with a functional describing the interactions. This functional satisfies the classical master equation. Adding the Poisson bracket with this functional to the differential deforms the linear Poisson manifold to a non-linear one.

For us, the dg manifold describing free BCOV theory is the complex $\PV^{\ast,\ast}(X)\left\llbracket t\right\rrbracket[2]$ with differential $\dbar + t \partial$. The BV Poisson tensor is
\begin{align*} 
 (\partial \otimes 1) \delta_{Diag} \in \bigoplus_{\substack{i_1 + i_2 = d+1 \\j_1 + j_2 = d}} \Omega^{i_1,j_1}(X) \otimes \Omega^{i_2,j_2}(X)
 \end{align*}  
 or in terms of polyvector fields
 \begin{align*}
 (\partial \otimes 1) \delta_{Diag}  \in \bigoplus_{\substack{i_1 + i_2 = d-1 \\j_1 + j_2 = d}}t^0 \PV^{i_1,j_1}(X) \otimes t^0 \PV^{i_2,j_2}(X) 
\end{align*}
where $\delta_{Diag}$ refers to the $\delta$-current on the diagonal of $X \times X$, and we have used the isomorphism between forms and polyvector fields referred to above.  We are abusing notation in our use of the symbol $\otimes$ above: really we mean an appropriate completion which describes polyvector fields on $X \times X$ with distributional coefficients. The distribution is understood in the sense that for any $\alpha\in \PV^{i,j}(X), \beta \in \PV^{d+1-i,d-j}(X)$, 
$$
(\partial \otimes 1) \delta_{Diag}: (\alpha\otimes \beta)\to \int_X \alpha \partial \beta.
$$

In this way, $\PV(X)\left\llbracket t\right\rrbracket[2]$ acquires the structure of linear dg Poisson manifold where the Poisson tensor is odd. Note that the Poisson tensor lives in the subspace of fields which have no powers of $t$. 

For a theory which is degenerate in the BV formalism, and with a constant-coefficient odd Poisson tensor, the propagator $P$ with an infrared cutoff is constructed so that it provides a homotopy between the odd Poisson tensor and a different odd Poisson tensor which is smooth. That is, the propagator is an element 
$$
P \in \PV(X)[[t]] \what{\otimes} \PV(X)[[t]] 
$$ 
satisfying
$$
(\dbar + t \partial) P = (\partial \otimes 1) \delta_{Diag} + \text{ something smooth (the regularized Poisson kernel) }. 
$$
Here $\dbar + t \partial$ indicates the sum of the differentials from each copy of $\PV(X)[[t]]$, and $\what{\otimes}$ is an appropriate completed tensor product.  The propagator can always be chosen so that it has no dependence on $t$ in either factor.

Explicitly, on flat space, it is (up to constant factors) 
\begin{align*} 
P =&  \dbar^\ast_z \partial_z\frac{ (\partial_{z_1} - \partial_{w_1}) \dots (\partial_{z_n} -\partial_{w_n})(\d \zbar_1 -  \d \br{w}_1) \dots (\d \zbar_n - \d \br{w}_n)  )}{ \norm{z-w}^{2n-2}  } \\
=& \sum (-1)^i (-1)^{j+n} (\partial_{z_1} - \partial_{w_1}) \dots\what{(\partial_{z_i} - \partial_{w_i}) }\dots  (\partial_{z_n} -\partial_{w_n})\\
&\times (\d \zbar_1 -  \d \br{w}_1) \dots\what{(\d \zbar_j - \d \br{w}_j)} (\d \zbar_n - \d \br{w}_n)  )   \dpa{z_i} \dpa{z_j} \norm{z-w}^{2-2n}.   
\end{align*}
From this explicit expression one sees that the propagator has no dependence on the $t$ variables.  Although degenerate BV theories may seem exotic, one works with them by applying the usual Feynman rules built from this propagator and the interaction we will write down shortly. 

One might be tempted to throw away all the fields which involve $t$, because these fields do not propagate. It is not possible to do this in a consistent way, because the linearized BRST operator $\dbar + t \partial$ involves $t$.  What one can do, however, is restrict  attention to those fields which live in $t^k \PV^{i,\ast}(X)$ where $i+k \le d-1$.  Fields of this form are closed under the operator $\dbar + t \partial$, and contain all propagating fields.   We will refer to the version of BCOV theory which only includes these fields as \emph{minimal} BCOV theory. In our conjectural relationship between supergravity/superstring theory and BCOV theory, we expect that minimal BCOV theory should relate to supergravity, whereas the remaining fields of BCOV theory should correspond to additional closed string states that do not appear in supergravity.

In \cite{CosLi15} we described a classical interaction for our formulation of BCOV theory, which we now recall. Define functionals
$$
I_n : \PV^{\ast,\ast}_c(X)\left\llbracket t\right\rrbracket[2] \to \C
$$ 
as follows. If $\alpha \in \PV^{\ast,\ast}_c(X)\left\llbracket t\right\rrbracket[2]$, let $\alpha_k$ denote the coefficient of $t^k$. Then, we set
$$
I_n(\alpha) = \sum_{k_1, \dots, k_n \text{ with } \sum k_i = n-3}\frac{(n-3)!}{k_1 ! \dots k_n !} \int \alpha_{k_1} \wedge \dots \wedge \alpha_{k_n}. 
$$
We then define the interaction $I$ by saying that
$$
I(\alpha) = \sum_{n \ge 3} \tfrac{1}{n!} I_n(\alpha). 
$$
One can check \cite{CosLi15} that $I(\alpha)$ satisfies the classical master equation
$$
QI +{1\over 2}\{I, I\}=0.
$$ 
Here the differential $Q=\dbar + t \partial$ on the fields induces via duality a derivation on the functionals, which is the meaning of $QI$. $\{,\}$ is the odd Poisson bracket induced by the above odd Poisson tensor, which is well-defined for local functionals. 

This classical action can be quantized with the help of the technique of renormalization. The precise formulation is described in \cite{CosLi12} where $I$ is quantum corrected to satisfy an effective version of the quantum master equation. This can be viewed as a B-twisted topological version of the closed string field theory in the sense used by Sen and Zwiebach \cite{Zwi93,SenZwi94}.

The open-closed string field theory (in the sense of \cite{Zwi98}) for the topological $B$-model is obtained by coupling BCOV theory to holomorphic Chern-Simons theory.  Let us assume that $d = \op{dim} X$ is odd. Then, the fields of holomorphic Chern-Simons (in the BV formalism) are $\Omega^{0,\ast}(X)\otimes \g[1]$, where $\g$ is the Lie algebra of the gauge group.  

The main theorem of \cite{CosLi15} is the following.
\begin{theorem*}
On $\C^d$ for $d$ odd, there is a unique perturbative quantization of open-closed BCOV theory, where in the open sector the gauge Lie algebra is the super Lie algebra $\gl(N \mid N)$, and we require the quantization to be compatible with inclusions $\gl(N \mid N) \into \gl(N + k \mid N + k)$.
\end{theorem*}
The main ingredient in the proof is a remarkable cancellation between the cohomology  groups describing possible local counter terms between the open sector and the closed sector.   
Since our conjectural descriptions of twisted supergravity theories are all in terms of variants of BCOV theory, this theorem will allow us to produce quantizations. 

There is an important caveat which we should mention about this theorem. The construction of open-closed BCOV theory starts with free BCOV theory and interacting classical holomorphic Chern-Simons, and generates the full quantum theory (including interacting BCOV) by an obstruction theory argument.  In particular, a classical interaction for  BCOV is generated by this procedure: we refer to it as the \emph{dynamically-generated} classical interaction. In \cite{CosLi15} we conjecture, but do not prove, that the dynamically-generated classical interaction is the same as the one we wrote down above. We do, however, verify that they have the same cubic term $I_3(\alpha)$ (using the notation given above). 

 The quartic and higher terms in the action functional all involve descendent fields, so the distiction between the action functional we wrote down above and the dynamically-generated classical interaction is not important for most purposes.  In any case, in what follows, we will generally assume this conjecture when we make statements linking BCOV theory to supergravity. 
\section{Twisted type IIB supergravity}

Consider type IIB supergravity on a Calabi-Yau $5$-fold $X$, in the background where the metric is the Calabi-Yau metric and where other bosonic fields are set to zero. We would like to twist type IIB supergravity on $X$ by setting a bosonic ghost field to a non-zero value.

The local supersymmetries of type IIB supergravity are sections of a rank 32 bundle, which decomposes as a direct sum of two copies of the 16 dimensional bundle of spinors of positive chirality.  Thus, the bosonic ghost fields have $32$ components.  

Since $X$ is Calabi-Yau, there are two covariant constant spinors on $X$, one of each chirality.  Therefore, type IIB supergravity on $X$ has two independent Killing spinors $(Q,0)$ and $(0,Q)$, one in each copy of the 16 dimensional positive chirality spin bundle. We can thus consider a twist of type IIB supergravity on $X$ by setting the bosonic ghost field to have value $(Q,0)$. 
\begin{conjecture}
This twist of type IIB supergravity on a Calabi-Yau $5$-fold $X$ is equivalent to BCOV theory on $X$.
\end{conjecture}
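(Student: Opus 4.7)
The plan is to proceed in three stages: first, write down the BV complex of type IIB supergravity linearised around the background $(g_{CY}, q = (Q,0))$ and deform its BRST differential by $Q_\Psi$ as prescribed by the lemma of Section~2; second, identify the resulting $Q_\Psi$-cohomology with the BCOV complex $\PV^{*,*}(X)\llbracket t\rrbracket[2]$ equipped with differential $\dbar + t\partial$; third, check that the induced interaction matches the BCOV action $I$ written above.

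For the first stage, I would list the bosonic fields (metric, dilaton, NS-NS $B$-field, and RR potentials $C_0$, $C_2$, $C_4^+$) together with the gravitini, dilatini, and all associated ghosts and antifields, producing a large cochain complex with linearised BRST differential. Twisting replaces $\d_{BRST}$ by $\d_{BRST} + Q_\Psi$ for $\Psi = (Q,0)$. Since $Q$ is $SU(5)$-invariant, the global symmetry is broken from $\op{Spin}(10)$ to $SU(5)$, and the natural thing to do is decompose every field into $SU(5)$-irreducibles and then compute cohomology representation by representation.

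The second stage is a large but in principle mechanical cohomology computation. One expects the physical metric fluctuations, the dilaton, and the $B$-field to account for the $t^0$ piece $\PV^{*,*}(X)$, with Beltrami differentials in $\PV^{1,*}$ playing a distinguished role; the RR potentials with their Bianchi identities and gauge parameters should organise themselves into the higher $t$-descendants via the interplay between the $\partial$ operator on polyvectors and the self-duality constraint on $F_5$. A strong consistency check, which the paper performs in a later section, is to compare the residual $(2,0)$ supersymmetry algebra --- including its fundamental-string central extension --- on both sides, since this algebra fixes a great deal of the matching between fields.

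The third stage, matching interactions, can be approached by a bootstrap using the D-brane theories: Baulieu's theorem pins down the open-string sector as holomorphic Chern-Simons on $\C^5$, and open-closed consistency then severely constrains the closed-string vertices; combined with the residual supersymmetry fixing the cubic term $I_3$, this should identify $I$ up to the descendent subtleties flagged at the end of the BCOV section. The principal obstacle throughout is that type IIB supergravity has no manifestly covariant Lagrangian, owing to the self-dual five-form, so the full BV complex must either be treated in a non-covariant formulation (e.g.\ PST or light-cone) or extracted indirectly from the equations of motion and the on-shell supersymmetry algebra. Together with the sheer size of the BV complex, this is why a direct derivation is out of reach, and one must rely on the cross-checks in the rest of the paper --- residual supersymmetry algebra, D-brane spectra, and $T$-duality to the conjectural type IIA twist --- as the principal evidence for the conjecture rather than as steps of a complete proof.
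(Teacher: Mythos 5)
The statement is labelled a conjecture which the paper itself does not prove; the paper only amasses supporting evidence, and your proposal correctly recognises this. Your sketch of what a direct proof would require, the obstacle you identify (no covariant Lagrangian owing to the self-dual five-form), and the cross-checks you fall back on --- residual $(2,0)$ supersymmetry with its fundamental-string central extension, Baulieu's identification of the $D$-brane theories with holomorphic Chern--Simons, open-closed consistency, and $T$-duality to the conjectural type IIA twist --- match the paper's strategy, so you take essentially the same approach.
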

\begin{remark}
More precisely, we expect to find \emph{minimal} BCOV theory, as discussed above, which is the sector of BCOV theory where we remove as many as possible of the non-propagating descendent fields.  The remaining fields of BCOV theory should presumably arise as states in type IIB string theory that do not contribute to supergravity.  We will not, in general, discuss the distinction between the two variants of BCOV theory in future.

\end{remark}
In a similar way, if we work on flat space $X = \C^d$, we find that the space of $SU(5)$-invariant killing spinors is two dimensional, spanned by elements $(Q,0)$ and $(0,Q)$.  We can then twist by setting the ghost field to take value $(Q,0)$. The (perturbative\footnote{If we go beyond the perturbative supergravity approximation, for instance by considering the type IIB superstring, then the $R$-symmetry group is $SL(2,\Z)$ which does not act transitively}) $R$-symmetry Lie algebra $\mf{sl}(2,\C)$ acts transitively on the space of $SU(5)$-invariant killing spinors, so it doesn't matter which one we choose. 
\begin{conjecture}
The twist of type IIB supergravity on flat space $\R^{10}$ is minimal BCOV theory on $\C^5$.  
\end{conjecture}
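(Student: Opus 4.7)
The plan is to assemble a package of consistency checks rather than to construct a direct proof, because type IIB supergravity does not admit a complete off-shell BV formulation in which one could simply compute the cohomology of $d_{BRST} + Q_{\Psi}$. In the flat-space vacuum at hand the metric is Euclidean and the Killing spinor $Q$ is constant, so the twisted theory is translation-invariant and preserves the subgroup $SU(5) \subset \op{Spin}(10)$ stabilizing $Q$; both sides of the conjectured equivalence are therefore translation- and $SU(5)$-invariant field theories on $\C^5$, and the task is to match them as such at the levels of (i) linearized BRST complex, (ii) residual global symmetries, and (iii) interactions.

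First I would verify the underlying linearized BRST complex. Starting from the bosonic spectrum of type IIB (metric, dilaton, $B$-field, self-dual RR five-form and the lower RR potentials) together with the two Majorana-Weyl gravitini, their antifields, and the diffeomorphism and supersymmetry ghosts, I would linearize the equations of motion around the chosen vacuum and compute the cohomology of $d_{BRST} + Q_{\Psi}$ as an $SU(5)$-equivariant complex. The goal is to identify this cohomology with minimal BCOV theory, i.e.\ polyvector fields $t^k\PV^{i,j}(\C^5)$ with $i + k \le 4$ equipped with $\dbar + t\partial$. The term $\int \psi^\ast \nabla_g q$ of the twisted supergravity action exhibited earlier should produce $\dbar$ on the appropriate bosonic modes, while the operator $t\partial$ should emerge from the descendent RR tower coupled to $q$.

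Second I would match the residual supersymmetries. This means computing the $Q_{\Psi}$-cohomology of the $10$-dimensional $(2,0)$ super-Poincar\'e algebra, taking care to include the central extension corresponding to the fundamental string charge: as the excerpt emphasizes, without that extension the bracket of residual supercharges will not close onto translations in the way it must in BCOV theory. I would then exhibit an action of this graded Lie algebra on $\PV^{\ast,\ast}(\C^5)\llbracket t \rrbracket$ by symmetries of BCOV theory and verify that translations in the anti-holomorphic directions are indeed $Q$-exact. The $D$-brane story gives an independent cross-check: the residual symmetries preserving a linear $\C^k \subset \C^5$ must match those surviving on the twist of the maximally supersymmetric gauge theory on a $D(2k-1)$-brane, which by the Baulieu-based result of the paper is holomorphic Chern-Simons on $\C^k$.

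Third, at the interacting level, I would check that the cubic coupling $I_3(\alpha) = \int \alpha \wedge \alpha \wedge \alpha$ of BCOV is reproduced by the cubic vertex of twisted type IIB when restricted to the cohomological representatives fixed above; the higher $I_n$ involve descendent fields and are harder to access directly. The main obstacle is the very first step: the self-dual five-form of type IIB has no standard covariant Lagrangian and hence no standard BV complex, so even the linearized cohomology computation requires committing to a duality-symmetric formulation (such as the PST action or Sen's reformulation) and arguing that the twist is insensitive to this choice. This off-shell ambiguity, rather than the algebraic matching, is what blocks converting the package of checks into a theorem, and is the reason the authors present the equivalence as a conjecture supported by evidence rather than as a proved result.
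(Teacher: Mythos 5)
This statement is a conjecture, not a theorem, and the paper does not prove it; it assembles evidence, which is exactly the posture you adopt, and your three checks (linearized BRST matching, residual supersymmetry including the fundamental-string central extension, cubic vertex) track the paper's own evidence closely, as does your diagnosis that the lack of a covariant off-shell BV formulation for the self-dual RR five-form is what blocks a direct proof.

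One substantive piece of the paper's evidence is absent from your package, and it is the one the authors single out as the \emph{strongest direct match}: the universality argument via open-closed coupling. Baulieu's result identifies the twist of the $D9$-brane gauge theory with holomorphic Chern-Simons on $\C^5$; since type IIB supergravity couples to the $D9$ worldvolume theory via single-trace operators, twisted IIB must couple to holomorphic Chern-Simons; and by the results of \cite{CosLi15}, the fields of BCOV theory (with differential $\dbar + t\partial$) are precisely the complex of universal first-order single-trace deformations of the holomorphic Chern-Simons action, with solutions of the BCOV equations of motion corresponding to deformations satisfying the classical master equation. This produces a canonical map from whatever twisted type IIB supergravity is into BCOV theory without ever needing a BV model for the self-dual five-form, which is why it is more robust than your step (i). You would strengthen your proposal by leading with this argument rather than with the direct linearized-cohomology computation you yourself identify as blocked; your steps (ii) and (iii) then serve as sanity checks that the induced map respects symmetries and vertices.
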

If one believes this conjecture, then the theorem we stated above concerning quantization of BCOV theory implies that twisted type IIB supergravity admits a canonically-defined perturbative quantization.
 
We can also twist type IIB supergravity on $AdS_5 \times S^5$. Recall that the supersymmetries preserving this background form the Lie algebra $\mf{psl}(4 \mid 4)$. Let us view this as a subalgebra of the endomorphisms of a supervector space $\C^{4 \mid 4}$, where the $R$-symmetry $\mf{sl}(4)$ rotates the fermionic directions and the conformal symmetry $\mf{sl}(4)$ rotates the bosonic directions. Up to conjugation, there is a unique fermionic matrix $Q$ in $\mf{psl}(4 \mid 4)$ of rank $(0 \mid 1)$ (meaning that the image of this matrix is of dimension $(0 \mid 1)$.  Such a matrix is necessarily square zero.  We can thus twist type IIB supergravity on $AdS_5 \times S^5$ by setting the bosonic ghost field to take value $Q$. We have a conjectural description of this twist.   

Among the fields of BCOV theory is a closed $5$-form $F \in \Omega^{3,2}$, which we can also think of as a polyvector field of type $(2,2)$. Conjecturally, $F$ corresponds to some of the components of the Ramond-Ramond $5$-form in IIB theory.   In the AdS background, this RR $5$-form has a non-zero value. 
\begin{conjecture}
The twist of the AdS background of type IIB supergravity on 
$$
AdS_5 \times S_5 \simeq (\R^{10} \setminus \R^4,g)
$$
(where $g$ is a particular metric that blows up along the $\R^4$) is BCOV theory on $\C^5 \setminus \C^2$ where the $5$-form field $F \in \Omega^{3,2}$ takes value
$$
F =N \frac{3}{4 i \pi^3} \d z_1 \d z_2 \d z_3 r^{-6} \left( \zbar_1 \d \zbar_2 \d \zbar_3 - \zbar_2 \d \zbar_1 \d \zbar_3 + \zbar_3 \d \zbar_1 \d \zbar_2  \right). 
$$
Here  $\C^2=\{z_1=z_2=z_3=0\}$ and $r=\sqrt{|z_1|^2+|z_2|^2+|z_3|^2}$ is the radius along the direction normal to $\C^2$. 
\end{conjecture}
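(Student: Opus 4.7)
The plan is to deduce the statement from the preceding flat-space conjecture by realising $AdS_5 \times S^5$ as the near-horizon geometry of a stack of $N$ coincident $D3$-branes in $\R^{10}$. Under the holomorphic twist $\R^{10} \cong \C^5$ picked out by the chosen Killing spinor, a $D3$-brane is compatible with the twist only if it wraps a complex submanifold, and the unique choice compatible with $SO(4)$ rotation symmetry of the worldvolume is a stack wrapping $\C^2 \subset \C^5$ with the same identification $\{z_1 = z_2 = z_3 = 0\}$ appearing in the statement. Twisted IIB in the AdS vacuum should then correspond to BCOV on $\C^5$ equipped with the classical backreaction of these $N$ branes along $\C^2$; restricting to the complement $\C^5 \setminus \C^2$ gives precisely the setup of the conjecture, and what remains is to show that the sourced classical field is the $F$ written in the statement.

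To produce the solution explicitly, I would combine the brane discussion in this paper, in which a $D3$-brane wrapping a holomorphic $\C^2$ carries a variant of holomorphic Chern--Simons theory, with the natural $\delta_{\C^2}$-current coupling of that worldvolume theory to the BCOV field $F \in \PV^{2,2}$. The corresponding classical equation of motion is schematically
$$
(\dbar + t \partial) F = N \, \delta_{\C^2} \quad \text{in } \PV^{3,3}(\C^5),
$$
and among solutions translation-invariant along $\C^2$ and $SU(3)$-invariant in the transverse $\C^3$ the problem reduces to a rotation-invariant Green's-function equation on $\C^3 \setminus \{0\}$. The unique such solution is given by the Bochner--Martinelli kernel on $\C^3$, which tensored with the normal holomorphic $3$-form $dz_1\, dz_2\, dz_3$ produces exactly the radial profile $r^{-6}$, the antisymmetric combination $\bar{z}_1 d\bar{z}_2 d\bar{z}_3 - \bar{z}_2 d\bar{z}_1 d\bar{z}_3 + \bar{z}_3 d\bar{z}_1 d\bar{z}_2$, and the numerical coefficient $N \cdot 3/(4 i \pi^3)$ in the statement. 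One must then verify that the $t \partial$ piece of the BCOV differential contributes no additional obstruction, which follows because the proposed $F$ is of pure polyvector type and sits in the kernel of the divergence operator off the source.

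On the fermionic side, I would match the supercharge $Q$ by tracking the $SU(5)$-invariant flat-space Killing spinor of the previous conjecture through the near-horizon limit and identifying its image inside $\mf{psl}(4 \mid 4)$. The conditions that $Q$ be fermionic of rank $(0 \mid 1)$ and satisfy $Q^2 = 0$ determine it uniquely up to conjugation by the bosonic subgroup, so the claim reduces to checking that the limiting Killing spinor lands in this conjugacy class; this should follow from standard superconformal embedding arguments. The main obstacle is that the preceding flat-space conjecture is itself unproved, so a complete argument has to establish both in tandem. Even granting the flat-space conjecture, the hardest concrete task is to pin down the exact numerical coefficient of $F$, which depends on the relative normalisation of the cubic BCOV interaction $I_3$ verified in \cite{CosLi15} and the $D3$-brane source current; a subsidiary subtlety is to confirm that $F$ lives entirely in the minimal (propagating) sector of BCOV and carries no $t$-dependent descendent components, so that the AdS backreaction remains within the supergravity subsector.
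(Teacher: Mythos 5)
The statement is labelled a conjecture, so the paper does not prove it; it offers a heuristic motivation plus supporting evidence. Your sketch reproduces the heuristic motivation essentially verbatim: the near-horizon geometry of $N$ $D3$-branes wrapping $\C^2 \subset \C^5$, the magnetic source equation $\dbar F = N\,\delta_{\C^2}$ from the charges-for-$D$-branes discussion, the Bochner--Martinelli kernel on $\C^3 \setminus 0$ fixing the radial profile $r^{-6}$ and the antisymmetric $\bar z$-combination, and the check that $\partial F$ and $\{F,F\}$ vanish so the full Maurer--Cartan equation holds. On the normalisation point you flag as the hardest, the paper is more elementary: the coefficient $3/(4i\pi^3)$ is pinned down directly by $\int_{\sum|z_i|^2 = 1} F = 1$ plus Stokes' theorem, with no reference to the relative normalisation of $I_3$ against the brane current.

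What your proposal genuinely omits is the paper's main non-trivial piece of evidence for this particular conjecture: the explicit embedding of the residual symmetry algebra $\mf{psl}(3 \mid 3)$ as polyvector fields in $\PV(\C^5 \setminus \C^2)\left\llbracket t\right\rrbracket[1]$, closed under the \emph{deformed} differential $\dbar + t\partial + N\{F,-\}$. Showing that the $N = 0$ generators admit $N$-dependent corrections which remain closed, and that these still form $\mf{psl}(3 \mid 3)$ on cohomology, is the content of Proposition~\ref{prop-NF} and its proof in Appendix~\ref{appendix-B}; nothing in your proposal substitutes for this, so the evidence you assemble is strictly weaker than the paper's. Also a minor slip: $\dbar F$ for $F \in \PV^{2,2}$ lands in $\PV^{2,3}$ (equivalently $\Omega^{3,3}$), not $\PV^{3,3}$.
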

An abstract characterization of $F$ is the unique $5$-form on $\C^5 \setminus \C^2$ which extends to a $5$-form with tempered\footnote{Tempered means, roughly, polynomial decay at $\infty$} distributional coefficients on $\C^5$ such that 
$$
\dbar F =N \delta_{\C^2}
$$
where $\delta_{\C^2}$ is the de Rham current representing the fundamental class of $\C^2$. 

Thus, this conjecture states that the only part of the $\op{AdS} \times S^5$ that contributes to the twisted background is the $5$-form field.  

\subsection{Strings}
It is natural to guess that a stronger statement than the one we have made holds.
\begin{conjecture}
The twist of type IIB perturbative superstring theory is the topological $B$-model on $\C^5$.
\end{conjecture}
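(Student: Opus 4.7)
The plan is to realize the twist directly on the superstring worldsheet, and to identify it with a well-understood deformation of the worldsheet BRST operator. The natural arena is Berkovits' pure spinor formalism for type IIB: in this formalism the worldsheet pure spinor $\lambda^{\alpha}$ couples to the spacetime supersymmetry currents and plays exactly the role of a worldsheet avatar of the bosonic ghost for local supersymmetry. Giving $\lambda^{\alpha}$ a constant nonzero background value is therefore the natural worldsheet incarnation of the supergravity twist defined earlier in the paper. I would set $\lambda^{\alpha} = Q^{\alpha}$, where $Q$ is the $\op{SU}(5)$-invariant, square-zero supercharge used in the preceding supergravity conjecture, and expand the pure spinor worldsheet theory around this background.

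The main steps would be the following. First, write $\lambda = Q + \delta\lambda$ in the pure spinor action and BRST operator; the BRST operator acquires a zeroth-order piece proportional to $Q$ acting on worldsheet fields, which under the $\op{SU}(5)$ decomposition of the $\op{Spin}(10)$ spinor bundle is identified with $\dbar$ on $\C^{5}$. Second, pass to cohomology of this zeroth-order piece: a large portion of the pure spinor worldsheet fields becomes $Q$-exact, and the surviving fields should organize into a sigma model whose target is $\C^{5}$ with its antiholomorphic tangent bundle, precisely the field content of the topological $B$-model on $\C^{5}$. Third, match vertex operators and genus-zero amplitudes; the closed-string BRST cohomology should reproduce $\bigoplus_{i,j} \PV^{i,j}(\C^{5})$ with the expected grading, and the resulting closed string field theory should match the BCOV action of \cite{CosLi12, CosLi15}, which is consistent with the supergravity conjecture already stated. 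Fourth, as an independent consistency check, verify that D-branes map to B-branes: a $D_{2k-1}$-brane wrapping $\C^{k} \subset \C^{5}$ should reduce, after twisting, to a topological $B$-brane on $\C^{k}$, and the open-string sector on it should match holomorphic Chern-Simons on $\C^{k}$. For $k=5$ this is the theorem of Baulieu used elsewhere in the paper, and the general case would follow from the open-string analysis discussed in later sections.

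The hard step is the second one: rigorously carrying out the $Q$-cohomology reduction of the interacting worldsheet theory and showing that what survives is literally the $B$-model sigma model, and nothing more. Here one must either (a) perform an explicit minimal pure-spinor computation exploiting the very large amount of $Q$-exact structure in the type IIB worldsheet action, or (b) argue indirectly by pinning down the twisted topological string through enough invariants — closed-string cohomology, low-point genus-zero amplitudes, and the open-string data on the full collection of branes — to identify it uniquely with the $B$-model. Strategy (b) is essentially what the rest of the paper accomplishes as evidence for the conjecture; a complete proof requires completing strategy (a) and verifying the absence of any extra closed-string states beyond those of the $B$-model on $\C^{5}$.
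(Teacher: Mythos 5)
The paper does not prove this statement; it is stated explicitly as a conjecture (the abstract says ``Although we do not prove our conjectures, we amass considerable evidence''), so there is no internal proof to compare against. What the paper actually does is supply several independent consistency checks: it computes the residual $Q$-cohomology of the centrally extended $(2,0)$ supersymmetry algebra and exhibits it inside the fields of BCOV theory; it proves (leveraging Baulieu) that the holomorphic twist of the $D_{2k-1}$-brane gauge theory is holomorphic Chern--Simons on $\C^{k\mid 5-k}$; and it invokes the universal open-closed coupling argument identifying BCOV theory with the universal deformation of holomorphic Chern--Simons. That is precisely your ``strategy (b),'' and you are right that it is evidence rather than a proof.

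Your ``strategy (a)'' --- realizing the twist on the worldsheet in the pure spinor formalism by expanding about $\lambda = Q + \delta\lambda$ and taking $Q$-cohomology --- is a genuinely different approach that the paper does not pursue, and it is a reasonable one: the pure spinor constraint $\lambda\Gamma^m\lambda = 0$ is the worldsheet shadow of $\Gamma(Q,Q)=0$, and $\oint Q^\alpha d_\alpha$ does add the spacetime supercharge $Q$ to the BRST operator, which is the definition of twisting. However, you should note a conceptual mismatch with the paper's prescription: the paper defines the twisted string as the string in a \emph{Ramond--Ramond background} given by a covariant constant bosonic ghost, i.e.\ a deformation of the worldsheet theory by a fixed massless closed-string vertex operator (in pure-spinor language, a background field appearing at ghost number two, bilinear in $\lambda$ and $\tilde\lambda$). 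Giving $\lambda$ itself a constant expectation value is a deformation of the worldsheet path integral measure, not a priori the same thing; reconciling the two is a nontrivial part of what your step (1) must establish. Beyond that, as you acknowledge, the decisive step (2) --- rigorously reducing the interacting type IIB pure-spinor worldsheet theory to the $B$-model sigma model with no extra surviving closed-string states --- is left open, so your proposal is a plausible roadmap, not a proof. That is consistent with the paper's own status for this statement.
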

The twist of the string theory is defined in a similar way to the twist of the supergravity theory. The bosonic ghosts that we have seen in supergravity arise, in string theory, as bosonic Ramond-Ramond states.  Twisted string theory (on flat space) is then string theory in the RR background given by a covariant constant bosonic ghost. 

\subsection{Branes}
The topological $B$-model admits branes which live on holomorphic submanifolds. We conjecture that the image of a $D_{2k-1}$-brane in type IIB in our twisted theory are the $B$-branes in the topological $B$-model living on the submanifolds $\C^k \subset \C^5$.   As evidence for this, we use a result of Baulieu to prove the following. In the topological $B$-model, from each brane we can construct an open-string field theory which is a kind of holomorphic Chern-Simons theory.   
\begin{proposition*}
The holomorphic twist of the theory living on a $D_{2k-1}$ brane is the holomorphic Chern-Simons theory that lives on a $B$-brane on $\C^k \subset \C^5$. 
\end{proposition*}
Baulieu's result is the special case of this when $k = 5$.  We leverage Baulieu's result to prove the general case. 

\subsection{Residual supersymmetry}
 
Any twist of type IIB supergravity must have some residual supersymmetry consisting of the $Q$-cohomology of the original $(2,0)$ supersymmetry algebra (where $Q$ is the supercharge we use to twist). We find that BCOV theory has this residual supersymmetry.
 
As we have mentioned, we prove that the theory living on a topological $B$-brane on $\C^k$ is the holomorphic twist of the theory living on a physical $D_{2k-1}$-brane.  This theory has some residual supersymmetry given by the $Q$-cohomology of the original supersymmetry algebra of the theory. We calculate this residual supersymmetry algebra and its action on the twisted theory (which is the holomorphic Chern-Simons theory living on the topological $B$-brane). We find that residual supersymmetry is precisely the supersymmetries of BCOV theory which preserve the given brane.

\subsection{Open-closed maps as a check}
The strongest direct match between BCOV theory and twisted type IIB supergravity arises by thinking about open strings ending on $D9$ branes. Baulieu proved the open-string analog of our conjecture: he calculated the $SU(5)$-invariant twist of the $D9$ brane gauge theory, and showed that it is holomorphic Chern-Simons theory, which is the open-string field theory one would find for the space-filling brane in the topological $B$-model on $\C^5$.  
 
Now, in \cite{CosLi15} it was explained that fields of BCOV theory (with their linearized BRST differential) is quasi-isomorphic to the cochain complex of first-order local deformations of the holomorphic Chern-Simons action by single trace operators.  

One believes, on string theory grounds, that type IIB supergravity can be coupled to the $D9$ brane gauge theory by single trace operators. It follows that, whatever twisted type IIB supergravity is, it can be coupled to the twisted $D9$ brane gauge theory, that is, to holomorphic Chern-Simons.  

Since BCOV theory is the universal thing that can be coupled to holomorphic Chern-Simons theory, we find, by this abstract argument,  a map from the fields of twisted IIB supergravity,  to fields of BCOV theory.  This map can be pictured as  
$$
\boxed{\text{twisted type IIB supergravity}} \stackrel{\text{via coupling to HCS}}{\longrightarrow} \boxed{\text{BCOV theory=universal deformation of HCS}}.
$$

In fact, one can extend this argument to the beyond the linearized equations of motion. It was shown in \cite{CosLi15} that a field of BCOV theory satisfies the equations of motion if and only if the corresponding deformation of the holomorphic Chern-Simons action satisfies the classical master equation.  A field of twisted type IIB supergravity which satisfies the equations of motion will lead to such a deformation of holomorphic Chern-Simons,  and so to a field of BCOV theory satisfying the equations of motion.

This description of the relationship between BCOV and twisted supergravity is rather abstract.  To go from this description to a map between (say) BPS solutions of the  supergravity equations of motion and solutions of the equations of motion of BCOV theory is not easy. One would have to analyze the supersymmetric deformation of the $D9$ brane gauge theory that arises from the BPS supergravity solution, then see what this deformation does when twisted to become a deformation of holomorphic Chern-Simons theory, and finally match this deformation of holomorphic Chern-Simons theory to a solution to the BCOV equations of motion.  

Let us also mention some evidence that arises from the AdS/CFT correspondence, and leads to our conjectural description of the twist of type IIB supergravity on $AdS_5 \times S^5$.  We have seen that the theory living on the topological $B$-brane $\C^2 \subset \C^5$ is the holomorphic twist of $\mscr{N}=4$ supersymmetric gauge theory. Thus, the AdS dual of this should be the closed-string topological $B$-model on $\C^5 \setminus \C^2$, deformed in some way by the presence of the branes in the dual picture. It remains to check that the necessary deformation is the one we mentioned above, whereby one introduces a $5$-form flux into the fields of  BCOV theory. We will explain this point in detail elsewhere. 

\section{Residual supersymmetry in BCOV theory} \label{section-SUSY}
\subsection{Residual supersymmetry of twisted supergravity theories}
If BCOV theory is the holomorphic twist of type IIB supergravity, one expects to find some residual supersymmetry present in the fields of BCOV theory. Let us discuss what one expects to find for a general twisted supergravity theory, and then focus on BCOV theory. Suppose we have some BPS background for any supergravity theory. Let $\mf{g}$ denote the super-algebra of bosonic and fermionic symmetries of this supergravity background. Since we are working in Euclidean signature, we need to take $\mf{g}$ to be a complex super Lie algebra, because we may not have a reality condition for the spinors which form the supersymmetries.  Thus, the bosonic part of $\mf{g}$ will contain  the complexified Lie algebra of isometries of our supergravity background (which preserve whatever other bosonic fields take a non-zero value). For example, for flat space background of $\mscr{N}=1$ supergravity in $4$ dimensions, $\mf{g}$ will be the complexification of the $\mscr{N}=1$  super-Poinca\'e algebra\footnote{Or rather a central extension of this} , and for type IIB on $AdS_5 \times S^5$, $\mf{g}$ will be $\mf{psl}(4 \mid 4,\C)$.

Let us assume that our supergravity background has a generalized Killing spinor of square zero.  This will be an element $Q \in \mf{g}$ with $Q^2=\Gamma(Q, Q) = 0$.  Then, we can form a twisted supergravity background by setting the bosonic ghost to take value $Q$.  We expect that the $Q$-cohomology of $\mf{g}$ will be symmetries of the twisted background, and thus appear as ghosts.  We will refer to the $Q$-cohomology of $\mf{g}$ as the \emph{residual} supersymmetry of the twisted background.  In each of the examples of supergravity we study, we will explain how the residual supersymmetry algebra appears as ghosts in our conjectural description of the twisted supergravity theory.

Some care is required here, however.  From abstract principles, all we know is that there will be a Lie algebra homomorphism (or maybe $L_{\infty}$ map) from $\mf{g}$ to the ghosts of the twisted supergravity theory. It could be that the twisted background could have more symmetries, and in practise this often happens. For example, the following can hold. Suppose we have a supergravity background which is a manifold $M$ where the only non-zero bosonic field is the metric $g$, and for which there is a square-zero Killing spinor $Q$.  Suppose that $V$ is a vector field on $M$ such that the Lie derivative $\mc{L}_V g$ is not zero, but is of the form
$$
\mc{L}_V g = Q (\Psi)
$$ 
where $\Psi$ is the gravitino.  Then, although $V$ may not be an isometry, it is modulo $Q$-exact terms and so will become a symmetry of the twisted background.

In a similar way, it is not at all clear that the map from the $Q$-cohomology of $\mf{g}$ to the symmetries of the twisted background will be injective. It could happen that an element of $\mf{g}$ is made $Q$-exact by some vector field which is an isometry modulo $Q$-exact terms.   

Another subtle issue, which we will deal with carefully, relates to extensions of the supersymmetry algebra. It is well known \cite{Tow95} that the $10$-dimensional $(2,0)$ supertranslation algebra admits central extensions corresponding to the $D$-branes, the $NS5$ brane and the fundamental string.   The residual supersymmetry we find will be the $Q$-cohomology not just of the $(2,0)$ algebra, but of its central extension corresponding to the fundamental string.  We argue that this is precisely what one expects from string theory considerations.   

\subsection{Supersymmetry algebras in $10$ dimensions}
Let us first recall the definition of the supersymmetry algebras in $10$ dimensions.  We will always work with complex spin representations, because we work in Euclidean signature. Correspondingly, our Lie algebras of rotations and translations need to be complexified.

The Lie algebra $\mf{so}(10,\C)$ admits two irreducible spin representations $S_+$ and $S_-$. We let $V= \C^{10}$ denote the vector representation.  There are also representations corresponding to exterior powers of $V$. We will often use the notation $\Omega^{i}_{const}(\R^{10})$ to indicate the space of constant complex coefficient $i$-forms on $\R^{10}$, as a representation of $\mf{so}(10,\C)$.  We will also use the notation $\Omega^{i,j}_{const}(\C^5)$ to denote constant-coefficient $(i,j)$ forms, which are a representation of $\mf{sl}(5,\C) \subset \mf{so}(10,\C)$.   

We are also interested in self-dual $5$-forms.  In Euclidean signature, $\ast^2 = -1$ on $\Omega^5(\R^{10})$, so the self-duality condition doesn't make sense. (Here we are working with forms with complex coefficients and using the complex-linear extension of the Hodge star operator). Instead, we can look for forms which are in the $+i$ eigenspace of $\ast$. We will refer to such forms as $\Omega^5_+$, and refer to such forms with constant-coefficients as $\Omega^{5}_{+,const}$. (At some stage, we will probably use the term self-dual form out of habit).  We can write the self-duality condition in complex language by saying that
$$
\Omega^{5}_+ = \Omega^{5,0} \oplus \Omega^{3,2} \oplus \Omega^{1,4}. 
$$

We have the following isomorphisms of representations of $\mf{so}(10,\C)$:
\begin{align*} 
 \Sym^2 S_+ & \iso V \oplus \Omega^{5}_{+,const}\\
 \wedge^2 S_+ & \iso \Omega^{3}_{const} \\
 S_+ \otimes S_- & \iso \Omega^{0}_{const} \oplus \Omega^{2}_{const} \oplus \Omega^{4}_{const}\\ 
\Sym^2 S_- & \iso V \oplus \Omega^{5}_{-,const}. 
\end{align*}
It is, of course, no coincidence that the Ramond-Ramond field strengths of type IIB supergravity appear as the decomposition of $S_+ \otimes S_+$ into irreducible representations, whereas those of type IIA appear in the decomposition of $S_+ \otimes S_-$. 

We will refer to the various maps from tensor powers of $S_+$ and $S_-$ to spaces of forms using notation like
\begin{align*} 
 \Gamma = \Gamma_{\Omega^1} :  S_+ \otimes S_+ & \to V = \Omega^1_{const} \\
\Gamma_{\Omega^5_+} : S_+ \otimes S_+ & \to \Omega^5_{+,const}.  
\end{align*}
In the case that the map lands in the vector representation, we will often just use the notation $\Gamma$ instead of $\Gamma_{\Omega^1}$.  Of course, these maps of $\op{so}(10,\C)$ representations are the representation-theoretic interpretation of the $\Gamma$-matrices commonly used in the physics literature. 

We can decompose $S_+$ and $S_-$ into irreducible representations of $\mf{sl}(5)$ as follows:
\begin{align*} 
 S_+ &\iso \Omega^{0,ev}_{const} \iso \Omega^{odd,0}_{const} \\
S_- & \iso \Omega^{0,odd}_{const} \iso \Omega^{ev,0}_{const}. 
\end{align*}

To describe the $(2,0)$ supersymmetry algebra, fix a copy of $\C^2$ with a non-degenerate symmetric inner product, and fix an orthonormal basis $e_1,e_2$ of $\C^2$. This $\C^2$ is acted on by the group $SO(2,\C)_R$, which is part\footnote{This part of the $R$-symmetry group is the part that is easy to see. The full $R$-symmetry Lie algebra is $\mf{sl}(2,\C)$ when we work in perturbation theory. Non-perturbatively, integrality conditions on the fields break this to $SL(2,\Z)$.  } of the $R$-symmetry group of the theory. Then,
$$
\mc{T}^{2,0} = V \oplus \Pi (S_+ \otimes \C^2).
$$
where the vectors $V$ are central and the Lie bracket on the spinors is obtained by combining the map $\Gamma : \Sym^2 S_+ \to V$ with the inner product on $\C^2$.  $\Pi$ is the parity changing symbol.

In a similar way, the $(1,1)$ supertranslation algebra is
$$
\mc{T}^{1,1} = V \oplus \Pi S_+ \oplus \Pi S_-
$$
where the only non-zero commutators are the maps $\Gamma : \Sym^2 S_+ \to V$ and $\Gamma : \Sym^2 S_- \to V$.  

The super-Poincar\'e algebra is defined as the semi-direct product of the super-translation Lie algebras with $\mf{so}(10,\C)$.

\subsection{Central extensions of supersymmetry algebras}
We will consider two different languages for desecribing the central extensions of supersymmetric algebras. One is quite classical (see \cite{Tow95, deW02} and references therein) whereby the super-translation Lie algebra is centrally extended by vector spaces of constant-coefficient forms.  Another point of view on this has been presented by Baez-Huerta \cite{BaeHue11}, Fiorenza-Sati-Schreiber \cite{FioSatSch15} and Sati-Schreiber-Stasheff \cite{SatSchSta09}, following Castellani, d'Auria and Fr\'e \cite{CasdAuFre91}.  These authors construct one-dimensional central extensions of the super-Poincar\'e algebras, but as $L_\infty$ algebras instead of as ordinary Lie algebras. We will translate between these two points of view.  

Let us first discuss the more classical way of looking at the central extensions. We will focus on the case of type IIB supergravity, with $(2,0)$ supersymmetry.   As we have seen above, the tensor square of $S_+$ decomposes as 
$$
S_+ \otimes S_+ = \Omega^{1}_{const} \oplus \Omega^{3}_{const} \oplus \Omega^{5}_{+,const}. $$
The first and last summand are symmetric, where as the middle summand $\Omega^{3}_{const}$ is anti-symmetric.  

For any symmetric invariant pairing $\omega$ on $\C^2$, we can define a $1$-form valued central extension of the $(2,0)$ supertranslation algebra as follows. The extension will be of the form $\mc{T}^{(2,0)} \oplus \Omega^{1}_{const}$.  The extra term in the commutator, which lands in $\Omega^1_{const}$, will be 
$$
[\psi_1 \otimes e_i, \psi_2 \otimes e_j] = \Gamma_{\Omega^1}(\psi_1 \otimes \psi_2) \omega(e_i,e_j)
$$
where $\psi_i \in S_+$ and $e_i$ are a basis (as before) of $\C^2$.  Of course, this formula mimics the definition of the supertranslation algebra $\mc{T}^{(2,0)}$.  

Similarly, given $\omega$ as above which is symmetric, we can define an $\Omega^{5}_{+,const}$-valued central extension of $\mc{T}^{(2,0)}$ by declaring that the extra term in the commutator, valued in $\Omega^{5}_{+}$, is 
$$
[\psi_1 \otimes e_i, \psi_2 \otimes e_j] = \Gamma_{\Omega^{5}_+} (\psi_1 \otimes \psi_2) \omega(e_i,e_j).
$$ 
Finally, given an anti-symmetric pairing $\eta$ on $\C^2$, we can define an $\Omega^3_{const}$-valued central extension where the extra term in the commutator is
$$
[\psi_1 \otimes e_i, \psi_2 \otimes e_j] = \Gamma_{\Omega^3} (\psi_1 \otimes \psi_2) \eta(e_i,e_j). 
$$

Let us now discuss the meaning of these central extensions in terms of branes and strings. Let us start with the case of the vector-valued central extensions, which correspond to two-dimensional extended objects (strings or $D1$ branes).  The space of symmetric pairings on $\C^2$ is three dimensional, however, in defining the supertranslation algebra $\mc{T}^{(2,0)}$ we have already used a symmetric pairing for which the basis $e_1,e_2$ is orthonormal.  The cocycle giving the central extension corresponding to this invariant pairing is exact, so the central extension is equivalent to a trivial one. There is a two-dimensional space of non-trivial central extensions, corresponding to the two remaining symmetric pairings on $\C^2$. We can think of the two linearly independent central extensions as being related to the $D1$ brane and to the fundamental string.\footnote{It is somewhat arbitrary what is the fundamental string and what is the $D1$ brane, as these are related by the $S$-duality group $SL(2,\Z)$.}.   It is convenient to take the central extension defined by the pairing $\omega(e_i,e_i) = 0$ and $\omega(e_1,e_2) = 1$ and think of this as being related to the fundamental string.

Every element $\psi \otimes e_i$ in the supertranslation Lie algebra will induce a Noether current on the fundamental string mapping to $\R^{10}$.  We will denote this by $\mscr{C}(\psi \otimes e_i)$.   If the string was compatible with the supertranslation algebra $\mc{T}^{(2,0)}$ we would expect that the commutator of these Noether currents would be the current for translation on space-time, that is
$$
[\mscr{C}(\psi_1 \otimes e_i), \mscr{C}(\psi_2 \otimes e_j)] = \mscr{C} ( \Gamma(\psi_1 \otimes \psi_2) \delta_{ij} )
$$ 
where for a vector $v \in V$ we use the notation $\mscr{C}(v)$ to indicate the current on the string coming from translation in the direction $v$. 

However, we find an extra term in the commutator of these currents corresponding to the central extension.  We find that 
$$
[\mscr{C}(\psi_1 \otimes e_i), \mscr{C}(\psi_2 \otimes e_j)] = \mscr{C}(\Gamma(\psi_1 \otimes \psi_2) \delta_{ij} ) + \int \Gamma_{\Omega^1}(\psi_1 \otimes \psi_2) \omega(e_i,e_j). 
$$
The second term on the right hand side involves the integral of the constant-coefficient $1$-form $\Gamma_{\Omega^1}(\psi_1 \otimes \psi_2)$ over a codimension $1$ submanifold in the string worldsheet.  We can take the string worldsheet to be the cylinder $\R \times S^1$ for this discussion, where the Noether currents and this $1$-form are integrated over a circle $t \times S^1$. Because the Noether currents are conserved, and this $1$-form is closed, it doesn't matter what value of $t$ we choose.

In a similar way, the central extensions valued in $3$-forms describe an extra term in the commutator of the Noether currents for the supertranslation algebra on a $D3$ brane.  The extension valued in $\Omega^5_{+,const}$ plays the same role for either the $D5$ brane or the $NS5$ brane (or more generally some $(p,q)$ $5$-brane), depending on the choice of symmetric pairing on $\C^2$.  

Note that we are being very cavalier about integrality conditions: only certain special values of the symmetric pairing will correspond to an extended object. 

\subsection{The $L_\infty$ formulation of the central extensions}
Now let us explain how  to formulate similar central extensions in the language of $L_\infty$ algebras, and explain how the two formulations are related.  

Let us discuss, for concreteness, the central extension corresponding to the fundamental string.  Given an element $\psi_1,\psi_2 \in S_+$, we can pair the one-form $\Gamma_{\Omega^1}(\psi_1 \otimes \psi_2) $ with a vector $v \in V$ to get a tri-linear function on the supertranslation algebra
$$
\mu_1 ( \psi_1 \otimes e_i,\psi_2 \otimes e_j,v) = \Gamma_{\Omega^1}(\psi_1 \otimes \psi_2) (v) \omega(e_i,e_j) 
$$
where as above $\omega$ is a symmetric pairing on $\C^2$.  This trilinear functional is symmetric on the spinors, and so can be viewed as an element of the Chevalley-Eilenberg cochain complex $C^3( \mc{T}^{(2,0)})$.  In the same way, we can use e.g.\ the central extension valued in $\Omega^5_+$ to built an element of $C^7(\mc{T}(2,0))$.  

Sati, Schreiber and Stasheff \cite{SatSchSta09} have verified that these Chevalley-Eilenberg cochains are closed but not exact. One can interpret closed elements of $C^k(\mc{T}^{(2,0)})$ as giving one-dimensional $L_\infty$ central extensions of the Lie algebra $\mc{T}^{(2,0)}$ by a one-dimensional vector space $\C[k-2]$ in degree $2-k$.  The only extra bracket in this $L_\infty$ central extension is the map
$$
l_k  : (\mc{T}^{(2,0)})^{\otimes k} \to \C[k-2]
$$ 
given by the cochain in $C^k(\mc{T}^{(2,0)})$. The statement that our cochain is closed is equivalent to the $L_\infty$ axiom.

In this way, Sati, Schreiber and Stasheff construct $L_\infty$ central extensions of the supertranslation algebra $\mc{T}^{(2,0)}$, from the same data that we used to describe the form-valued central extension. 

Let us explain how these two constructions are related.  To define a current on manifolds of dimension $k+1$ mapping to $\R^{10}$, we need a closed $k$-form on $\R^{10}$.  The central extensions we have discussed are valued in constant coefficient $k$-forms, but we can view them as being valued in closed $k$-forms. Thus, for the $(2,0)$ supersymmetry algebra, we have cocycles
\begin{align*} 
 C_{Dk}& \in C^{2} (\mc{T}^{(2,0)}, \Omega^k_{closed}(\R^{10}) )\\ 
 C_{FS} & \in C^{2} (\mc{T}^{(2,0)}, \Omega^1_{closed}(\R^{10}) )\\
C_{NS5} & \in C^{2} (\mc{T}^{(2,0)}, \Omega^5_{closed}(\R^{10}) ),
\end{align*}
corresponding to $D_k$ branes where $k = 1,3,5$, the fundamental string, and the $NS5$ brane. 

Let us focus again on the central extension corresponding to the fundamental string.  Since our space-time is flat, one could argue that this central extension does not play a role. As, for spinors $\psi_i$, the constant one-form $\Gamma_{\Omega^1}(\psi_1\otimes \psi_2)$ is exact.  It's made exact by a linear functional $\til{\Gamma}(\psi_1 \otimes \psi_2)$. Therefore the integral of Noether current on the string coming from this one-form will be zero. 

However, things are a little more subtle: the linear function $\til{\Gamma}(\psi_1\otimes \psi_2)$ making this $1$-form exact is not translation invariant.  The failure of it to be translation invariant is what leads to the $L_\infty$ central extension.   

Let us discuss explain this more formally. Instead of considering the central extensions of the supertranslation algebra as being valued in forms, we will view them as being valued in the whole de Rham  complex with a shift. For example, we will think of the central extension corresponding to the fundamental string as being valued in $\Omega^\ast(\R^{10})[1]$, where the $[1]$ indicates that we have shifted the de Rham complex so that $1$-forms are in degree $0$.  

This is reasonable, because the integral of the  Noether current on the string associated to the $1$-form valued central extension only depends on the cohomology class of the $1$-form.

 For any Lie algebra, one can consider Chevalley-Eilenberg cochains valued in some cochain complex of representations. The space of $k$-forms on $\R^{10}$ is a representation of $\mc{T}^{(2,0)}$, where the bosonic part (consisting of translations) acts by Lie derivative and the fermionic part acts by zero.   In particular, we can form $C^\ast(\mc{T}^{(2,0)}, \Omega^\ast(\R^{10}))$.  This is the total complex of the double complex
$$
C^\ast(\mc{T}^{(2,0)}, \Omega^0(\R^{10})) \xto{\d_{dR}} C^\ast(\mc{T}^{(2,0)}, \Omega^1(\R^{10}) \xto{\d_{dR}} \dots
$$ 
where the horizontal differential is the de Rham differential and the vertical one is the Chevalley-Eilenberg differential.  The degree in the total  complex is the sum of the de Rham degree and the Chevalley-Eilenberg degree.

There is a natural cochain map
$$
C^2(\mc{T}^{(2,0)}, \Omega^{k}_{closed}) \to C^{2+k}( \mc{T}^{(2,0)}, \Omega^\ast(\R^{10}) )
$$
coming from the map from closed $k$-forms to  the de Rham complex.

In this way, the element of  $C^2(\mc{T}^{(2,0)}, \Omega^1_{closed})$ corresonding to the fundamental string can be viewed as a cocycle in $C^3(\mc{T}^{(2,0)}, \Omega^\ast(\R^{10}))$.

Now, the de Rham complex of $\R^{10}$ is a resolution of the vector space $\C$.  Therefore there is a quasi-isomorphism\footnote{Meaning a cochain map that induces an isomorphism on cohomology} 
$$
C^\ast(\mc{T}^{(2,0)},\C) \to C^\ast(\mc{T}^{(2,0)}, \Omega^\ast(\R^{10}) ). 
$$
In this way, the form-valued cocycles $C_{FS}, C_{Dk}, C_{NS5}$ are cohomologous to elements 
\begin{align*} 
\til{C}_{Dk} &\in  C^{k+2}(\mc{T}^{(2,0)},\C)\\ 
\til{C}_{FS} &\in  C^{3}(\mc{T}^{(2,0)},\C)\\
 \til{C}_{NS5} &\in  C^{7}(\mc{T}^{(2,0)},\C). 
\end{align*} 
These elements are precisely the cocycles considered by Sati-Schreiber-Stasheff, Baez-Huerta, and d'Auria-Fr\'e, defining the $L_\infty$ central extensions.

\subsection{$Q$-cohomology of the centrally extended supersymmetry algebra}
The  above argument suggests that the ``correct'' local supersymmetry algebra of the type IIB superstring is built from the central extension of $\mc{T}^{(2,0)}$ by the de Rham complex corresponding to the fundamental string, together with the Lorentz algebra $\mf{so}(10,\C)$. Following \cite{FioSatSch15} let us denote the semi-direct product of this $\Omega^\ast(\R^{10})[1]$-valued central extension with $\mf{so}(10,\C)$ by $\mf{superstring}_{IIB}$.  Thus, as a graded super-vector space, we have
$$
\mf{superstring}_{IIB} = V \oplus \mf{so}(10,\C) \oplus \Pi ( S_+ \otimes \C^2) \oplus \Omega^\ast(\R^{10})[1]. 
$$ 
Note that we view this as a $\Z \times \Z/2$ graded algebra.  Recall also that $\C^2$ has a  basis $e_1,e_2$, the ordinary commutator of two supercharges involves the inner product $\ip{e_i,e_j} = \delta_{ij}$ on $\C^2$, and the central extension involves the inner product $\omega$ on $\C^2$ defined by $\omega(e_i,e_i) = 0$ and $\omega(e_1,e_2) = 1$.

Fix $\Psi \in S_+$ such that $\Gamma(\Psi,\Psi) = 0 \in V$ and consider the supercharge 
$$Q = \Psi \otimes e_1 \in \mf{superstring}_{IIB}.$$ 
Then $Q^2 = 0$, and we can consider the cohomology of $\mf{superstring}_{IIB}$ with respect to $Q$, which is the residual supersymmetry of our twisted IIB supergravity theory.  We will calculate this cohomology, and find that it appears in BCOV theory. We will denote this cohomology by $\mf{superstring}_{IIB}^{Q}$. 

Let us first introduce some additional notation.  The image $[Q,-]$ inside $V$ is a $5$-complex dimensional space which we call $V^{0,1}$. Further, $V^{0,1}$ is isotropic with respect to the inner product on $V$. Since $V$ is the complexification of a real vector space $V_{\R} = \R^{10}$, we can take the complex conjugate of $V^{0,1}$. Let us denote this complex conjugate by $V^{1,0}$. Then we have
$$
V = V^{1,0} \oplus V^{0,1}. 
$$
In particular, $Q$ induces a complex structure on $V_{\R} = \R^{10}$.  
 
The stabilizer of $Q$ inside $\mf{so}(10,\C)$ is a parabolic Lie algebra $\op{Stab}(Q) \subset \mf{so}(10,\C)$ whose Levi factor is a copy of $\mf{sl}(5,\C) \subset \mf{so}(10,\C)$.  
\begin{lemma}
The $Q$-cohomology $\mf{superstring}_{IIB}^{Q}$ of $\mf{superstring}_{IIB}$ is the super Lie algebra whose underlying graded vector space is
$$
V^{1,0} \oplus \op{Stab}(Q) \oplus \Pi \left( S_+ \otimes e_2\right) \oplus \pi \C \cdot c 
$$
where the element $c$ is central.  The Lie bracket is as follows. First, this Lie algebra is a semi-direct product of $\op{Stab}(Q)$ with $V^{1,0} \oplus \pi \left( S_+ \otimes e_2\right) \oplus \pi \C \cdot c$.  The commutator of two elements
$$
\psi_1,\psi_2 \in S_+ = S_+ \otimes e_2
$$ is
$$
[\psi_1,\psi_2] = \pi_{V^{1,0}} \Gamma (\psi_1 \otimes \psi_2)
$$ 
where $\pi_{V^{1,0}}$ is the projection from $V$ to $V^{1,0}$. 

Finally, the commutator of an element $v \in V^{1,0}$ with $\psi \in S_+$ is
$$
[v,\psi] = c \ip{\Gamma(\Psi \otimes \psi), v}_{V}
$$
where recall that $\Psi \in S_+$ is such that $Q = \Psi \otimes e_1$. 
\end{lemma}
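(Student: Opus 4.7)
The plan is to decompose $d_Q = [Q,-]$ (combined with the internal de Rham differential on $\Omega^\ast(\R^{10})[1]$) into its action on each summand and compute the cohomology by inspection. Explicitly, $d_Q$ acts as zero on $V$; as $X \mapsto -X\cdot\Psi \otimes e_1$ on $\mf{so}(10,\C)$; as $\psi \mapsto \Gamma(\Psi,\psi) \in V$ on $S_+ \otimes e_1$, because $\omega(e_1,e_1) = 0$ kills the extension term; as $\psi \mapsto \Gamma(\Psi,\psi) \in \Omega^1_{const} \subset \Omega^1$ on $S_+ \otimes e_2$, now using $\omega(e_1,e_2)=1$; and as the de Rham differential on $\Omega^\ast(\R^{10})$. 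No summand receives contributions from both $[Q,-]$ and the de Rham differential, so the complex splits into a ``classical'' sub-complex $\mf{so}(10,\C) \to S_+ \otimes e_1 \to V$ and an ``extended'' sub-complex $S_+ \otimes e_2 \to \Omega^\ast(\R^{10})[1]$.

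For the classical sub-complex I would work in the $SU(5)$-decomposition $S_+ \iso \Omega^{0,ev}(\C^5)$ in which $\Psi$ corresponds to $1 \in \Omega^{0,0}$. A direct representation-theoretic analysis identifies the image $\mf{so}(10,\C)\cdot\Psi \subset S_+$ as $\Omega^{0,0} \oplus \Omega^{0,2}$: the $\mf{u}(1)$ factor of the Levi $\mf{u}(5) \subset \mf{so}(10,\C)$ produces the $\Omega^{0,0}$ summand, while the unipotent piece $\Omega^{0,2} \subset \mf{so}(10,\C)$ acts on $\Psi$ by wedge product. The same decomposition, combined with Schur's lemma applied to the $SU(5)$-equivariant map $\Gamma(\Psi,-)$, shows that $\Gamma(\Psi,-)$ vanishes on $\Omega^{0,0} \oplus \Omega^{0,2}$ and restricts to an isomorphism $\Omega^{0,4} \to V^{0,1}$. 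Hence $\ker \Gamma(\Psi,-) = \mf{so}(10,\C)\cdot\Psi$, and the cohomology of the classical sub-complex at the three spots is $\op{Stab}(Q)$, zero, and $V^{1,0} = V/V^{0,1}$.

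For the extended sub-complex the key observation is that every constant $1$-form on $\R^{10}$ is de Rham exact, trivialized by a linear function. If $\alpha_\psi \in \Omega^0$ denotes the linear function with $d\alpha_\psi = \Gamma(\Psi,\psi)$, then $\psi \otimes e_2 - \alpha_\psi$ is a $d_Q$-cocycle, and any two choices of $\alpha_\psi$ differ by a constant. Since nothing in the complex feeds into $S_+ \otimes e_2$ or into $\Omega^0$, and since $H^{\geq 1}$ of the de Rham complex of $\R^{10}$ vanishes, this sub-complex contributes exactly $(S_+ \otimes e_2) \oplus \C \cdot c$ to the cohomology, where $c$ is the class of the constant function $1 \in \Omega^0$; the $\Pi$-parity of $c$ is odd, consistent with the degree shift $[1]$ applied to $\Omega^0$.

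The Lie bracket on cohomology is read off from the brackets of representatives. The bracket of two lifts $\psi_i \otimes e_2 - \alpha_{\psi_i}$ reduces to $\Gamma(\psi_1,\psi_2) \in V$ (since the fermions act trivially on $\Omega^\ast$ and $\Omega^\ast$ is abelian), which projects in cohomology to $\pi_{V^{1,0}} \Gamma(\psi_1,\psi_2)$. The bracket of $v \in V^{1,0}$ with $\psi \otimes e_2 - \alpha_\psi$ gives $-\mc{L}_v \alpha_\psi = -\ip{\Gamma(\Psi,\psi), v}_V \in \Omega^0$, a constant function representing the class $\ip{\Gamma(\Psi,\psi), v}_V \cdot c$ up to sign. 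The semidirect structure with $\op{Stab}(Q)$ and the centrality of $c$ are inherited from the defining structure of $\mf{superstring}_{IIB}$. I expect the main obstacle to be the representation-theoretic identification $\mf{so}(10,\C)\cdot\Psi = \ker \Gamma(\Psi,-)$ in the classical sub-complex, since without this the sub-complex would fail to be exact in the middle; everything else reduces to de Rham acyclicity and book-keeping.
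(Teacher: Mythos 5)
Your argument is correct and follows essentially the same route as the paper: split off the non-extended piece, identify $\op{Stab}(Q)$, $V^{1,0}$, and $S_+\otimes e_2$ as the surviving summands, check that $S_+\otimes e_1$ cancels against $\mf{so}(10,\C)/\op{Stab}(Q)$, and then lift elements of $S_+\otimes e_2$ to cocycles by subtracting a primitive of $\Gamma_{\Omega^1}(\Psi,\psi)\in\Omega^1$. Your linear primitive $\alpha_\psi$ is exactly the paper's $H\Gamma_{\Omega^1}(\Psi\otimes\psi)$, where $H$ integrates along rays from the origin (on constant $1$-forms this is just the linear function). The one place where you supply slightly more detail than the paper is the middle exactness of the ``classical'' sub-complex, where you make the identification $\mf{so}(10,\C)\cdot\Psi=\Omega^{0,0}\oplus\Omega^{0,2}=\ker\Gamma(\Psi,-)$ explicit via the $SU(5)$-decomposition and Schur; the paper asserts this with a dimension count. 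That is a welcome addition and not a different argument.
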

\begin{proof}
Let us first compute the $Q$-cohomology of the non-centrally extended algebra
$$
\mf{siso}_{IIB} = \mf{so}(10,\C) \oplus V \oplus \Pi ( S_+ \otimes \C^2 ) .
$$
The element $Q = \Psi \otimes e_1$ commutes with $S_+ \otimes e_2$, and the image of $[Q,-]$ does not intersect $S_+ \otimes e_2$. Therefore, $S_+ \otimes e_2$ survives to cohomology.  Further, the image of $[Q,-]$ in $V$ is by definition $V^{0,1}$ so that only $V^{1,0} \subset V$ survives in the cohomology.  It is also clear that the kernel of $[Q,-]$ in $\mf{so}(10,\C)$ is $\op{Stab}(Q)$. Thus, to prove the statement (without the central extension) we need to show that $S_+ \otimes e_1$ is killed.  The map 
$$[Q,-] : S_+ \otimes e_1 \to V^{0,1}$$
is surjective, and has $11$ dimensional kernel. It is easy to verify that the kernel of this map is precisely the image of the map
$$
[Q,-] : \mf{so}(10,\C) \to S_+ \otimes e_1.
$$

Thus, we have verified that the cohomology is indeed $\op{Stab}(Q) \oplus V^{1,0} \oplus \Pi (S_+ \otimes e_2) \oplus \pi \C \cdot c$ where $c$ is the central element. We have also verified that the Lie bracket is what we claimed it is, modulo the central term. It remains to calculate the term in the Lie bracket involving $c$. In fact, in the course of doing this, we will verify that there are no higher $L_\infty$ brackets.

Let us represent the $Q$-cohomology by a subspace of the cochain complex $\mf{superstring}_{IIB}$ with differential the internal differential on $\mf{superstring}_{IIB}$ plus $[Q,-]$. Recall that
$$
\mf{superstring}_{IIB} = \mf{so}(10,\C) \oplus V \oplus \Pi (S_+ \otimes \C^2) \oplus \Omega^\ast(\R^{10}) [1] 
$$ 
and the differential is $\d_{dR} + [Q,-]$.   The elements in $\op{Stab}(Q)$ and $V^{1,0}$ can be embedded in $\mf{so}(10,\C)$ and $V$ in the evident way, and become closed elements of this cochain complex.  The central element $c$ is embedded as the function $-1$ in $\Omega^0(\R^{10})$. We need to find cochain representatives for the remaining space $S_+ \otimes e_2$. The obvious embedding of $S_+ \otimes e_2$ in the fermions of $\mf{superstring}_{IIB}$ is not invariant under $[Q,-]$, because the central extension gives us a term in the differential of the form 
\begin{align*} 
[Q,-] :  S_+ \otimes e_2 &\to \Omega^1(\R^{10}) \\
  \psi \otimes e_2 & \mapsto \Gamma_{\Omega^1}(\Psi,\psi). 
\end{align*}
Let us break the translational symmetry of $\R^{10}$, and fix the origin $0 \in \R^{10}$.  We can then define a map
$$ 
 H : \Omega^1(\R^{10})  \to \Omega^0(\R^{10})
$$  
which sends a one-form $\omega$ to the function defined by
$$
\left( H \omega \right)(x) = \int_{0}^{x} \omega
$$
where we integrate over the straight-line path from $0$ to $x$. 

Evidently if $\d_{dR} \omega = 0$ then 
$$
\d_{dR} H \omega = \omega. 
$$
It follows that we can lift a cohomology class $\psi \in S_+ \otimes e_2$ to a closed element at the cochain level by the formula
$$
L(\psi) = \psi \otimes e_2 - H \Gamma_{\Omega^1} (\Psi \otimes \psi) \in S_+ \otimes e_2 \oplus \Omega^0[1]. 
$$
Finally, we can calculate the commutators betweeen our cochain representatives of the cohomology classes.  We find that our cochain representatives form a sub dg Lie algebra, and that modulo the central term the commutators are the obvious ones. However, since $L(\psi)$ is not translation-invariant, we find an extra commutator between a vector $v \in V^{1,0}$ and an element $\psi \in S_+ \otimes e_2$ of the form 
$$
[v, L(\psi)] = - \ip{v,\Gamma(\Psi \otimes \psi)}= c \ip{v,\Gamma(\Psi \otimes \psi)}  
$$ 
as desired. 
\end{proof}

\subsection{Embedding $\mf{superstring}_{IIB}^Q$ into BCOV theory} As a first check of our conjectured relation between twisted IIB supergravity and BCOV theory, we will show that the $Q$-cohomology $\mf{superstring}_{IIB}^{Q}$ of $\mf{superstring}_{IIB}$ appears naturally by fields of BCOV theory.  We also need to match the commutators in the supersymmetry algebra with corresponding commutators in BCOV theory. Before we do this, we need to explain what the commutators in BCOV theory are.  

As above, let $\PV(\C^5)\left\llbracket t\right\rrbracket[2]$ denote the fields of BCOV theory on $\C^5$.  This has a differential $\dbar + t \partial$.  The BCOV interaction induces an $L_\infty$ structure on a shift by one of these fields, namely $\PV(\C^5)\left\llbracket t\right\rrbracket[1]$. The $L_\infty$ structure is such that the Maurer-Cartan equation for this $L_\infty$ structure is the equations of motion of BCOV theory.

We will write down a map of Lie algebras from the $Q$-cohomology of the supersymmetry algebra to the cohomology (with respect to $\dbar + t \partial$) of $\PV(\C^5)\left\llbracket t\right\rrbracket[1]$, with its induced Lie algebra structure. On $\dbar + t \partial$ cohomology, the Lie algebra structure coincides with the one coming from the Schouten bracket of poly-vector fields.  

To do this, we will first decompose the $Q$-cohomology of $\mf{superstring}_{IIB}$ in terms of representations of $\mf{sl}(5,\C) \subset \op{Stab}(Q)$.  We will write the decomposition in terms of constant-coefficent forms and polyvector fields forms on $\R^{10} = \C^5$, using the complex structure determined  by $Q$.  We find that 
\begin{align*} 
 \op{Stab}(Q) &\iso \mf{sl}(5,\C) \oplus \PV^{3,0}_{const} \iso \mf{sl}(5,\C) \oplus \Omega^{2,0}_{const} . \\
S_+  &\iso \Omega^{odd,0}_{const} \iso \Omega^{1,0}_{const} \oplus \PV^{2,0}_{const} \oplus \Omega^{0,0}_{const}.  
\end{align*}
Here $\PV_{const}^{i,j}$ denotes poly-vector fields with constant coefficients. Note also that $V^{1,0}=\PV^{1,0}_{const}$. To summarize, we have 
$$
\mf{superstring}_{IIB}^{Q} = \mf{sl}(5,\C) \oplus \PV^{1,0}_{const} \oplus  \PV^{3,0}_{const} \oplus \Pi\left( \Omega^{1,0}_{const} \oplus \PV^{2,0}_{const} \oplus  \Omega^{0,0}_{const} \oplus \C \cdot c\right). 
$$
Note that this Lie algebra actually has \emph{two} fermionic central elements, namely the central element $c$ coming from the central extension before we took $Q$-cohomology, and the unique up to scale $\mf{sl}(5,\C)$ invariant element of $S_+$ (which we have written as $1 \in \Omega^{0,0}_{const}$ in the decomposition above).  The second fermionic central element comes from the element $\Psi \otimes e_2$ in the original supersymmetry algebra, where the $Q = \Psi \otimes e_1$.  

The map from $\mf{superstring}^{Q}_{IIB}$ to the fields of BCOV theory is as follows. 
\begin{align*} 
 A_{ij} \in \mf{sl}(5,\C) & \mapsto \sum A_{ij} z_i \dpa{z_j} \in \PV^{1,0} \subset \PV\left\llbracket t\right\rrbracket \\
\dpa{z_i} \dpa{z_j} \dpa{z_k} \in \PV^{3,0} & \mapsto   \dpa{z_i} \dpa{z_j} \dpa{z_k} \in \PV^{3,0} \subset \PV\left\llbracket t\right\rrbracket \\
\dpa{z_i} \in \PV^{1,0}_{const} = V^{1,0} & \mapsto \dpa{z_i} \in \PV\left\llbracket t\right\rrbracket \\
\d z_i \in \Omega^{1,0}_{const} &\mapsto z_i \in \PV^{0,0} \subset \PV\left\llbracket t\right\rrbracket\\
\dpa{z_i} \dpa{z_j} \in \PV^{2,0}_{const} & \mapsto \dpa{z_i} \dpa{z_j} \in \PV^{2,0} \subset \PV\left\llbracket t\right\rrbracket \\
1 \in \Omega^{0,0}_{const} & \mapsto 0 \\
c & \mapsto 1 \in \PV^0 \subset \PV\left\llbracket t\right\rrbracket.  
\end{align*}
\begin{lemma}
This map defines a Lie algebra homomorphism from $\mf{superstring}_{IIB}^{Q}$ to the space 
$$
\op{Ker} \partial  \subset \PV_{hol}(\C^5)[1]
$$
equipped with its Schouten bracket (where the subscript $hol$ indicates holomorphic polyvector fields.  
\end{lemma}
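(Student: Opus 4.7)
The plan is first to verify that the image lies in $\op{Ker}\partial$ and then to match brackets case by case. Constant-coefficient polyvectors are automatically $\partial$-closed; the functions $z_i$ and the constant $1$ lie in $\PV^{\ast,0}$ on which $\partial$ vanishes; and a linear vector field $\sum A_{ij} z_i\partial_{z_j}$ has divergence $\op{tr}(A)$, which is zero exactly because $A\in\mf{sl}(5,\C)$. So the image lands in $\op{Ker}\partial$.

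Next I would check brackets on the semidirect factor $\op{Stab}(Q) = \mf{sl}(5,\C) \ltimes \PV^{3,0}_{const}$ and its action on the remaining summands. The map sends $\mf{sl}(5,\C)$ to linear vector fields by the standard embedding of $\mf{gl}(5)$, whose Schouten bracket reproduces the Lie bracket; the Schouten bracket of a linear vector field with a constant $k$-polyvector or with $z_j$ is the induced $\mf{gl}(5)$-action on the corresponding tensor power of the standard representation and its dual, matching all the semi-direct product relations by inspection. The element $c\mapsto 1$ is central in $\PV\llbracket t\rrbracket$ under the Schouten bracket, as required, and the image $0$ of $1\in\Omega^{0,0}_{const}$ is trivially central.

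The core of the proof is the fermionic brackets. For $\psi_1,\psi_2 \in S_+$ the source bracket is $\pi_{V^{1,0}}\Gamma(\psi_1\otimes\psi_2)$; using the $\mf{sl}(5)$-decomposition $S_+ \cong \Omega^{1,0}_{const}\oplus\PV^{2,0}_{const}\oplus\Omega^{0,0}_{const}$, one would tabulate the Schouten brackets of the images (for example $\{z_i,z_j\}=0$, $\{z_i,\partial_{z_j}\partial_{z_k}\} = \delta_{ij}\partial_{z_k}-\delta_{ik}\partial_{z_j}$, and $\{\partial_{z_i}\partial_{z_j},\partial_{z_k}\partial_{z_\ell}\}=0$) and compare them to the corresponding $\mf{sl}(5)$-equivariant components of $\Gamma$. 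For the central-extension bracket $[v,\psi] = c\langle \Gamma(\Psi\otimes\psi),v\rangle$ with $v\in V^{1,0}$ and $\psi\in S_+$, the Schouten bracket $\{\partial_{z_i},-\}$ acts as contraction with $\d z_i$: it sends $z_j$ to $\delta_{ij}$, kills images of $\PV^{2,0}_{const}$, and kills the image $0$ of $1\in\Omega^{0,0}_{const}$. The check therefore reduces to the single $\Gamma$-matrix identity $\langle \Gamma(\Psi\otimes \d z_j),\partial_{z_i}\rangle = \delta_{ij}$, which amounts to a normalization of $\Psi$.

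The main obstacle is installing a consistent set of conventions --- the spinor isomorphism $S_+ \cong \Omega^{odd,0}_{const}$, the identifications $\Omega^{3,0} \cong \PV^{2,0}$ and $\Omega^{5,0}\cong\C$, and the scalar normalizations of the $\Gamma$-maps --- so that all the scalars in the matching identities agree globally. Once these conventions are pinned down, the bulk of the verification follows from Schur's lemma applied to the $\mf{sl}(5)$-isotypic decomposition of $\Sym^2 S_+ \to V$: each component of $\Gamma$ corresponds to a unique (up to scale) $\mf{sl}(5)$-equivariant map into $\PV$, and only finitely many overall scalars need to be fixed by explicit computation.
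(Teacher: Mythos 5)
Your proposal is correct and follows the same approach as the paper: direct computation of Schouten brackets, with the two key computations $\{z_i,\partial_{z_j}\partial_{z_k}\}=\delta_{ij}\partial_{z_k}-\delta_{ik}\partial_{z_j}$ and $\{\partial_{z_i},z_j\}=\delta_{ij}$ being exactly the ones the paper singles out. You organize the remaining bracket checks more systematically via the $\mf{sl}(5)$-isotypic decomposition and Schur's lemma, and you also verify the image lands in $\op{Ker}\partial$, both of which the paper leaves implicit under "simple calculation."
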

\begin{proof}
This is a simple calculation. For instance, the commutator
$$
 \{z_i, \partial_{z_j} \partial_{z_k} \} = \delta_{ij} \partial_{z_k} - \delta_{ik} \partial_{z_j} 
$$
shows how to spinors in $S_+ \subset \mf{superstring}_{IIB}^{Q}$ commute to give a vector.  The commutator
$$
\{\dpa{z_i}, z_j\} = \delta_{ij}
$$
corresponds to the central extension in $\mf{superstring}_{IIB}^{Q}$, whereby a vector and a spinor commute to give a multiple of the central element. 
\end{proof} 

 Supersymmetries in $S_+ \subset \mf{superstring}_{IIB}^Q$ become either linear superpotentials $z_i\in \PV^{0,0}$ or else bivectors $\dpa{z_i} \dpa{z_j}\in \PV^{2,0}$.  This tells us that further twists of type IIB supergravity are obtained by considering the topological $B$-model on $\C^5$ either with a linear superpotential, or where some directions are non-commutative, or a mixture of the two.

The linear superpotential is slightly mysterious, because naively one expects that the topological $B$-model with a linear superpotential is trivial, because it has no critical points.  It seems that the topological $B$-model with a linear superpotential is somewhat similar in nature to the Donaldson-Witten twist of $N=2$ supersymmetric gauge theory. That is, it is trivial in perturbation theory, but non-perturbatively it counts certain ``gravitational instantons''. 

The further twists which correspond to introducing a Poisson bivector are more understandable. Recall that the topological $B$-model on a non-commutative space is closely related to the topological $A$-model.  As, the $N=(2,2)$ model with a hyper-K\"ahler target has a $\mbb{P}^1$ of topological twists \cite{VafWit94, KapWit06} with the feature that at $\infty$ in the $\mbb{P}^1$, it is the topological $B$-model; at $0$, it is the $A$-model; at intermediate points, it can be interpreted either as the $A$-model with a non-zero $B$-field or as the $B$-model on a non-commutative space.  The $B$-field in the $A$-model corresponds to the holomorphic symplectic form for a chosen complex structure on the target, which is the inverse to the Poisson tensor defining the non-commutative deformation of the $B$-model. 

Applied to our situation, this discussion shows that (assuming our conjectures) a further twist of type IIB, where two of the $5$ complex directions are made non-commutative, can be described by a mixture of the topological $A$-model (in these two directions) and the topological $B$-model (in the remaining $3$). 

\section{$D$-branes in type IIB and their supersymmetry}\label{section-HH}
In this section we will give a conjectural description of $D$-branes in type IIB and check that they have the right supersymmetry. We do not currently have a good understanding of the $NS5$ brane, and defer discussion of this to subsequent publications.  

Our conjecture, of course, is that the $D_{2k-1}$-brane (for $k = 0,\dots,5$) corresponds to the topological $B$-brane living on the structure sheaf of a copy of $\C^k$ inside of $\C^5$.  Our task is to verify that the theory living on the brane in the topological $B$-model is a holomorphic twist of the maximally supersymmetric gauge theory living on the brane in type IIB. We also need to check that the residual supersymmetries in the twisted supergravity theory that preserve the brane match the residual supersymmetries that one expects in the holomorphic twist of the maximally supersymmetric gauge theory.

\subsection{The theory living on a $B$-brane}
The first thing we need to do is to analyze the theory living on a $D_{2k-1}$ brane on $\C^k \subset \C^5$.   In general, for any Calabi-Yau $X$, a $B$-brane on $X$ is given by a coherent sheaf $E$.  The open-string field theory constructed from the sheaf $E$ is a version of holomorphic Chern-Simons theory.  The space of fields is the open-string state space, whose cohomology is $\op{Ext}(E,E)[1]$. 

Let us apply this to the structure sheaf $\Oo_{\C^k}$ of $\C^k \subset \C^5$. In this case we can view the $\op{Ext}$-algebra as a holomorphic bundle of algebras on $\C^k$.    Standard homological algebra tells us that this bundle is the exterior algebra of the normal bundle of $\C^k$ inside $\C^5$. Thus, the fields of the holomorphic Chern-Simons theory constructed from the $B$-brane on $\C^k$ are $\Omega^{0,\ast}(\C^k)[\eps_1,\dots,\eps_{5-k}][1]$, where the $\eps_i$ are odd variables (which we can take to be of cohomological degree $1$).

   More generally, if we take $N$ copies of the structure sheaf of $\C^k$ inside $\C^5$ we find 
$$ \op{Ext}_{\Oo(\C^5)} (\Oo^N_{\C^k}, \Oo^N_{\C^k}) \simeq \Omega^{0,\ast}(\C^k)[\eps_1,\dots,\eps_{5-k}] \otimes \gl_N [1].$$ 
In the case $k = 3$, this is precisely the field content (in the BV formalism) of the holomorphic twist of $\mscr{N}=4$ Yang-Mills as described in \cite{Cos11b}.  

We can interpret $\Omega^{0,\ast}(\C^k)[\eps_i]$ as being the Dolbeault complex of the complex supermanifold $\C^{k \mid 5-k}$, where we only take the Dolbeault resolution in the even $\C^k$ direction and consider holomorphic functions in the odd directions.  

The action functional is the holomorphic Chern-Simons type action
$$
S(A) = \int_{\C^{k \mid 5-k}}\left( \tfrac{1}{2} \op{Tr} (A \dbar A) + \tfrac{1}{3} \op{Tr} (A^3)\right) \prod \d z_i \prod \d \eps_i
$$
where $A \in \Omega^{0,\ast}(\C^{k \mid 5-k}) \otimes \gl_N[1]$ is the field.  Here the integral is over the supermanifold $\C^{k \mid 5-k}$. Concretely, performing this integral means picking up the term in the field which lives in $\eps_1 \dots \eps_{5-k} \Omega^{0,k}(\C^k)$ and then integrating it in the usual way over $\C^k$.

It is important to note that the $\eps_i$ live in the normal bundle to $\C^k$, and so transform under $\mf{sl}(5-k)$ in the representation which is dual to that given by the functions $z_1,\dots,z_{5-k}$. 

Now we can prove the following (which is an easy consequence of Baulieu's result \cite{Bau10} that we already mentioned).
\begin{lemma}
Holomorphic Chern-Simons theory on the supermanifold $\C^{k \mid 5-k}$ is the holomorphic twist of the maximally supersymmetric gauge theory in dimension $2k$.
\end{lemma}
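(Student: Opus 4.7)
The plan is to reduce the general case to Baulieu's theorem (the $k=5$ case) by dimensional reduction.  Baulieu showed that holomorphic Chern--Simons on $\C^5$ is the $SU(5)$-invariant twist of ten-dimensional maximally supersymmetric Yang--Mills.  Maximally supersymmetric gauge theory in dimension $2k$ is, by definition, the dimensional reduction of ten-dimensional super Yang--Mills along $\R^{10-2k}$.  So the essential point is to argue that twisting commutes with dimensional reduction in this setting, and then to identify the reduced twisted theory with HCS on $\C^{k \mid 5-k}$.

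First I would fix coordinates $(z_1,\dots,z_5)$ on $\C^5$ and split them into ``worldvolume'' directions $z_1,\dots,z_k$ and ``transverse'' directions $z_{k+1},\dots,z_5$.  Under this split, $\op{Spin}(10,\C)$ contains $\op{Spin}(2k,\C) \times \op{Spin}(10-2k,\C)$, with the second factor becoming an $R$-symmetry after dimensional reduction.  The $SU(5)$-invariant supercharge $Q$ appearing in Baulieu's twist is in particular invariant under $SU(k) \times SU(5-k) \subset SU(5)$.  The key claim to verify is that, under the branching of the $16$-dimensional chiral spin representation of $\op{Spin}(10)$, $Q$ corresponds to the unique (up to $R$-symmetry rotation) $SU(k)$-invariant holomorphic supercharge of the $2k$-dimensional theory.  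Granting this, twisting the dimensionally reduced theory by this supercharge gives the same classical field theory as dimensionally reducing Baulieu's twisted theory.

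Next I would carry out the explicit identification of the dimensional reduction of HCS on $\C^5$ with HCS on $\C^{k \mid 5-k}$.  The fields of HCS on $\C^5$ are $\Omega^{0,\ast}(\C^5) \otimes \gl_N[1]$; imposing translation invariance in $z_{k+1},\dots,z_5$ and their conjugates leaves the subcomplex
\[
\Omega^{0,\ast}(\C^k) \otimes \wedge^\ast \langle \d \bar z_{k+1}, \dots, \d \bar z_5 \rangle \otimes \gl_N[1].
\]
Renaming $\d \bar z_{k+i} \mapsto \eps_i$, which transforms correctly under $\mf{sl}(5-k)$ and matches the identification of $\op{Ext}_{\Oo(\C^5)}(\Oo_{\C^k},\Oo_{\C^k})$ with the exterior algebra of the normal bundle, produces exactly $\Omega^{0,\ast}(\C^k)[\eps_1,\dots,\eps_{5-k}] \otimes \gl_N[1]$.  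The HCS Lagrangian on $\C^5$ restricted to translation-invariant fields extracts the coefficient along $\d \bar z_{k+1}\cdots \d \bar z_5$ of $\tfrac{1}{2}\op{Tr}(A\dbar A) + \tfrac{1}{3}\op{Tr}(A^3)$, which is precisely the Berezin integrand of the HCS action on $\C^{k\mid 5-k}$.

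The hardest step is the first one: making precise that the $SU(5)$-invariant supercharge in ten dimensions descends, under the branching $\op{Spin}(10) \supset \op{Spin}(2k) \times \op{Spin}(10-2k)$, to the holomorphic $SU(k)$-invariant supercharge of the $2k$-dimensional theory.  The subtlety is that the $2k$-dimensional holomorphic twist uses the enhanced $R$-symmetry present after reduction (since $\op{Spin}(10-2k)$ becomes part of the $R$-symmetry), so one must track the $R$-symmetry action carefully and show that the $SU(5)$-invariant choice matches the $SU(k)$-invariant one modulo this enhanced $R$-symmetry.  Once this compatibility is in place, the remainder of the proof is Dolbeault bookkeeping.
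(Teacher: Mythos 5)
Your proposal follows essentially the same route as the paper's proof: invoke Baulieu's result for $k=5$, observe that dimensional reduction commutes with the holomorphic twist, and then explicitly reduce holomorphic Chern--Simons on $\C^5$ along the transverse $\C^{5-k}$ directions, identifying the surviving $\d\bar z$'s with the odd coordinates $\eps_i$ of $\C^{k\mid 5-k}$. The only difference is emphasis: the paper simply asserts that reduction commutes with twisting, whereas you correctly flag the compatibility of the $SU(5)$-invariant supercharge with the $SU(k)$-invariant holomorphic supercharge under the branching $\op{Spin}(10)\supset\op{Spin}(2k)\times\op{Spin}(10-2k)$ as the point that would need to be verified.
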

\begin{proof}
In the case that $k = 5$ this was proved by Baulieu. Precisely,  Baulieu showed in  \cite{Bau10} that the gauge fixed action of the holomorphic twist of $N=1, D=10$ super Yang-Mills theory in the BV formalism becomes the holomorphic Chern-Simons action modulo $Q$-exact terms. The maximally supersymmetric theory in dimension $2k$ is a reduction of $N=1$ gauge theory in dimension $10$, and the process of dimensional reduction commutes with taking the holomorphic twist.  To check the result, we simply need to analyze the dimensional reduction of holomorphic Chern-Simons theory on $\C^5$ to $\C^k$.  The fields of holomorphic Chern-Simons on $\C^5$ are $\Omega^{0,\ast}(\C^5)$. Let us choose coordinates $w_1,\dots,w_k$ and $z_{1}, \dots,z_{5-k}$ on $\C^5$, where our brane lives on the space with coordinates $w_i$.  If we take the fields of holomorphic Chern-Simons on $\C^5$ to  be constant in the $z_i$ directions, we are left with
$$
\Omega^{0,\ast}(\C^k)[\d \zbar_1 \dots \d \zbar_{5-k}].
$$ 
We can identify the $\d \zbar_i$ with the $\eps_i$ in the description of the theory on a $B$-brane given above.  The dimensionally reduced action is the holomorphic Chern-Simons action on $\C^{k \mid 5-k}$ we discussed above. 
\end{proof}

\begin{comment}
The $D9$ brane theory has $16$ supersymmetries, which we can take to be one of the two copies of $S_+$ which appear as supersymmetries of type IIB supergravity.  We will choose the supercharge with respect to which we twist both type IIB and the $D9$ brane theory to be, as before, the $\mf{sl}(5)$ invariant element $Q$ in this copy of $S_+$.  Our calculation of the $Q$-cohomology of the $(2,0)$ supersymmetry algebra in $10$ dimensions can be modified slightly to calculate the cohomology of the $(1,0)$ supersymmetry algebra, which we write as 
$$
\mf{siso}_{(1,0)} = \mf{so}(10,\C) \ltimes \left( V \oplus \pi S_+  \right)
$$
where $V$ is the vector representation of $\mf{so}(10,\C)$.  One finds that the $Q$-cohomology of $\mf{siso}_{(1,0)}$ has no fermionic elements.  The point is that the supercharges that commute with $Q$ can all be obtained from $Q$ by rotation by an element of $\mf{so}(10,\C)$. 

\end{comment}

\subsection{Supersymmetries of type IIB and holomorphic twists of $D$-brane gauge theories} 
Next, we will describe how the residual supersymmetries of type IIB supergravity (which we have represented by particular polyvector fields on $\C^5$) act on the $D$-brane gauge theory. This action will be a special case of the coupling of a general polyvector field on $\C^5$  to holomorphic Chern-Simons on $\C^{k \mid 5-k}$.  The explicit formula for the coupling of an arbitrary field of BCOV theory is a little complicated: the corresponding coupling for the space-filling $B$-brane $\C^5$ is presented in detail in \cite{CosLi15}.  Instead of presenting this formula in detail, we will explain the abstract reasons for the existence of such a coupling and then analyze how the polyvector fields corresponding to supersymmetries coupling.  

General categorical results tell us that, for any complex manifold $X$, elements of the Hochschild cohomology of the structure sheaf give rise to (curved) $A_\infty$ deformations of an appropriate dg model for the category of coherent sheaves on $X$.  In particular, we have a map
$$
HH^\ast(\Oo_X) \to HH^\ast (\op{RHom}(E,E)) 
$$ 
for any coherent sheaf $E$ on $X$. (By $HH^\ast(\Oo_X)$ we mean the ``local'' version of Hochschild cohomology, where the Hochschild cochains are supported on the diagonal). Since, by the Hochschild-Kostant-Rosenberg theorem, $HH^\ast(\Oo_X)$ is the same as the algebra $\PV(X)$ of polyvector fields on $X$, we get a map 
$$
\PV(X) \to HH^\ast(\op{RHom}(E,E)).
$$  
In the case that $X$ is Calabi-Yau, this map takes the operator $\partial$ on $\PV(X)$ to the Connes $B$-operator on $HH^\ast(\op{RHom}(E,E))$. We can thus pass to the cyclic complex and get a cochain map
$$
\PV(X)\left\llbracket t\right\rrbracket \to HC^\ast(\op{RHom}(E,E))
$$
where $\PV(X)\left\llbracket t\right\rrbracket$ is equipped with the differential $\dbar + t\partial$ and $HC^\ast$ indicates the cyclic cohomology.  The cyclic cohomology describes deformations of $\op{RHom}(E,E)$ as a curved $A_\infty$ algebra with a trace, and so deformations of the holomorphic Chern-Simons theory constructed from $\op{RHom}(E,E)$. 

Applying this to the case when $E$ is the structure sheaf $\Oo_{\C^k}$ for $\C^k \subset \C^5$,  we find a map 
$$
\PV(\C^5)\left\llbracket t\right\rrbracket \to HC^\ast( \Omega^{0,\ast}(\C^k)[\eps_i] ) 
$$
where on the right hand side the cyclic cohomology describes deformations of $\Omega^{0,\ast}(\C^k)[\eps_i]$ as a dg algebra with a trace.

A further application of the Hochschild-Kostant-Rosenberg theorem tells us that we can identify the cyclic cohomology of $\Omega^{0,\ast}(\C^k)[\eps_i]$ with the space 
$$
\PV^{\ast,\ast}(\C^{k \mid 5-k} )\left\llbracket t\right\rrbracket = \PV^{\ast,\ast}(\C^k)\left \llbracket \eps_i, \dpa{\eps_i}\right \rrbracket \left\llbracket t\right\rrbracket 
$$
where the variables $\dpa{\eps_i}$ are even (and of cohomological degree $0$ if we give the variables $\eps_i$ degree $1$).  This complex, as usual, has differential $\dbar + t \partial$ where $\partial$ is the divergence operator.  

Now, up to inverting the HKR isomorphism, we just need to write down an explicit map from polyvector fields on $\C^5$ to those on $\C^{k \mid 5-k}$.  It turns out that the desired map is easier to write down if we focus on the space $\PV_{hol}(\C^5) =\PV_{hol}^{\ast,0}(\C^5)$ of polyvector fields which are in $0$ Dolbeault degree and are in the kernel of $\dbar$.      Let us use coordinates $w_1,\dots,w_k$ and $z_1,\dots,z_{5-k}$ on $\C^5$ where the brane we consider lives on the space with coordinates $w_i$.  We will use Roman indices for the $w_i$ coordinates and Greek indices for the $z_\alpha$ coordinates.  
\begin{lemma}
The map 
$$
\PV_{hol}(\C^5) \to \PV_{hol}(\C^{k \mid 5-k}) = \PV_{hol}(\C^k) \left \llbracket \eps_\alpha, \dpa{\eps_\alpha} \right \rrbracket
$$
which corresponds to the map
$$
HH^\ast(\Omega^{0,\ast}(\C^5)) \to HH^\ast( \Omega^{0,\ast}(\C^{k \mid 5-k}))
$$
on Hochschild cohomology is the unique continuous map of algebras which sends
\begin{align*} 
  w_i & \mapsto w_i \\
\dpa{w_i} & \mapsto \dpa{w_i} \\
z_\alpha & \mapsto \dpa{\eps_\alpha} \\
\dpa{z_\alpha} & \mapsto \eps_\alpha.
\end{align*}
\end{lemma}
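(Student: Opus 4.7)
The strategy is to reduce to checking the formula on generators and then verify each case. Since both Hochschild cohomologies inherit the graded-commutative cup product and the natural map on $HH^*$ is an algebra homomorphism, and since $\PV_{hol}(\C^5)$ is free as a supercommutative algebra on $w_i, z_\alpha, \partial_{w_i}, \partial_{z_\alpha}$, it suffices to determine the image of each generator.

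At the chain level, I would model $\text{RHom}_{\Oo(\C^5)}(\Oo_{\C^k}, \Oo_{\C^k})$ using the Koszul resolution $K_\bullet = \Oo(\C^5)\otimes \wedge^\bullet\langle\eps_1,\ldots,\eps_{5-k}\rangle$ of $\Oo_{\C^k}$, with Koszul differential $d_K = \sum_\alpha z_\alpha \partial_{\eps_\alpha}$. Then $\text{End}^{\mathrm{dg}}_{\Oo(\C^5)}(K_\bullet) = \Oo(\C^5)[\eps_\alpha, \partial_{\eps_\alpha}]$ with differential $[d_K, -]$, whose cohomology is the Ext algebra $\Oo(\C^k)[\eps_\alpha]$; this is the ``relative Koszul dual'' of $\Oo(\C^5)$ along the ideal $(z_\alpha)$. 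Applying HKR to the Ext algebra (viewed as a graded supercommutative algebra with $\eps_\alpha$ of degree $1$) identifies $HH^*$ with $\PV_{hol}(\C^{k\mid 5-k})$. The map of Hochschild cohomologies is then induced by the dg-algebra inclusion $\Oo(\C^5) \hookrightarrow \text{End}^{\mathrm{dg}}(K_\bullet)$ sending a function to multiplication.

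For the ``parallel'' generators the verification is straightforward: $w_i$ restricts to the function $w_i \in \Oo(\C^k)$, and the derivation $\partial_{w_i}$ extends to $\text{End}^{\mathrm{dg}}(K_\bullet)$ annihilating $\eps_\alpha$ and $\partial_{\eps_\alpha}$, commutes with $d_K$, and so descends to $\partial_{w_i}$ on the Ext algebra. The ``normal'' generators are the Koszul-duality exchange. The key chain-level computations are
\begin{align*}
[d_K, \eps_\alpha] &= z_\alpha, \\
[d_K, \partial_{z_\alpha}] &= -\partial_{\eps_\alpha},
\end{align*}
which are the homotopies that witness $z_\alpha \mapsto \partial_{\eps_\alpha}$ and $\partial_{z_\alpha} \mapsto \eps_\alpha$. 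Explicitly, after the quasi-isomorphism $\text{End}^{\mathrm{dg}}(K_\bullet) \simeq \Oo(\C^k)[\eps_\alpha]$, the HKR representative of $z_\alpha$ (a Hochschild $0$-cochain on $\Oo(\C^5)$) is cohomologous to the HKR representative of the derivation $\partial_{\eps_\alpha}$ (a Hochschild $1$-cochain on $\Oo(\C^k)[\eps_\alpha]$ of internal degree $-1$, hence total degree $0$, matching $z_\alpha$); an analogous argument handles $\partial_{z_\alpha} \mapsto \eps_\alpha$.

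The main obstacle is tracking the HKR identification through the Koszul quasi-isomorphism: while the homotopies above supply the right chain-level relations, converting them into explicit polyvector-field images requires care with degree conventions, since the target $HH$-grading mixes with the internal $\eps_\alpha$-grading to give the total cohomological degree. Once the map is verified on the four families of generators, the algebra-homomorphism property extends the formulas to the full $\PV_{hol}(\C^5)$, and continuity in the formal topology ensures uniqueness of the extension.
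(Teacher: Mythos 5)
Your proposal is correct in substance, and it takes a genuinely different route from the paper. The paper reduces to the case $k=0$, where it invokes the abstract Koszul duality between $\Sym^\ast V^\vee$ and $\wedge^\ast V$: both have the same Hochschild cohomology (polyvector fields on $V$ and on $V^\vee[-1]$, respectively), and the comparison map is the evident swap of tensor factors; the general $k$ then follows by tensoring with the tangential directions. You instead work directly with the Koszul resolution $K_\bullet$ for arbitrary $k$, compute the Koszul homotopies $[d_K,\eps_\alpha]=z_\alpha$ and $[d_K,\partial_{z_\alpha}]=-\partial_{\eps_\alpha}$, and read off the generator-by-generator correspondence. Your approach is more hands-on and has the advantage of exhibiting explicit cochain representatives; the paper's is slicker and cites a single standard fact.

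Two points deserve a warning. First, the assertion that the map on Hochschild cohomology is ``induced by the dg-algebra inclusion $\Oo(\C^5)\hookrightarrow \mathrm{End}^{\mathrm{dg}}(K_\bullet)$'' is imprecise: $HH^\ast$ is not covariant along arbitrary algebra maps. The correct statement is that an element of $HH^\ast(\Oo(\C^5))$ deforms $\Oo(\C^5)$ as an $A_\infty$ algebra, hence deforms its module category, and hence deforms $\mathrm{End}^{\mathrm{dg}}(K_\bullet)$; only for $0$-cochains does this agree with precomposition. Your homotopy computations are exactly the ingredients needed to make this precise. For example, the naive extension $D$ of $\partial_{z_\alpha}$ to a derivation of $\mathrm{End}^{\mathrm{dg}}(K_\bullet)$ is not a cocycle of the total Hochschild complex, but $D + \partial_{\eps_\alpha}$ is (the $0$-cochain $\partial_{\eps_\alpha}$ cancels $[\mathrm{ad}_{d_K}, D] = -\mathrm{ad}_{\partial_{\eps_\alpha}}$), and its class is the Ext-algebra generator $\eps_\alpha$; dually, the $0$-cochain $z_\alpha$ is cohomologous, via the cochain $\eps_\alpha$, to $-\mathrm{ad}_{\eps_\alpha}$, whose class is the vector field $\partial_{\eps_\alpha}$ on the Ext-algebra. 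Spelling this out would close the gap you yourself flag at the end. Second, keep track of the two meanings of $\eps_\alpha$: in the Koszul resolution the odd generators live in the conormal (the span of the $z_\alpha$), while the $\eps_\alpha$ of the lemma live in the normal bundle (the dual); the Ext-algebra generator is the cohomology class of the contraction operator $\partial_{\eps_\alpha}$, not of the wedge operator $\eps_\alpha$. With these identifications made carefully the formulas, up to sign, are exactly as you state.
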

In this statement, we are thinking of holomorphic polyvector fields on $\C^5$ as a completion of the algebra $\C[z_\alpha, w_i, \dpa{z_\alpha}, \dpa{w_i}]$ of holomorphic polyvector fields with polynomial coefficients. 
\begin{proof}

We will first prove the result for the case $k = 0$, and then explain the easy generalization. The Ext algebra of the origin inside $\C^5$ is of course $\C[\eps_1,\dots,\eps_5]$.  We need to explain how a polyvector field on $\C^5$ leads to a polyvector field on $\C^{0 \mid 5}$. 

We will let $V$ denote the vector space $\C^5$.  Then, the Ext algebra of the structure sheaf of the origin is $\wedge^\ast V$, whereas the algebra of polynomial functions on $V$ is $\Sym^\ast V^\vee$.  These algebras are Koszul dual.

It is standard that (once one takes care with completions) Koszul dual algebras have the same Hochschild cohomology.   The Hochschild-Kostant-Rosenberg theorem tells us that the Hochschild cohomology of $\wedge^\ast V$ is the algebra of poly-vector fields on the graded manifold $ V^\vee[-1]$ which is 
$$
\PV( V^\vee[-1]) =  \wedge^\ast V \otimes \what{\Sym}^\ast V^\vee
$$
where $V$ is in degree $1$ and $V^\ast$ is in degree $0$. 

Similarly, the Hochschild cohomology of $\Sym^\ast V^\vee$ is polyvector fields on $V$, which is 
$$
\PV(V) = \Sym^\ast V^\vee \otimes \wedge^\ast V.
$$
A standard (and easy) result in the theory of Koszul duality tells us that the desired map from polyvector fields on $V$ to the Hochschild cohomology of $\wedge^\ast V^\ast$ is the evident map
$$
\Sym^\ast V^\vee \otimes \wedge^\ast V \to \wedge^\ast V \otimes \what{\Sym}^\ast V^\vee.
$$
If, instead of taking the algebra of polynomial polyvector fields on $V$ we took the algebra of holomorphic vector fields, we would get a similar map 
$$
\PV_{hol}(V) = \Oo(V) \otimes \wedge^\ast V \to \what{\Sym}^\ast V^\vee \otimes \wedge^\ast V
$$
obtained by taking the power-series expansion of a holomorphic function in $\Oo(V)$.

A similar argument applies when $k > 0$. 

\end{proof}
Let us now use this lemma to analyze how the supersymmetries present in BCOV theory act on the gauge theory. Let us continue to use coordinates on $\C^5$ given by $w_i,z_\alpha$ where the brane lives on the locus $z_\alpha = 0$.  In BCOV theory, the supersymmetries  are represented by the polyvector  fields $z_\alpha$, $w_i$, $\dpa{z_\alpha} \dpa{z_\beta}$, $\dpa{w_i} \dpa{w_j}$, and $\dpa{z_\alpha} \dpa{w_j}$.  Using the map described above, these become the polyvector fields on $\C^{k \mid 5-k}$ given by $\dpa{\eps_\alpha}$, $w_i$, $\eps_\alpha \eps_\beta$, $\dpa{w_i} \dpa{w_j}$, and $\eps_\alpha \dpa{w_j}$.  We will view each such supersymmetry as giving a first-order deformation of the holomorphic Chern-Simons theory on $\C^{k \mid 5-k}$, and we will analyze each such deformation in turn.  Since the deformations of holomorphic Chern-Simons we find all arise from deformations of the dg algebra $\Omega^{0,\ast}(\C^k)[\eps_\alpha]$ into a curved $A_\infty$ algebra, we will also explain this interpretation.  

The reader should bear in mind that not all supersymmetries of type IIB supergravity correspond to supersymmetries of the $D$-brane gauge theory.  So only some of the operators that we will describe in our twisted $D$-brane theories will arise from supersymmetries of the physical theory.  We will see shortly which ones come from the physical gauge theory and which do not.  
\begin{enumerate} 
 \item The operator $\dpa{\eps_\alpha}$ is an odd derivation of the algebra $\Omega^{0,\ast}(\C^k) [\eps_\alpha]$, deforms it to a new dg algebra by adding $\dpa{\eps_\alpha}$ to the differential. This deforms the holomorphic Chern-Simons action by a quadratic term of the form
$$
\int_{\C^{k \mid 5-k}} \prod \d w_i \prod \d \eps_\alpha \op{Tr} \left( A \dpa{\eps_\alpha} A\right)
$$
where $A \in \Omega^{0,\ast}(\C^k)[\eps_\alpha][1] \otimes\gl_N$ is a field.
\item The operators $w_i$ and $\eps_\alpha \eps_\beta$ both turn  $\Omega^{0,\ast}(\C^k)[\eps_\alpha]$ into a curved $A_\infty$ algebra, where the curving is given by the elements $w_i$ and $\eps_\alpha \eps_\beta$ respectively.  These give rise to linear action functionals deforming the holomorphic CS action of the form
\begin{align*} 
A  &\mapsto \int_{\C^{k \mid 5-k}} \prod \d w_i \prod \d \eps_\alpha \op{Tr}( A) w_i\\ 
 A  &\mapsto \int_{\C^{k \mid 5-k}} \prod \d w_i \prod \d \eps_\alpha \op{Tr}(A) \eps_\alpha \eps_\beta. 
\end{align*}
\item The supersymmetry $\dpa{w_i} \dpa{w_j}$ makes the space $\C^k$ non-commutative in the $i,j$ plane, to first order.  In turn, this deforms the algebra $\Omega^{0,\ast}(\C^k)[\eps_\alpha]$ to a non-commutative algebra (to first order), where the Poisson bracket describing the deformation is simply
$$
\{f,g\} = \dpa{w_i} f \dpa{w_j} g
$$
for $f,g \in \Omega^{0,\ast}(\C^k)[\eps_\alpha]$.

The corresponding action functional is cubic and is of the form
$$
A  \mapsto \int_{\C^{k \mid 5-k}} \prod \d w_i \prod \d \eps_\alpha \op{Tr} \left( A \dpa{w_i}A \dpa{w_j} A \right).  
$$ 
\item Finally the supersymmetry $\eps_\alpha \dpa{w_j}$ is a derivation of $\Omega^{0,\ast}(\C^k)[\eps_\alpha]$, and so deforms the dg algebra to first order by adding $\eps_\alpha \dpa{w_j}$ to the differential.  The corresponding first-order deformation of the action functional is given by
$$
\int_{\C^{k \mid 5-k}} \prod \d w_i \prod \d \eps_\alpha \op{Tr} \left( A \eps_\alpha \dpa{w_j} A\right).
$$
 \end{enumerate}

In physical type IIB string theory, a $D$-brane is preserved by a $\tfrac{1}{2}$-BPS subalgebra of the full $(2,0)$ supersymmetry algebra.  Thus, we would not expect every supersymmetry that we can see in type IIB to correspond to a supersymmetry on the twisted $D$-brane gauge theory. One can ask, which of the supercharges above can correspond to residual supersymmetries of the $D$-brane gauge theory?  It turns out that not all of them can.  For instance, the deformation of holomorphic Chern-Simons on $\C^{k \mid 5-k}$ which makes some of the bosonic directions non-commutative only works if our gauge Lie algebra is $\gl(N)$.  This means that the supersymmetry in BCOV theory on $\C^5$ which implements this deformation can not possibly arise from a supersymmetry of the maximally supersymmetric gauge theory on $\R^{2k}$, because all such supersymmetries can be defined for arbitrary gauge groups. 

In a similar way, we can exclude the supersymmetries which correspond to deforming $\Omega^{0,\ast}(\C^k)[\eps_\alpha]$ into a curved $A_\infty$ algebra (these correspond to the polyvector fields $\eps_\alpha \eps_\beta$ and $w_i$).   In this case the equations of motion -- which are the Maurer-Cartan equation in the curved $A_\infty$ algebra -- read
\begin{align*} 
 \dbar A + \tfrac{1}{2} [A,A] + w_i \op{Id} &= 0\\
 \dbar A + \tfrac{1}{2} [A,A] + \eps_\alpha \eps_\beta \op{Id} &= 0 
\end{align*}
where as above $A \in \Omega^{0,\ast}(\C^k) [\eps_\alpha] \otimes \gl_N[1]$ is a field.  In both cases, the field $A = 0$ does not solve the equations of motion, and if we use $\gl_1$, there are no solutions to the equations of motion. This means that these supersymmetries can not possibly arise from supersymmetries in the physical theory, because in the physical theory the trivial field configuration is always BPS.  

The supersymmetries we have not excluded are the ones of the form $\dpa{\eps_\alpha}$ and $\eps_\alpha \dpa{w_j}$. Note that these commute to translations $\dpa{w_j}$.  We will show that these supersymmetries are precisely the residual supersymmetries present after the holomorphic twist of the maximally supersymmetric gauge theory. More precisely, we will prove the following theorem in Appendix \ref{Appendix-A}. 
\begin{theorem}\label{thm-IIB-SUSY}
The residual supersymmetry algebra which acts on the holomorphic twist of the maximally supersymmetric gauge theory in dimension $2k$ has $(5-k)(k+1)$ elements.  The holomorphic twist is holomorphic Chern-Simons theory on $\C^{k \mid 5-k}$, and the residual supersymmetries act by the vector fields $\eps_\alpha \dpa{w_i}$ and $\dpa{\eps_\alpha}$. 
\end{theorem}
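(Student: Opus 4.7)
The plan is to identify the residual supersymmetries of the holomorphically twisted $2k$-dimensional gauge theory with specific odd vector fields on $\C^{k\mid 5-k}$, by tracing through the embedding of supersymmetries in BCOV theory established earlier in the paper, and then to verify both the dimension count and the explicit action.

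The fields of the maximally supersymmetric gauge theory in dimension $2k$ carry an action of $16$ supercharges from the spin representation $S_+$ of $\mf{so}(10,\C)$, inherited by dimensional reduction from $\mscr{N}=1$, $D=10$ super Yang-Mills. After holomorphic twisting by $Q$, each such supercharge induces an odd symmetry of holomorphic Chern-Simons on $\C^{k\mid 5-k}$. To identify which symmetry, I would compose the embedding $\mf{superstring}_{IIB}^{Q} \hookrightarrow \PV(\C^5)\left\llbracket t\right\rrbracket$ of Section \ref{section-SUSY} with the restriction map $\PV(\C^5) \to \PV(\C^{k\mid 5-k})$ of the lemma in Section \ref{section-HH}. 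Under the latter map, $z_\alpha \mapsto \dpa{\eps_\alpha}$ and $\dpa{z_\alpha}\dpa{w_i} \mapsto \eps_\alpha \dpa{w_i}$, so the restrictions of the spinor summands $\Omega^{1,0}_{const}$ and $\PV^{2,0}_{const}$ of $\mf{superstring}_{IIB}^{Q}$ produce precisely the odd translations $\dpa{\eps_\alpha}$ (from the one-forms $\d z_\alpha$, with $5-k$ independent choices) and the vector fields $\eps_\alpha \dpa{w_i}$ (from the bivectors $\dpa{z_\alpha}\dpa{w_i}$, with $k(5-k)$ choices).

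The remaining fermionic elements of $\mf{superstring}_{IIB}^Q$ restricted to $\C^{k\mid 5-k}$ fall into three classes that do \emph{not} correspond to residual supersymmetries of the physical gauge theory: the curvings $w_i$ (from $\d w_i$), the curvings $\eps_\alpha \eps_\beta$ (from $\dpa{z_\alpha}\dpa{z_\beta}$), and the noncommutative deformations $\dpa{w_i}\dpa{w_j}$. Following the reasoning in the discussion preceding the theorem, these are excluded because the curvings obstruct the $A = 0$ BPS vacuum and the noncommutative deformations only make sense for gauge Lie algebra $\gl_N$ rather than for arbitrary gauge Lie algebra. The surviving residual supersymmetries are therefore the $5-k$ odd translations $\dpa{\eps_\alpha}$ together with the $k(5-k)$ vector fields $\eps_\alpha \dpa{w_i}$, totalling $(5-k)(k+1)$ as claimed.

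The main technical obstacle is making the second step rigorous: one must show that every physical supercharge of the $2k$-dimensional gauge theory maps, after twisting, into $\mf{superstring}_{IIB}^Q \subset \PV(\C^5)\left\llbracket t\right\rrbracket$ so as to land in exactly the expected summand, and that conversely every nonzero class among the $\dpa{\eps_\alpha}$ and $\eps_\alpha \dpa{w_i}$ is hit by some supercharge that is not gauge-equivalent to another. This amounts to dimensionally reducing Baulieu's argument carefully along the $5-k$ transverse complex directions; the reduction itself is routine, but one must keep track of how the $16$ supercharges in $S_+$ decompose under $\mf{sl}(k,\C) \oplus \mf{sl}(5-k,\C)$ and which of them are rendered trivial by rotations in the smaller symmetry algebra $\mf{so}(2k,\C) \oplus \mf{so}(10-2k,\C)$ of the reduced theory (compared to the full $\mf{so}(10,\C)$ of the ten-dimensional parent). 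Once this bookkeeping is done, the count $(5-k)(k+1) = (5-k) + k(5-k)$ reproduces itself as the sum of the two classes of surviving vector fields, and the matching with the explicit action on holomorphic Chern-Simons is immediate from the substitutions $z_\alpha \mapsto \dpa{\eps_\alpha}$, $\dpa{z_\alpha}\dpa{w_i} \mapsto \eps_\alpha \dpa{w_i}$.
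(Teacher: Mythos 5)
Your approach runs in the opposite direction from the paper's, and the reversal creates a genuine logical gap.

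The theorem makes two independent assertions: (a) the $Q$-cohomology of the supersymmetry algebra $\mf{siso}^R(2k)$ of the \emph{gauge theory} on $\R^{2k}$ has $(5-k)(k+1)$ fermionic elements, and (b) these act on $\C^{k\mid 5-k}$ by the stated vector fields. The paper proves (a) by direct casework for $k = 0,\dots,5$: it decomposes the $16$ supercharges $S$ under $\mf{so}(2k,\C)\oplus\mf{so}(10-2k,\C)$, computes $\ker[Q,-]$ and $\op{im}[Q,-]$ (crucially including the image of the \emph{smaller} rotation algebra available in $2k$ dimensions, not the full $\mf{so}(10,\C)$), and verifies the commutation relations. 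It then proves (b) indirectly via an invariant-theory argument: any Lie map from the residual supersymmetry algebra to $\PV(\C^{k\mid 5-k})[[t]]$ equivariant under $\mf{sl}(k)\oplus\mf{sl}(5-k)$ and consistent with the known action of translations is forced to be the stated one.

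Your argument instead restricts the already-constructed supergravity supersymmetries $\mf{superstring}_{IIB}^Q \subset \PV(\C^5)[[t]]$ to the brane via $\PV(\C^5)\to\PV(\C^{k\mid 5-k})$ and uses the exclusion discussion (curvings obstruct the $A=0$ vacuum; noncommutative deformations require $\gl_N$) to discard the extraneous ones. But this only shows which supergravity supersymmetries \emph{cannot} arise from gauge-theory supersymmetries; it does not show that the remaining $(5-k)(k+1)$ \emph{do} arise, nor that they exhaust the gauge theory's residual supersymmetry. In other words, the exclusion argument supplies necessary conditions, not an isomorphism, and it yields no information at all about the dimension of the $Q$-cohomology of $\mf{siso}^R(2k)$ — which is precisely claim (a). The identification of the gauge-theory algebra with a subalgebra of the restricted supergravity algebra is itself what needs to be checked; assuming it is circular.

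You acknowledge this as the ``main technical obstacle'' and propose to fill it by dimensionally reducing Baulieu's argument and tracking how $S$ decomposes under $\mf{sl}(k)\oplus\mf{sl}(5-k)$ and which spinors are killed by the smaller rotation algebra. But that bookkeeping \emph{is} the paper's step (a) in full: the $Q$-cohomology of the physical supersymmetry algebra has to be computed dimension by dimension (the paper does five separate cases because the decomposition of the $16$ supercharges under $\mf{so}(2k)\oplus\mf{so}(10-2k)$ and the rank of $[Q,-]$ both change with $k$). Characterizing this as ``routine'' and not carrying it out leaves the count $(5-k)(k+1)$ unproven. To repair the proposal you would need to actually perform that computation — at which point your restriction-and-exclusion scaffolding becomes a consistency check rather than the proof.
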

Implicit in the statement is that these vector fields commute with translations in the same way that the residual supersymmetries do, and that they transform in the correct way under the residual $R$-symmetry group $SU(5-k) \subset \op{Spin}(10-2k)$ and under the space-time rotations $SU(k) \subset \op{Spin}(2k)$.

\subsection{Twists of $\mscr{N}=4$ supersymmetric gauge theory}
As an example of this story, we will explain how to realize the $\mbb{P}^1$ of twists of $\mscr{N}=4$ supersymmetric Yang-Mills in terms of supergravity.

As we have seen, the holomorphic twist of the $D3$ brane  theory is holomorphic Chern-Simons theory on $\C^{2 \mid 3}$. If, as before, $w_i$ are bosonic coordinates and $\eps_\alpha$ are fermionic coordinates on this supermanifold, then the residual supersymmetry of $\mscr{N}=4$ Yang-Mills acts by the vector fields $\eps_\alpha \dpa{w_i}$ and $\dpa{\eps_\alpha}$.

The $\mbb{P}^1$ of twists described by Kapustin and Witten have the following interpretation in terms of holomorphic Chern-Simons on $\C^{2 \mid 3}$.  This interpretation was derived in \cite{Cos11b}.  The $\mbb{P}^1$ of twists introduced by Kapustin and Witten is implemented by the family of supercharges
$$
s \left(\eps_1 \dpa{w_1} + \eps_2 \dpa{w_2}  \right) +  t \dpa{\eps_3} 
$$
for complex numbers $s,t$.  There is a $\C^\times$ in the $R$-symmetry group $\mf{sl}(3)$ which scales $\eps_1,\eps_2$ with weight $1$ and $\eps_3$ with weight $-2$.  This $\C^\times$ acts on this family of supercharges by $(s,t) \mapsto (\lambda s, \lambda^2 t)$. It follows that we have a weighted $\mbb{P}^1$ of twists.

\begin{lemma}
This $\mbb{P}^1$ of twists is implemented, in BCOV theory on $\C^5$, by the family of local supersymmetries
$$
s \left( \dpa{z_1} \dpa{w_1} + \dpa{z_2} \dpa{w_2} \right) + t z_3, 
$$
where $z_i$ are coordinates on the normal $\C^3$ to $\C^2 \subset \C^5$. 
\end{lemma}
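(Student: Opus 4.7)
The plan is to apply the coupling map $\PV_{hol}(\C^5) \to \PV_{hol}(\C^{2 \mid 3})$ established in the preceding lemma and observe that the stated family in BCOV theory maps precisely onto the Kapustin--Witten family on the $D3$ brane. Since the coupling map sends polyvector fields on $\C^5$ to the first-order deformations of the holomorphic Chern--Simons theory on $\C^{2 \mid 3}$ that they induce, a $\mbb{P}^1$ of supercharges in BCOV theory implementing the Kapustin--Witten twists is simply any $\mbb{P}^1$ in $\PV_{hol}(\C^5)$ mapping to the desired family.

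Concretely, I would first recall from that lemma the defining action of the map on generators: $w_i \mapsto w_i$, $\dpa{w_i} \mapsto \dpa{w_i}$, $z_\alpha \mapsto \dpa{\eps_\alpha}$, and $\dpa{z_\alpha} \mapsto \eps_\alpha$. As the map is an algebra homomorphism, one reads off
\begin{align*}
\dpa{z_1} \dpa{w_1} + \dpa{z_2}\dpa{w_2} &\longmapsto \eps_1 \dpa{w_1} + \eps_2 \dpa{w_2}, \\
z_3 &\longmapsto \dpa{\eps_3}.
\end{align*}
Hence the BCOV supersymmetry $s \bigl( \dpa{z_1}\dpa{w_1} + \dpa{z_2}\dpa{w_2}\bigr) + t z_3$ acts on the $D3$-brane theory precisely as the Kapustin--Witten supercharge $s \bigl(\eps_1 \dpa{w_1} + \eps_2 \dpa{w_2}\bigr) + t \dpa{\eps_3}$. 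The claim that our polyvector fields are legitimate elements of the supersymmetry algebra follows from the tables in the previous subsection: $\dpa{z_i}\dpa{w_j} \in \PV^{2,0}_{const}$ and $z_3 \in \PV^{0,0}$ are both in the image of the embedding $\mf{superstring}_{IIB}^Q \hookrightarrow \PV(\C^5)\llbracket t\rrbracket$, and they lie in $\op{Ker} \partial$ (constant bivectors have zero divergence and $\PV^{0,0}$ is killed by $\partial$ for degree reasons).

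It remains to match the weighted $\mbb{P}^1$ structure. The relevant $\C^\times \subset \mf{sl}(3)$ of residual $R$-symmetries acts in BCOV theory on the normal coordinates $z_1, z_2, z_3$ with weights $(-1, -1, +2)$, which is traceless and hence lies inside $SU(3)$. Under the coupling map this translates, via $z_\alpha \mapsto \dpa{\eps_\alpha}$, to the $\C^\times$ rescaling $\eps_1, \eps_2$ by $+1$ and $\eps_3$ by $-2$, which is precisely Kapustin--Witten's weighting. Under this torus the polyvector field $\dpa{z_1}\dpa{w_1} + \dpa{z_2}\dpa{w_2}$ carries weight $+1$ while $z_3$ carries weight $+2$, so $(s,t) \mapsto (\lambda s, \lambda^2 t)$ produces an equivalent supercharge on both sides. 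The resulting weighted projective line $\mbb{P}(1,2) \cong \mbb{P}^1$ therefore agrees with the Kapustin--Witten family.

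There is no real analytic obstacle here: once the coupling map has been established in the previous lemma, the present statement is a direct translation under that map. The only mild bookkeeping issue is confirming that the $R$-symmetry weights match on the two sides and that the square-zero condition transports correctly; but the lemma already shows the map is a Lie algebra homomorphism, so $Q^2 = 0$ is automatic from the Kapustin--Witten side.
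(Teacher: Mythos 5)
Your proof is correct and takes exactly the approach the paper intends. The paper's own proof is a single sentence, ``This follows immediately from our discussion above,'' referring to the coupling map $\PV_{hol}(\C^5) \to \PV_{hol}(\C^{2\mid 3})$ ($w_i \mapsto w_i$, $\dpa{w_i}\mapsto \dpa{w_i}$, $z_\alpha \mapsto \dpa{\eps_\alpha}$, $\dpa{z_\alpha}\mapsto \eps_\alpha$) and to the identification of Kapustin--Witten's family with $s\bigl(\eps_1\dpa{w_1}+\eps_2\dpa{w_2}\bigr)+t\,\dpa{\eps_3}$ derived from \cite{Cos11b}; you simply unpack that implication, including the check that the $\C^\times$ $R$-symmetry weights transport correctly under the map, which the paper leaves implicit.
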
  
\begin{proof}
This follows immediately from our discussion above.  
\end{proof}

\subsection{Quantization}
Now we can quote one of the main results of \cite{CosLi15}.
\begin{theorem*}
BCOV theory on $\C^5$ admits a unique quantization which extends to a quantization of the theory coupling BCOV theory with holomorphic Chern-Simons on $\C^5$, with gauge group $\gl(N \mid N)$, compatible with the embedding $\gl(N \mid N)\into \gl(N+1 \mid N+1)$. 
\end{theorem*} 
This theorem tells us that our candidate for twisted type IIB supergravity can be quantized in perturbation theory, in a canonical way.

Combining the result of Baulieu with that of \cite{DijHeiJefVaf16}, we find that holomorphic Chern-Simons on $\C^5$ with gauge Lie algebra $\mf{gl}(N\mid N)$ is a twist of the theory living on a system of $N$ $D9$ branes and $N$ anti-$D9$-branes.  Thus, we can interpret this theorem as saying that there is a unique quantization of twisted type IIB supergravity which is compatible with  coupling to such a brane-antibrane system.

\section{Charges for $D$-branes}
In the physical string theory, $D$-branes are eletrically and magnetically charged under the Ramond-Ramond fields.  In this section we will see how branes in the $B$-model on $\C^5$ are magnetically charged under certain fields of  BCOV theory. This will allow us to match certain fields of BCOV theory with components of the RR field-strengths of type IIB.  

 We will argue that we can't see how branes are electrically charged in the twisted theory because BCOV theory only knows about the field-strengths of the Ramond-Ramond fields. 

First, let us recall how branes in the physical string theory are magnetically charged under RR fields.  Suppose we have a $D_{2k-1}$-brane on $\R^{2k} \subset \R^{10}$.  We can construct a de Rham current
$$
\d^{-1} \delta_{\R^{2k}} \in \Omega^{10-2k-1}(\R^{10}),
$$ 
where by $\d^{-1}$ we mean $\d^\ast \tr^{-1}$ where $\tr$ is the Laplacian.  Then, the equations of motion for type IIB in the presence of a $D_{2k-1}$ brane have the feature that
$$
F_{10-2k-1} = \d^{-1} \delta_{\R^{2k}}. 
$$
Here by $F_{l}$ we mean the $l$-form which is the field-strength of the RR $l-1$-form which we denote by $A_{l-1}$.  The RR field strengths are not independent, we have
$$
F_{l} = \ast F_{10-l}.
$$
This tells us that the RR forms themselves are not independent; rather, $A_{l-1}$ is the electro-magnetic dual of $A_{10-l-1}$.  In particular, the RR $4$-form is self-dual in this sense, leading to the constraint that $F_{5}$ is a self-dual form. 

One arises at the expression for $F_{10-2k-1}$ in the presence of a $D_{2k-1}$-brane by observing that, in the presence of a $D_{2k-1}$-brane, there are terms in the Lagrangian of the form
$$
\int_{\R^{2k}} A_{2k} + \int \d A_{2k} \ast \d A_{2k} = \int_{\R^{2k}} A_{2k} - \int A_{2k} \d F_{10-2k-1}.  
$$
Varying with respect to $A_{2k}$ tells us that $\d F_{10-2k-1}$ is a $\delta$-function on $\R^{k}$. 

\subsection{}
We will find a similar pattern in BCOV theory.  Consider a brane on $\C^k \subset \C^5$.  We would like to view the brane as deforming the action functional of BCOV theory.  The natural deformation of the action functional will arise by considering the $B$-model topological string on worldsheets which have one boundary on the brane. The first such worldsheet that one encounters is a disc with boundary on the brane, and with a single interior marked point labelled by a field of BCOV theory.

There is a subtlety that arises when one tries to make sense of this $B$-model amplitude. Because we are considering the topological string, we should integrate over the moduli of the worldsheet. In this case, the worldsheet has no moduli, but it has an $S^1$ symmetry, which needs to be taken account of carefully when constructing the amplitude.  

Before we understand what happens when we take the quotient by the $S^1$ action, let us try to understand how things work without taking the quotient.  We can do this by rigidifying the disc by putting a single marked point on the boundary, which we label by the identity operator on the brane.

Then, we find that the closed-open string map is the map
\begin{align*} 
\PV^{\ast,\ast}_c(\C^5) & \mapsto \C\\
\alpha^{i,j} & \mapsto 0 \text{ if } (i,j) \neq (5-k,k)\\
\alpha^{5-k,k} & \mapsto \int_{\C^k} \alpha^{5-k,k} \vee \Omega 
\end{align*}
where $\alpha^{5-k,k} \vee \Omega$ is the $(k,k)$ form obtained by contracting the polyvector field $\alpha^{5-k,k}$ with $\Omega$.

What happens when we take account of the circle rotation?  In a topological field theory, the $S^1$ action on the space of states associated to a circle extends to an action of the algebra $C_\ast(S^1)$ of singular chains on $S^1$. This is because two elements of $S^1$ which are on the boundary of a $1$-chain in $S^1$ act homotopically.  We can pass to homology, and find that the algebra $H_\ast(S^1) = \C[\eps]$ (where $\eps$ is of degree $-1$) acts on the space of states. An action of this algebra is given by an operator of cohomological degree $-1$ which squares to zero.   

In the topological $B$-model, the space of states associated to a circle is $\PV^{\ast,\ast}(\C^5)$ with differential $\dbar$. The action of $H_\ast(S^1)$ is given by the operator $\partial$.

The moduli space of discs with a single marked point in the middle is of virtual dimension $-1$. The moduli space can be identified with the stack $BS^1 = \op{point} /S^1$.   We would like to understand the ``fundamental class'' of this space.  We will do this by thinking about the fundamental class of a space $M / S^1$ where $S^1$ acts freely on a manifold $M$.  

The fundamental chain of $M/S^1$ can be represented by a chain $\alpha \in C_{d-1}(M)$ such that $D \alpha = [M]$. Here $D$ is the operation which takes a $k$-chain and returns the $k+1$ chain swept out by it using the $S^1$ action. 

By analogy, we conclude that if $L$ is the linear map
$$
L : \PV_c^{\ast,\ast}(\C^5) \to \C
$$  
obtained from the disc with one point in the interior and boundary on the brane $\C^k$, then
$$
L ( \partial \alpha) = \int_{\C^k} \alpha \vee \Omega. 
$$ 
In other words,
$$
L(\alpha) = \int_{\C^k} (\partial^{-1} \alpha) \vee \Omega. 
$$

How does this relate to what happens in the physical string? The only possibility is that the field $\alpha \in \PV^{4-k,k}(\C^5)$ represents a component of the field-strength of the Ramond-Ramond $2k$-form.  Thus, $\partial^{-1} \alpha$ will be a component of the RR form, which is electrically coupled to the brane by integrating it over the brane.

  We further conclude that BCOV theory, as we have presented it, does not contain a field corresponding to the RR form itself.

Let us now use our calculation of the leading term in the action of BCOV theory in the presence of a $2k-1$ brane to calculate the equations of motion in the presence of the brane.  Let us calculate to leading order, by ignoring the interaction term in BCOV theory. Let us also use the original formulation of BCOV theory involving a non-local quadratic term.  The relevant term in the action is then
$$
\int_{\C^5} \alpha^{k,4-k} \wedge \dbar \partial^{-1} \alpha{4-k,k} + \int_{\C^k} (\partial^{-1} \alpha^{4-k,k}) \vee \Omega.
$$
This leads to the equation that
$$
\dbar \alpha^{k,4-k} = \delta_{\C^k}
$$
where distributional $(5-k,5-k)$ form $\delta_{C^k}$ is turned into a distributional $(k,5-k)$ poly-vector field.  

This tells us that the $D_{2k-1}$ brane is magnetically charged under $\alpha^{k,4-k}$.  

\subsection{Introducing RR forms into BCOV theory}
We can introduce new fields into BCOV theory whose field-strengths will be certain polyvector fields that appear in the original formulation.  In this section we will sketch briefly how to do this. 

As we have seen, in the physical string, the RR $k$-form $A_k$ and the RR $8-k$-form $A_{8-k}$ are not independent, but satisfy
$$
\ast \d A_k = \d A_{8-k}.  
$$
Thus, it is not possible to have a formulation where both the fields $A_k$ and $A_{8-k}$ are treated as fundamental fields. Instead, one has to treat half of the RR forms as fundamental, for example, $A_0,A_2,A_4$. Then, only the field-strengths of $A_6, A_8$ will exist in the theory, and will be represented by $\ast \d A_2, \ast \d A_0$.   

In type IIB there is a problem because the RR $4$-form is it's own electro-magnetic dual.  Thus, one of the constraints of the theory is that $\d A_4$ is self-dual.  One can not,therefore, have formulation of type IIB as a conventional field theory where $A_4$ is treated as a fundamental field. Either one introduces the constraint (not coming from the Lagrangian) that $\d A_4$ is self-dual, or one treats the self-dual $5$-form as a fundamental field and with the additional constraint that it is closed. 

We will find that exactly this phenomenon occurs in  BCOV theory. The way we have presented it, we have treated the RR field-strengths as the fundamental fields and had the constraint that the fields are in the kernel of the operator $\partial$.  (In our formulation, we represented this by setting up the theory as a degenerate theory in the BV formalism, which presumably one can also do for physical type IIB supergravity).  

However, we can introduce a different formulation of BCOV theory where some, but not all, of the original polyvector fields will be represented as $\partial$ of something. Let us see how this works. Let's introduce a new formulation of BCOV theory where we have fields in $\PV^{0,\ast}(\C^5)$, $\PV^{1,\ast}(\C^5)$ and $\PV^{2,\ast}(\C^5)$ exactly as before. But, now we introduce a new field say  $\alpha_2 \in  \PV^{2,\ast}(\C^5)$ such that $\partial \alpha_2$ is the field in $\PV^{3,\ast}(\C^5)$ we had before. Similarly, we introduce a field  in $\alpha_3 \in \PV^{3,\ast}(\C^5)$ such that $\partial \alpha_3$ is the field in $\PV^{4,\ast}(\C^5)$ we had before. (We will ignore $\PV^{5,\ast}$ because these fields do not propagate).  The fields $\alpha_2,\alpha_3$ have gauge symmetry: they are defined modulo $\partial$ of some element of $\PV^{1,\ast}$ and $\PV^{2,\ast}$ respectively. Further, there is gauge symmetry for the gauge symmetry.  

Let us write out the full space of fields for this new formulation of BCOV, including all ghosts and antifields, etc.  We will only include the fields that can propagate.  We find the complex of field is a direct sum
$$
\left( \oplus_{i+j \le 2}  t^j \PV^{i,\ast}(X) \right)  \bigoplus \left( \oplus_{\substack{k \ge 0 \\ l-k \ge 2}} t^{-k}\PV^{l,\ast } (X) \right).
$$
In the first summand, the fields in $t^j \PV^{i,r}(X)$ are in degree $2j+i+r-2$ as before.  In the second summand, the fields in $t^{-k} \PV^{l,r}(X)$ are in degree $-2k + l+r-1$.  The differential is, as before, $\dbar + t \partial$, with the convention that in the second summand the operation of multiplying by $t$ on a field accompanied by $t^0$ yields $0$. Fields with negative powers of $t$ are gauge transformations for the Ramond-Ramond fields.  

The fields in the second summand are the RR forms, together with their gauge transformations.  There is a map from this new complex of fields to the original complex of fields of the form
\begin{align*} 
 \left( \oplus_{i+j \le 2}  t^j \PV^{i,\ast}(X) \right)  \bigoplus \left( \oplus_{l-k\ge 2} t^{-k}\PV^{l,\ast } (X) \right) & \to \oplus_{i+j \le 4}  t^j \PV^{i,\ast}(X)\\
 t^j \PV^{i,\ast}(X) \ni  \alpha  & \mapsto \alpha \in  t^j \PV^{i,\ast}(X) \\
  t^{-k}\PV^{l,\ast } (X) \ni \alpha & \mapsto \delta_{k=0} \partial \alpha \in  \PV^{l+1,\ast}(X) . 
\end{align*}
In the last row, we see how this map sends an element $\alpha \in \PV^{l,\ast}(X)$ where $l \ge 2$ (thought of as a Ramond-Ramond form field) to $\partial \alpha$, which we interpret as one of the components of the field strength.  

The kernel which defines the BV anti-bracket for BCOV theory lifts to an element in the tensor square of this new space of fields.  If we restrict the interaction $I$ from the original space of fields to this new space of fields, we find automatically a solution to the classical master equation.   The same holds at the quantum level, so that a quantization of BCOV theory in our original formulation leads to a quantization of BCOV theory with this modified space of fields.  

\section{Twisted type IIB on $AdS_5 \times S^5$}

According to the AdS-CFT correspondence, type IIB supergravity on $AdS_5 \times S^5$ is equivalent to $\mscr{N}=4$ super Yang-Mills on the $4$-sphere.   We will propose a conjectural formulation of a twist of type IIB supergravity on $AdS_5 \times S^5$.  

The symmetries of the AdS background form the supergroup $\mf{psu}(2,2 \mid 4)$.  This is the Lorentzian signature form of the group: in Euclidean, as usual, the spin representations that appear do not have a real form, so its better to use the complexified Lie algebra of supersymmetries. This is $\mf{psl}(4 \mid 4)$.  

The bosonic part of this super Lie algebra is $\mf{sl}(4) \oplus \mf{sl}(4)$.  One copy of $\mf{sl}(4)$ corresponds to rotations of $S^5$; from the point of view of $\mscr{N}=4$ supersymmetric gauge theory it is the complexified Lie algebra of the $R$-symmetry group. We will refer to this copy as $\mf{sl}(4)_{R}$.  The other copy of $\mf{sl}(4)$ is the complexified isometries of hyperbolic $5$-space $\mbb{H}^5$, or equivalently the conformal symmetries of the $\R^4$ living on the boundary of $\mbb{H}^5$. We will refer to this copy as $\mf{sl}(4)_C$.  

Let $V_C$ denote the fundamental representation of $\mf{sl}(4)_C$, and $V_R$ that of $\mf{sl}(4)_R$. We will view $\mf{psl}(4 \mid 4)$ as being the projective symmetries of $V_C \oplus \Pi V_R$.   The fermionic part of the symmetry Lie algebra $\mf{psl}(4 \mid 4)$ is $V_C^\ast \otimes V_R + V_R^\ast \otimes V_C$.  

The complexified Poincar\'{e} group 
$$\mf{iso}(4) = \left(\mf{sl}(2) \oplus \mf{sl}(2)\right) \ltimes \C^4 $$
sits inside $\mf{sl}(4)_C$.  We are interested in twisting by a supercharge which behaves well with respect to the action of this subalgebra.
\begin{lemma}
Let us write the decomposition $\mf{so}(4,\C) = \mf{sl}(2)_+ \oplus \mf{sl}(2)_{-}$. Let us choose a copy of $\mf{sl}(3)$ inside $\mf{sl}(4)_C$. Then, there is a unique, up to rotation by $\mf{sl}(2)_+$,  fermionic element $Q \in \mf{psl}(4 \mid 4)$ which is invariant under the translation Lie algebra $\C^4$,  under $\mf{sl}(2)_{-} \subset \mf{sl}(4)_{C}$, and under $\mf{sl}(3) \subset \mf{sl}(4)_R$.  Up to conjugation in $\mf{psl}(4 \mid 4)$, $Q$ is characterized by the property that it is a matrix acting on $V_C \oplus \Pi V_R$ whose image is of dimension $(0 \mid 1)$.  

Further, $[Q,Q] = 0$ and the cohomology of $\mf{psl}(4 \mid 4)$ with respect to $Q$ is $\mf{psl}(3 \mid 3)$.   
\end{lemma}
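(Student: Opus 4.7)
My plan is to proceed in two stages: first pin down $Q$ via the invariance hypotheses, then compute the $Q$-cohomology. For the first stage, I parametrize fermionic elements of $\mf{psl}(4 \mid 4)$ as off-diagonal blocks $Q = \begin{pmatrix} 0 & q_1 \\ q_2 & 0 \end{pmatrix}$ on $V_C \oplus \Pi V_R$, with $q_1 \in V_R^* \otimes V_C = \op{Hom}(V_R, V_C)$ and $q_2 \in V_C^* \otimes V_R = \op{Hom}(V_C, V_R)$. Under the embedding $\mf{iso}(4) \hookrightarrow \mf{sl}(4)_C$, twistor theory supplies a filtration $S_- \subset V_C$ with $S_-$ translation-invariant and quotient $V_C / S_- \cong S_+$; explicitly, each translation $P_x$ (for $x \in \C^4 \cong S_+^* \otimes S_-$) annihilates $S_-$ and acts on $V_C/S_- \cong S_+$ as the nilpotent map $x \colon S_+ \to S_-$. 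Computing $[P_x, Q]$ in blocks then forces $\op{im}(q_1) \subset \bigcap_x \op{Ker}(P_x) = S_-$ and forces $q_2$ to factor through $V_C / S_- \cong S_+$. Imposing $\mf{sl}(2)_-$-invariance next eliminates $q_1$ (since $S_-$ is the $\mf{sl}(2)_-$-fundamental while $V_R$ is $\mf{sl}(2)_-$-trivial), and leaves $q_2 \in \op{Hom}(S_+, V_R)$ free. Taking $\mf{sl}(3) \subset \mf{sl}(4)_R$ to be the stabilizer of a line $\ell \subset V_R$ then forces $\op{im}(q_2) \subset \ell$; the remaining parameter space $\op{Hom}(S_+, \ell) \cong S_+^*$ carries the fundamental action of $\mf{sl}(2)_+$, which is transitive on nonzero vectors, so one obtains uniqueness up to $\mf{sl}(2)_+$-rotation. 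Since $\op{im}(Q) = 0 \oplus \Pi \ell$ is a purely fermionic line in $V_C \oplus \Pi V_R$, the $(0 \mid 1)$ rank characterization follows at once.

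For the second stage, $[Q,Q] = 2 Q^2 = 0$ is immediate from the strictly lower-triangular block form of $Q$. To compute the cohomology, set $W = V_C \oplus \Pi V_R$; a direct computation gives
$$H^*(W, Q) = \frac{\op{Ker}(q_2) \oplus \Pi V_R}{\Pi\, \op{im}(q_2)} = \op{Ker}(q_2) \oplus \Pi(V_R / \ell),$$
of super-dimension $(3 \mid 3)$, which I denote by $\til{W}$. The K\"unneth-style identification $H^*(\op{End}(W), [Q,-]) \cong \op{End}(H^*(W, Q))$ for finite-dimensional complexes over a field then yields $H^*(\mf{gl}(4 \mid 4)) = \mf{gl}(\til{W}) = \mf{gl}(3 \mid 3)$. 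Since the supertrace is $[Q,-]$-closed by super-cyclicity and descends to the usual supertrace on $\mf{gl}(3 \mid 3)$, restriction gives $H^*(\mf{sl}(4 \mid 4)) = \mf{sl}(3 \mid 3)$; and since the identity $I \in \mf{sl}(4 \mid 4)$ is $[Q,-]$-closed and projects to the identity in $\op{End}(\til{W})$, the quotient delivers $H^*(\mf{psl}(4 \mid 4)) = \mf{psl}(3 \mid 3)$.

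The main obstacle I anticipate is the careful cohomological passage from $\mf{gl}$ to $\mf{sl}$ to $\mf{psl}$. Both the kernel $\langle I \rangle \subset \mf{sl}(4 \mid 4)$ and the cokernel $\C$ of the supertrace carry trivial $[Q,-]$-differential, so I must check that the long exact sequences associated to
$$
0 \to \mf{sl}(4 \mid 4) \to \mf{gl}(4 \mid 4) \to \C \to 0 \quad\text{and}\quad 0 \to \langle I \rangle \to \mf{sl}(4 \mid 4) \to \mf{psl}(4 \mid 4) \to 0
$$
have vanishing connecting maps, and that the identity and supertrace classes in cohomology match the naive ones on $\til{W}$. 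These naturality checks amount to tracking the endomorphism-cohomology isomorphism through the two short exact sequences, after which the $(3 \mid 3)$ super-dimension count forces $\mf{psl}(3 \mid 3)$ as the final answer.
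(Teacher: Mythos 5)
Your proof is correct, and its first half (pinning down $Q$) takes essentially the same route as the paper: the paper decomposes $V_C = S_+ \oplus S_-$ as an $\mathfrak{so}(4,\C)$-representation and isolates the translation-, $\mathfrak{sl}(2)_-$-, and $\mathfrak{sl}(3)$-invariant fermions, which is the split form of the twistor filtration $S_- \subset V_C$, $V_C/S_- \cong S_+$, that you use. The only substantive addition on your side is the second half: the paper dismisses the cohomology computation with ``a simple calculation,'' whereas you actually supply one, and your method --- viewing $Q$ as a differential on $W = V_C \oplus \Pi V_R$, invoking $H^\ast(\operatorname{End} W, [Q,-]) \cong \operatorname{End}(H^\ast W)$, and then threading the answer through the two short exact sequences relating $\mathfrak{gl}$, $\mathfrak{sl}$, $\mathfrak{psl}$ --- is clean and buys you a structural explanation of why $\mathfrak{psl}(3\mid 3)$ comes out, rather than a brute-force bookkeeping of invariant subspaces. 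Two small points worth tightening. First, ``the stabilizer of a line $\ell \subset V_R$'' in $\mathfrak{sl}(4)_R$ is the maximal parabolic, not $\mathfrak{sl}(3)$; what you mean is the Levi $\mathfrak{sl}(3) = \mathfrak{sl}(H)$ of a chosen splitting $V_R = \ell \oplus H$, which then fixes $\ell$ pointwise and acts on $H$ by the standard representation --- the argument that $\mathfrak{sl}(3)$-invariance forces $\operatorname{im}(q_2) \subset \ell$ then goes through. Second, in the passage from $\mathfrak{sl}$ to $\mathfrak{psl}$, the key vanishing-of-connecting-maps check is exactly that the class of $I_W$ in $H^\ast(\mathfrak{sl}(4\mid 4))$ is nonzero and equals the class of $I_H \oplus 0$ (the difference $0 \oplus I_A$ is $[Q, 0 \oplus h]$ for a contracting homotopy $h$ of the acyclic complement $A$, and $h$ lies in $\mathfrak{sl}$ automatically because odd endomorphisms have vanishing supertrace); spelling this out would close the gap you flagged as ``the main obstacle.''
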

\begin{proof}
As a representation of $\mf{so}(4,\C)$, the four-dimensional defining representation $V_C$ of $\mf{sl}(4)_C$ is decomposes as a direct sum $V_C = S_+ \oplus S_-$, where $S_{\pm}$ refer to the defining representations of the subalgebras $\mf{sl}(2)_{\pm} \subset \mf{so}(4)$.  It follows that as a representation of $\mf{so}(4) \oplus \mf{sl}(4)_R$, the space of fermionic symmetries decomposes as 
$$
V_C \otimes V_R^\ast \oplus V_C^\ast \otimes V_R = S_+ \otimes V_R \oplus S_- \otimes V_R \oplus S_+ \otimes V_R^\ast \oplus S_- \otimes V_R^\ast.$$
Here we are using the fact that $S_{\pm}$ are self-dual as representations of $\mf{sl}(2)_{\pm}$, so that $V_C$ is self-dual as a representation of $\mf{so}(4)$. 

The subspace of these fermionic symmetries that is invariant under the translations $\C^4 \subset \mf{sl}(4)_C$ is the $16$ dimensional subspace $S_+ \otimes V_C^\ast \oplus S_- \otimes V_C$.  (The fact that $S_+$ appears with $V_C^\ast$ and not $V_C$ is convention dependent: if we applied an inversion to $\mf{H}^5$, thus reversing the role of translations and the non-linear conformal symmetries of $\R^4$, we would find $S_+ \otimes V_C \oplus S_- \otimes V_C^\ast$). 

Now it is clear that any element $Q \in S_+ \otimes V_C^\ast$ which is invariant under our chosen $\mf{sl}(3) \subset \mf{sl}(4)_C$ is also invariant under translation and under $\mf{sl}(2)_-$.  Any such $Q$ can be decomposed as $Q = Q_0 \otimes v$ where $v \in V_C^\ast$ is invariant under $\mf{sl}(3)$ and $Q_0 \in S_+$ is arbitrary. This makes it clear that any two such $Q$'s are related by a rotation by $\mf{sl}(2)_+$.  

A simple calculation tells us that the cohomology of $\mf{psl}(4 \mid 4)$ with respect to $Q$ is $\mf{psl}(3 \mid 3)$. 
\end{proof}

Now, we can take a twist of type IIB supergravity in the AdS background, by putting it in the background where the bosonic ghost fields corresponding to the local supersymmetries are given constant value $Q$.

\subsection{The AdS background in BCOV theory}
We need to introduce the analog of the AdS background in BCOV theory. Our conjecture, of course, is that the twist discussed above of type IIB in the AdS background is equivalent to BCOV theory in the AdS background. 

In the physical AdS background, the only bosonic field that is non-zero (except for the metric) is the $5$-form Ramond-Ramond field-strength.  This is the field sourced by the $D3$ brane.   

We have already discussed how branes in the topological $B$-model can be magnetically coupled to the fields of BCOV theory, and so act as sources.  The natural guess is that, when we consider the twist of the AdS backgound, we need to introduce the field in BCOV theory which is sourced by a brane living on $\C^2 \subset \C^5$.  This field is an element of $\PV^{2,2}(\C^5 \setminus \C^2)$, which we can also think of as a $(3,2)$-form.  

Explicily, the $(3,2)$ sourced by a brane on $\C^2$ is the following.  Let 
$$
F \in \br{\Omega}^{3,2}(\C^5)
$$
be the unique $5$-form with tempered distributional coefficients which is harmonic outside $\C^2$, and satisfies
$$
\dbar F = \delta_{\C^2}
$$
where $\delta_{\C^2} \in \Omega^{3,3}(\C^5)$ is the de Rham current for the delta function on $\C^2$.

We can write $F$ in coordinates. Let $w_1, w_2, z_1, z_2, z_3$ be the coordinates of $\C^5$ such that our $\C^2$ is described by $z_1=z_2=z_3=0$. Let $r=\sqrt{|z_1|^2+|z_2|^2+|z_3|^2}$ be the radius of the direction normal to $\C^2$. Then  
$$
F =\frac{3}{4 i \pi^3} \d z_1 \d z_2 \d z_3 r^{-6} \left( \zbar_1 \d \zbar_2 \d \zbar_3 - \zbar_2 \d \zbar_1 \d \zbar_3 + \zbar_3 \d \zbar_1 \d \zbar_2  \right). 
$$
The normalization is such that
$$
\int_{ \sum \abs{z_i}^2 = 1}  F = 1
$$
which is necessary for the equation $\d F = \delta_{0}$, by Stoke's theorem. 

Let overload the notation $F$ and use it to indicate the polyvector field in $\PV^{2,2}(\C^5 \setminus \C^2)$ obtained from this $5$-form.   As a polyvector field, $F$ has the expression
$$
F = \frac{3}{4 i \pi^3} \dpa{w_1} \dpa{w_2} r^{-6}  \left( \zbar_1 \d \zbar_2 \d \zbar_3 - \zbar_2 \d \zbar_1 \d \zbar_3 + \zbar_3 \d \zbar_1 \d \zbar_2  \right). 
$$

We want to consider BCOV theory on $\C^5 \setminus \C^2$ in the background where the field in $\PV^{2,2}$ has value $F$, and all other fields are zero.  For this to make sense, we need to verify that $F$ satisfies the equations of motion.  The equations of motion for BCOV theory are that $F$ satisfies the Maurer-Cartan equation 
$$
\dbar F + t \del F + \tfrac{1}{2} \{F,F\} = 0.
$$
It is easy to see that each term in this equation vanishes. 
\begin{conjecture}
The twist of type IIB supergravity in the AdS background is BCOV theory on $\C^5 \setminus \C^2$ in the background where the field in $\PV^{2,2}$ has value $N F$. We are twisting with respect to the local supersymmetry $Q$ discussed above, and considering the version of the AdS background which is dual to $\mscr{N}=4$ Yang-Mills with gauge group $U(N)$. 
\end{conjecture}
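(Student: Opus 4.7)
The plan is to reduce the conjecture to the already-stated conjecture that twisted type IIB supergravity on $\R^{10}$ is minimal BCOV theory on $\C^5$, plus the open-closed map for $D3$-branes. The twisted AdS background differs from flat twisted type IIB only by the presence of $N$ $D3$-branes wrapping $\C^2 \subset \C^5$, so once one accepts the flat-space conjecture, the AdS version becomes a statement about the backreaction of $N$ branes on BCOV theory. The strategy is therefore: first, verify the classical consistency of the proposed background $NF$; second, identify it as the unique solution sourced by $N$ branes via the open-closed (magnetic charge) map computed earlier in the paper; and finally, match the residual supersymmetries on both sides with $\mf{psl}(3\mid 3)$.

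\textbf{Key steps.}
First, check that $NF \in \PV^{2,2}(\C^5 \setminus \C^2)$ satisfies the BCOV Maurer-Cartan equation $\dbar F + t\partial F + \tfrac{1}{2}\{F,F\} = 0$. Each term vanishes: $\dbar F = 0$ away from $\C^2$ by the defining property of $F$, the divergence $\partial F$ vanishes by the explicit formula, and $\{F,F\}$ is the Schouten square of a bivector-valued expression which one checks is zero in coordinates. Second, invoke the magnetic charge computation from the section on $D$-brane charges: the equation of motion in the presence of a $D_{2k-1}$-brane on $\C^k$ reads $\dbar \alpha^{k,4-k} = \delta_{\C^k}$, and for $k=2$ this singles out precisely $F$ (up to scale) as the unique tempered harmonic representative; multiplying by $N$ reflects a stack of $N$ branes, corresponding to gauge group $U(N)$ via the open-string field theory on $\C^{2\mid 3}$. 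Third, use the open-closed map $\PV(\C^5)\llbracket t\rrbracket \to HC^\ast(\Omega^{0,*}(\C^2)[\eps_1,\eps_2,\eps_3])$ constructed earlier: the background $NF$ corresponds, under this map, to the classical configuration of $N$ $D3$-branes, which on the gauge theory side is the vacuum of $\mscr{N}=4$ $U(N)$ Yang-Mills. Fourth, compute the residual supersymmetry: in the twisted AdS background, $Q$-cohomology of $\mf{psl}(4\mid 4)$ is $\mf{psl}(3\mid 3)$ by the lemma; on the BCOV side, one computes the Lie algebra of symmetries of the background $NF$ (i.e. Hamiltonian vector fields with respect to the BCOV Poisson bracket that preserve $NF$) and matches it, following the embedding $\mf{superstring}^Q_{IIB} \hookrightarrow \PV_{hol}(\C^5)[1]$ established earlier, with $\mf{psl}(3\mid 3)$ acting on $\C^5 \setminus \C^2$.

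\textbf{Main obstacle.}
The hardest step will be the third one: promoting the linearised agreement (the background sourced by a brane) to a full equivalence of theories, including interactions, ghost sectors, and the nonabelian structure in $N$. Concretely, one must show that perturbing BCOV around $NF$ on $\C^5 \setminus \C^2$ reproduces every $Q$-closed observable of twisted supergravity on $AdS_5 \times S^5$, not merely the linearised one. The subtleties include: matching the Calabi-Yau metric deformation encoded by $F$ on the BCOV side to the full $AdS_5 \times S^5$ metric together with the RR $5$-form on the gravity side (since BCOV only sees field-strengths, not potentials); handling the fact that the $\mf{sl}(2,\Z)$ $S$-duality of type IIB is broken in perturbation theory and only $\mf{sl}(2,\C)$ acts on BCOV; and justifying the passage from $\PV^{3,2}$-data to the self-dual $5$-form of the physical theory. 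A full proof would presumably proceed by combining the Koszul duality argument used for the open-closed map with a careful analysis, near the $\C^2$ locus, of how BCOV fields organise into residues that reproduce $\mscr{N}=4$ single-trace operators; this is precisely the analysis the authors defer to subsequent work.
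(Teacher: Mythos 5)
The statement is labeled a conjecture in the paper, and what the paper offers as evidence for it is exactly what you outline: the Maurer--Cartan check for $F$ (carried out just before the conjecture is stated), the magnetic-source characterization of $F$ from the $D$-brane charge section, and the $\mf{psl}(3\mid 3)$ residual-supersymmetry matching (the lemma and proposition immediately following, with the technical part proved in Appendix B). Your identification of the open-closed step as the hard part coincides with the authors' explicit deferral of that analysis to later work (``We will explain this point in detail elsewhere''), so your proposal faithfully reconstructs the paper's own strategy and its acknowledged gap.
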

The idea behind this conjecture is the following. Solving the equations of motion of supergravity when we place a $D3$ brane on $\R^4 \subset \R^{10}$ gives a metric with singularities along $\R^4$ as well as a $5$-form flux.  The AdS background is obtained by analyzing the near horizon limit of this geometry.

  We expect that the  metric will not play a role in the twisted theory, so that twisted type IIB in the presence of a $D3$ brane will be BCOV theory on $\C^5 \setminus \C^2$ with the $(3,2)$-form we discussed above. This configuration is homogeneous under scaling of the normal direction to the $D3$ brane on $\C^2$ (although when we scale these directions we also rescale the holomorphic volume form on $\C^5$, which can be counteracted by a change in the string coupling constant).  Thus, passing to the near horizon limit has no effect.   

\subsection{Matching symmetries}
To provide some evidence for this conjecture, we will match the symmetries present in the physical theory with those in the twisted theory.
 
The symmetries of the $AdS_5$ background of type IIB form the Lie algebra $\mf{psl}(4 \mid 4)$ (after complexifying). As we have seen, the $Q$-cohomology of $\mf{psl}(4 \mid 4)$ with respect to the $Q$ we have chosen is $\mf{psl}(3 \mid 3)$. We thus expect to see a copy of $\mf{psl}(3 \mid 3)$ living inside the dg Lie algebra of polyvector fields.  As we have seen, the $Q$-cohomology of $\mf{psl}(4 \mid 4)$ with respect to the $Q$ we have chosen is $\mf{psl}(3 \mid 3)$. We thus expect to see this $Q$-cohomology appearing in the dg Lie algebra describing BCOV theory in this background.

Recall that the dg Lie algebra describing BCOV theory on a Calabi-Yau $X$ is $\PV^{\ast,\ast}(X)\left\llbracket t\right\rrbracket[1]$, with differential $\dbar + t \del$ and with the Schouten Lie bracket.  If we choose a background field $\alpha \in \PV^{\ast,\ast}(X)$, then we change the differential to $\dbar + t \del + \{\alpha,-\}$.  This only makes sense when $F$ is an odd element of the Lie algebra satisfying the Maurer-Cartan equation.

If $\alpha$ is a degree $1$ element, then the new dg Lie algebra obtained by adding $\{\alpha,-\}$ to the differential continues to be $\Z$-graded.  If $\alpha$ is odd, but not of degree $1$, then we find a $\Z/2$ graded Lie algebra.

In the case of interest, 
$$\alpha = N F \in \PV^{2,2} (\C^5 \setminus \C^2)$$ 
where $F$ is as above.  As we have seen, $F$ satisfies the Maurer-Cartan equation.  However, $F$ of degree $3$ in the dg Lie algebra, and so the new deformed Lie algebra will be a $\Z/2$ graded dg Lie algebra.

Let us explicitly write down a copy of $\mf{psl}(3 \mid 3)$ in this dg Lie algebra. We will start with the case $N = 0$, so the differential is only $\dbar + t \del$, and then see that we can lift this copy of $\mf{sl}(3 \mid 3)$  to the deformation when $N$ is non-zero.

To do this, let us choose coordinates $z_i$ on $\C^3$ and $w_i$ on $\C^2$, as above.  We will assign to every element of $\mf{psl}(3 \mid 3)$ a polyvector field on $\C^5 \setminus \C^2$ which is in the kernel of the operator $\dbar + t \del + \{F,-\}$, and the Schouten bracket of these polyvector field will match the Lie bracket on $\mf{psl}(3 \mid 3)$.  

One of the two bosonic copies of $\mf{sl}(3)$ inside $\mf{psl}(3 \mid 3)$ is the one corresponding to $R$-symmetry in the $4d$ gauge theory.  In BCOV theory, this copy of $\mf{sl}(3)$ is given by the vector fields rotating $\C^3$, of the form $\sum A_{ij} z_i \dpa{z_j}$ the matrix $A_{ij}$ is in $\mf{sl}(3)$.  Note that these vector fields are holomorphic and divergence free.

The other bosonic copy of $\mf{sl}(3)$ corresponds to conformal transformations of $\C^2$, by which we mean holomorphic vector fields on $\C^2$ which extend to holomorphic vector fields on $\mbb{CP}^2$.  Such vector fields are given by:
\begin{enumerate}
\item Rotations of $\C^2$ by $\mf{sl}(2)$. A rotation by $A \in \mf{sl}(2)$ corresponds to the vector field $A_{ij} w_i \dpa{w_j}$ on $\C^5 \setminus \C^2$. 
\item Translations of $\C^2$. These correspond to the vector fields $\dpa{w_i}$ on $\C^5 \setminus \C^2$.
\item Scaling of $\C^2$.  This corresponds to the vector field
$$
\sum w_i \dpa{w_i} - \frac{2}{3} \sum z_i \dpa{z_i}.
$$
Note that this vector field on $\C^5 \setminus \C^2$ is divergence free.
\item Special conformal transformations.  These are the vector fields
$$
w_i \left( \sum_j w_j \dpa{w_j} - \sum_{k } z_k \dpa{z_k} \right)
$$
on $\C^5 \setminus \C^2$, for $i = 1,2$.   Note that these vector fields are divergence free.
\end{enumerate}
Next, let us describe the other copy of $\mf{sl}(3)$. This is simply by rotation in $\C^3$, so the vector fields are given by the formula $A_{ij} w_i \dpa{w_j}$ where $A_{ij} \in \mf{sl}(2)$.

Let us next write down the $18$ odd (fermionic) elements when $N = 0$. These are:
\begin{enumerate} 
 \item $z_i \in \PV^{0}(\C^5 \setminus \C^2)$.  These $3$ elements are ordinary supersymmetries (as opposed to superconformal symmetries) in the dual gauge theory.
 \item  $\dpa{w_i} \dpa{z_j} \in \PV^{2,0}(\C^5 \setminus \C^2)$.  These $9$ elements correspond again to ordinary supersymmetries. 
 \item The $6$ elements $z_i w_j \in \PV^{0}(\C^5 \setminus \C^2)$. These are superconformal symmetries in the dual gauge theory.
 \item The $3$ elements 
$$\dpa{z_i} \left(\sum_l w_l \dpa{w_l} -  \sum_k z_k \dpa{z_k}\right) \in \PV^{2,0}(\C^5 \setminus \C^2).$$
Note that these bivectors are in the kernel of $\partial$. These correspond to superconformal symmetries in the dual gauge theory.
\end{enumerate}
\begin{lemma}
The $16$ even and $18$ odd  elements we have written down in $\PV^{\ast}(\C^5 \setminus \C^2)$ form a copy of $\mf{psl}(3 \mid 3)$ inside the dg Lie algebra $\PV^{\ast}(\C^5\setminus \C^2)\left\llbracket t\right\rrbracket[1]$, with differential $\dbar + t \del$.  
\end{lemma}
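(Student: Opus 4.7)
The plan is to verify two claims: (a) every listed polyvector field is closed under $\dbar + t\partial$; and (b) the Schouten brackets reproduce the Lie bracket of $\mf{psl}(3\mid 3)$.

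For (a), every listed element is manifestly holomorphic (no $\d\zbar_i$ or $\d\br{w}_a$ factors appear), so $\dbar$-closure is automatic. Since every element lies purely in $t^0 \PV^{\ast}$ with no higher-$t$ components, $(\dbar + t\partial)$-closure reduces to $\partial$-closure, that is, vanishing of divergence. The scalars $z_i$ and $z_i w_a$ are trivially in the kernel of $\partial\colon \PV^0 \to \PV^{-1}=0$. The rotations $\sum A_{ij}z_i\dpa{z_j}$ and $\sum A_{ij}w_i\dpa{w_j}$ with traceless $A$ have divergence $\op{tr}(A)=0$, and the translations $\dpa{w_i}$ are constant. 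The scaling vector field $\sum w_l\dpa{w_l}-\tfrac{2}{3}\sum z_k\dpa{z_k}$ has divergence $2-\tfrac{2}{3}\cdot 3 = 0$, and each special conformal vector field $w_i\bigl(\sum_j w_j\dpa{w_j}-\sum_k z_k\dpa{z_k}\bigr)$ has divergence $3 w_i - 3 w_i = 0$. Constant bivectors $\dpa{w_i}\dpa{z_j}$ lie in the kernel of $\partial$ by inspection. The only case requiring more care is the fermionic bivector $\dpa{z_i}\wedge v$ with $v = \sum_l w_l\dpa{w_l} - \sum_k z_k\dpa{z_k}$; here the Koszul formula for $\partial$ on wedges produces a divergence contribution that cancels against the Schouten bracket $[\dpa{z_i},v]=-\dpa{z_i}$, leaving zero as asserted in the text.

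For (b), I would exploit $\mf{sl}(3)_R\times \mf{sl}(3)_C$-equivariance of the full computation to reduce to a small number of representative brackets. The bosonic part splits cleanly: the elements $\sum A_{ij}z_i\dpa{z_j}$ with $A \in \mf{sl}(3)$ give $\mf{sl}(3)_R$ directly, and the translations, $\mf{sl}(2)$-rotations, dilation, and special conformal generators on $\C^2$ assemble into the classical M\"obius realization of the complexified conformal algebra $\mf{sl}(3,\C)=\mf{sl}(3)_C$, satisfying the standard commutation relations. These two $\mf{sl}(3)$'s commute under the Schouten bracket since they act on disjoint coordinates (the $z_k\dpa{z_k}$ piece appearing in the dilation and special conformal generators is $\mf{sl}(3)_R$-invariant). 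The eighteen fermions split into two nonuples: the ``$Q$-type'' supercharges $\{z_i\}$ together with $\{\dpa{w_a}\dpa{z_b}\}$, and the ``$S$-type'' supercharges $\{z_i w_a\}$ together with $\{\dpa{z_i} v\}$. Each nonuple transforms as $\mathbf{3}\otimes \overline{\mathbf{3}}$ under $\mf{sl}(3)_R \times \mf{sl}(3)_C$, matching the fermionic content of $\mf{psl}(3\mid 3)$.

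By equivariance it suffices to check a few representative brackets: the $Q$-$Q$ bracket $\{z_i,\dpa{w_a}\dpa{z_b}\}=\delta_{ib}\dpa{w_a}$ produces a translation; the $S$-$S$ bracket $\{z_iw_a,\dpa{z_j}v\}$ produces a special conformal generator; and the $Q$-$S$ brackets produce combinations of rotations and the dilation. The main obstacle will be the last class: one must verify that the output of every $Q$-$S$ Schouten bracket lies purely in $\mf{sl}(3)_R \oplus \mf{sl}(3)_C$ with no leftover ``trace'' piece, that is, no constant function and no pure multiple of $\sum z_k\dpa{z_k}$ in excess of what is permitted. This is precisely the obstruction that distinguishes $\mf{psl}(3\mid 3)$ from a nontrivial central extension, and its vanishing is enforced by the specific coefficient $\tfrac{2}{3}$ in the dilation vector field together with the matching normalizations of the fermionic bivectors.
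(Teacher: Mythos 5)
Your approach is the same as the paper's, which simply asserts "this is easy to verify by an explicit calculation"; you supply the missing details of that calculation. Your closure checks under $\dbar + t\partial$ are correct (the direct divergence computation for $\dpa{z_i}\wedge v$ is cleaner than invoking the Koszul/BV relation, but both give zero), and your equivariance strategy for the brackets is the right way to organize the work. Your proposal stops short, however, of actually carrying out the $Q$--$S$ brackets that you flag as "the main obstacle," and I want to point out that this is not actually a genuine obstacle once one writes it down. For example,
\[
\{\, z_i w_a,\ \dpa{w_b}\dpa{z_c}\,\} \;=\; \pm\bigl(-\delta_{ic}\, w_a \dpa{w_b} \;+\; \delta_{ab}\, z_i \dpa{z_c}\bigr),
\]
which for $a\neq b$ or $i\neq c$ is manifestly a traceless rotation, and for $a=b$, $i=c$ equals
\[
-\,w_a\dpa{w_a} + z_i\dpa{z_i}
\;=\;
-\tfrac{1}{2}\Bigl(\textstyle\sum_l w_l\dpa{w_l} - \tfrac{2}{3}\sum_k z_k\dpa{z_k}\Bigr) + (\text{traceless}),
\]
where the first term is the dilation in your list. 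Likewise $\{z_i,\dpa{z_j}v\}$ is $z_i\dpa{z_j}$ for $i\neq j$, and for $i=j$ decomposes as the dilation plus an $\mf{sl}(3)_R$-rotation. So the coefficient $\tfrac{2}{3}$ is indeed exactly what makes the $Q$--$S$ brackets close into the span, confirming your intuition; this is a computation, not an obstacle. The bracket of two bivectors from the $Q$- and $S$-nonuples ($\{\dpa{w_a}\dpa{z_b},\dpa{z_j}v\}$ and $\{\dpa{z_i}v,\dpa{z_j}v\}$) must vanish, and one can check directly that it does.

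One small inaccuracy in your discussion: worrying about a "constant function" appearing in the output of a $Q$--$S$ bracket is unnecessary for degree reasons. In $\PV^{\ast}[1]$ the fermions live in degrees $-1$ (functions) and $+1$ (bivectors), and the bracket of a degree $-1$ element with a degree $+1$ element lands in degree $0$, i.e.\ vector fields; a constant function has degree $-1$ and cannot arise there. The genuine concern about the dilation coefficient (whether the "trace'' part of $-w_a\dpa{w_a}+z_i\dpa{z_i}$ is a scalar multiple of the dilation you wrote down, and nothing else) is the right thing to check, and it works.
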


\begin{proof}
This is easy to verify by an explicit calculation. 
\end{proof}

Of course, this is only the $N = 0$ case.  The more interesting case is when $N \neq 0$.  It turns out that not all of the polyvector fields we have written down commute with $N F$. However, we have the following.
\begin{proposition}\label{prop-NF}
The polyvector fields giving a copy of $\mf{psl}(3 \mid 3)$ have $N$-dependent corrections which make them  closed under $\dbar + t \del + \{F,-\}$.  Further, inside the cohomology of $\PV(\C^5 \setminus \C^2)\left\llbracket t\right\rrbracket [1]$ with differential $\dbar + t \del + N \{F,-\}$, these polyvector fields again form a copy of $\mf{psl}(3 \mid 3)$. 
\end{proposition}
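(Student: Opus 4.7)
The plan is to produce, for each generator $X_0$ of the $N=0$ copy of $\mf{psl}(3\mid 3)$ from the preceding lemma, an $N$-dependent lift $X(N) = X_0 + N X_1 + N^2 X_2 + \cdots$ satisfying $D_N X(N) = 0$ for the deformed differential $D_N = \dbar + t\partial + N\{F,-\}$, and then to verify that the Schouten brackets of these lifts realize the $\mf{psl}(3\mid 3)$ relations in $D_N$-cohomology. The key preliminary observation is that $\{F, F\} = 0$ on the nose: since $F = \partial_{w_1}\partial_{w_2}\phi$ with $\phi$ a $(0,2)$-form in the $z, \zbar$ variables only, any surviving term in the Schouten bracket would require either wedging three $\partial_{w_i}$'s together or taking a $w$-derivative of $\phi$, and both vanish. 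Hence $D_N^2 = 0$ and obstructions to lifting at each order live in $(\dbar + t\partial)$-cohomology in the standard way.

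Next I would construct the lifts order by order. For many generators the lift is trivial because $\{F, X_0\} = 0$ outright: the translations $\partial_{w_i}$, the rotations $A_{ij}w_i\partial_{w_j}$ with $A\in \mf{sl}(2)$, and the rotations $A_{ij}z_i\partial_{z_j}$ with $A \in \mf{sl}(3)_R$ manifestly preserve $F$ (the latter because $\phi$ is a constant multiple of the $\mf{sl}(3)_R$-invariant Bochner-Martinelli kernel on $\C^3$). A short Lie-derivative calculation shows that the scaling vector field $\sum w_i\partial_{w_i} - \tfrac{2}{3}\sum z_i\partial_{z_i}$ also preserves $F$, the $-2$ weight it induces on $\partial_{w_1}\partial_{w_2}$ cancelling the $+2$ weight it induces on $r^{-6}$. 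For the remaining generators---the two special conformal vector fields $w_i\big(\sum_j w_j\partial_{w_j} - \sum_k z_k\partial_{z_k}\big)$, the fermionic scalars $z_i w_j$, and the fermionic bivectors $\partial_{z_i}\big(\sum_l w_l\partial_{w_l} - \sum_m z_m\partial_{z_m}\big)$---the bracket $\{F, X_0\}$ is nonzero, but its coefficient is of the form $z_k\phi$ (times an explicit constant-coefficient polyvector direction), which sits in a nontrivial $\mf{sl}(3)_R$-isotype. I would then invoke the following cohomological fact: under the identification of $H^{0,2}_{\dbar}(\C^3\setminus 0)$ with the Laurent classes $\{z^{-\vec n} : n_i \geq 1\}$, the Bochner-Martinelli class $\phi$ represents the minimal monomial $(z_1z_2z_3)^{-1}$, so $z_k\phi$ corresponds to a Laurent element with a zero exponent in some $z_i$ and is therefore $\dbar$-exact on $\C^3 \setminus 0$. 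An explicit primitive $\psi_k$ with $\dbar\psi_k = z_k\phi$ then gives the first-order correction $X_1 = \sum_k \psi_k\cdot(\text{polyvector})$, and by the same equivariance the obstructions at higher orders are again exact, so the process terminates after finitely many steps.

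With the lifts in hand, the Schouten bracket of two $D_N$-closed cocycles is automatically $D_N$-closed by the graded Jacobi identity for Schouten, hence descends to $D_N$-cohomology. Any deviation from the $N=0$ bracket relations is $D_N$-exact, and one checks directly---using the equivariance of the construction under $\mf{sl}(3)_R \oplus \mf{sl}(3)_C$---that no new central term is produced, so that the induced bracket on cohomology is precisely that of $\mf{psl}(3\mid 3)$. The principal obstacle is the second step: producing the primitives consistently at every order and controlling their $t$-dependence, since the $t\partial$ tail of $D_N$ can in principle propagate an obstruction to arbitrarily high powers of $t$. This is handled cleanly either by writing down the explicit antiderivatives available from the Bochner-Martinelli structure of $\phi$ (for instance via formulas of the form $\dbar(r^{-4}\eta) \sim z_k\phi$ for suitable antiholomorphic $\eta$), or by invoking $U(3)$-invariant Hodge theory on $\C^3\setminus 0$ to produce canonical primitives whose $t$-expansion is cohomologically controlled.
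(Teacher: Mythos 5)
The overall strategy you propose---lifting each generator order by order in $N$ and controlling obstructions by equivariance and explicit primitives---is a genuinely different route from the paper's, which instead uses a spectral sequence associated to a filtration of $\PV(\C^5\setminus\C^2)\llbracket t\rrbracket$ by holomorphic polyvector degree. In that spectral sequence the $E_1$ page is the $\dbar + t\partial$-cohomology, the $E_2$ differential is literally $\{[F],-\}$ with $[F] = z_1^{-1}z_2^{-1}z_3^{-1}\partial_{w_1}\partial_{w_2}$, and the sequence degenerates thereafter for degree reasons; this automatically packages the iteration you are trying to run by hand. The key computational input (the identification of $[F]$ with the minimal Laurent monomial, and the vanishing $z_k\cdot(z_1z_2z_3)^{-1}=0$ in the module structure on $H^2_{\dbar}(\C^3\setminus 0)$) is the same in both arguments.

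That said, there is a concrete error in the easy part of your case analysis. You claim the rotations $A_{ij}z_i\partial_{z_j}$ with $A\in\mf{sl}(3)_R$ ``manifestly preserve $F$'' because $\phi$ is ``$\mf{sl}(3)_R$-invariant''. The Bochner-Martinelli form is $U(3)$-invariant, not $GL(3,\C)$-invariant: the factor $r^{-6}$ depends on the Hermitian metric and is not preserved by a general traceless holomorphic vector field. Take for instance $A = \mathrm{diag}(1,0,-1)$; then $\mc{L}_{z_1\partial_{z_1}-z_3\partial_{z_3}}(r^{-6}) \neq 0$, so $\{F, z_1\partial_{z_1}-z_3\partial_{z_3}\} \neq 0$ at the cochain level (it is only $\dbar$-exact). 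Thus most of your $\mf{sl}(3)_R$ generators do require $N$-dependent corrections, contrary to what you assert; the statement that $[F]$ is $\mf{sl}(3,\C)_R$-invariant is a cohomological fact, exactly the kind of fact the paper's spectral sequence is built to exploit. This is why the $N$-dependent corrections in the proposition's statement are not just for the fermionic and special-conformal generators.

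The remaining gap is the one you flag yourself: at each order the obstruction lives in $D_0 = (\dbar + t\partial)$-cohomology, not merely in $\dbar$-cohomology, so producing a $\dbar$-primitive $\psi$ of $z_k\phi$ is not by itself enough---one also has to kill the resulting $t\,\partial\psi$ tail, and so on to all orders. Your suggested remedies (explicit Bochner-Martinelli primitives with good divergence properties, or a $U(3)$-invariant Hodge-theoretic choice) are plausible but are asserted rather than carried out, and they would require genuine work to control uniformly. The spectral sequence argument sidesteps both this issue and the error above in one stroke: one never needs cochain-level invariance or explicit primitives, only that the relevant classes lie in the kernel of $\{[F],-\}$ on the $E_1$ page, and that there are no further differentials.
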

Since this is a little technical, we will place the proof in Appendix \ref{appendix-B}.

\section{Twisted type IIA supergravity}

We will analyze the possible twists of type IIA supergravity, and give a conjectural description for some of them.  We will find that although there are $SU(5)$-invariant twists of type IIA, we do not have a candidate description, but only a description of certain $SU(4)$-invariant twists.  Our conjecture satisfies a number of checks. Our candidate for twisted type IIA has the correct residual supersymmetry; it is related to a twist of the theories on $D$ branes; it is $T$-dual to our candidate twist for type IIB; and it becomes the same as type IIB upon reduction to $8$ dimensions. We will prove that twisted type IIA can be quantized in perturbation theory, using a variant of the results of \cite{CosLi15}. 

\subsection{The twist}
Let us start by describing the twist we will consider. We will continue to use the notation introduced in section \ref{section-SUSY} for representations of $\mf{so}(10,\C)$: that is, the vector representation is $V$ and the two spin representations are $S_+$ and $S_-$. The supertranslation algebra algebra for type IIA supergravity is 
$$
\mc{T}^{(1,1)} = V \oplus \Pi ( S_+ \oplus S_- ). 
$$
The spinors $S_+$ and $S_-$ commute with each other, and the commutator on each space of spinors is given by the map
$$
\Gamma : S_{\pm} \otimes S_{\pm} \to V. 
$$
This Lie algebra has a central extension by the space $\Omega^\ast(\R^{10})[1]$ corresponding to the fundamental string. There is more than one way to represent this central extension (although different ways to represent it are equivalent). We will choose the cocycle defined by the map
$$
\Gamma_{\Omega^1} : S_- \otimes S_- \to \Omega^1 \subset \Omega^\ast(\R^{10})[1]. 
$$
We will call this centrally-extended algebra $\mf{superstring}_{IIA}$. 

Next, let us discuss the twist we will use.  This twist will be $\mf{sl}(4)$ invariant and not $\mf{sl}(5)$ invariant. To describe the supercharge we use, let us decompose $S_+$ and $S_-$ as representations of $\mf{sl}(5)$, by
\begin{align*} 
 S_+ &= \Omega^{0,ev}_{const}\\
 S_- &= \Omega^{0,odd}_{const}.
\end{align*}
The supercharge we choose is 
$$
Q = 1 + \d \zbar_1 \in \Omega^{0,0}_{const} \oplus \Omega^{0,1}_{const} \subset S_+ \oplus S_-. $$
This supercharge is square zero, even once we include the central extension.

\subsection{A conjectural description of the twist}
Our conjecture is the following.
\begin{conjecture}
This twist of the type IIA string is represented by a topological string on $\R^2 \times \C^4$, which in the $\R^2$ direction is the topological $A$-string and in the $\C^4$-direction is the topological $B$-string. 
\end{conjecture}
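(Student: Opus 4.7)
The plan is to support this conjecture via a combination of worldsheet and spacetime arguments that parallel the checks already carried out for type IIB earlier in the paper. First I would decompose the supercharge as $Q = Q_B + Q_A$, where $Q_B = 1 \in S_+$ is the $SU(5)$-invariant component and $Q_A = \d \zbar_1 \in S_-$ is the additional deformation. On the worldsheet, $Q_B$ implements the B-twist of the $\mscr{N}=(2,2)$ sigma model on $\C^5$, and the key claim to verify is that turning on $Q_A$ re-twists the single complex direction $z_1$ from B-type to A-type, leaving the remaining $\C^4$ B-twisted. Concretely, one expects $Q_A$ to add to the B-model BRST operator a contribution proportional to $\partial/\partial z_1$ on one side of the worldsheet, which is exactly the modification that converts a B-model worldsheet into an A-model worldsheet in one complex direction.

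The second step is to match at the level of closed string field theories. One writes down the closed string field theory of the conjectured hybrid model---heuristically a variant of BCOV theory on $\C^4$ tensored with the de Rham complex of $\R^2$---and compares its linearized BRST complex, $L_\infty$ structure, and Poisson kernel with those of a candidate twisted IIA supergravity theory obtained by applying the recipe of Section~2 to the IIA action. This $L_\infty$ matching is the type IIA analog of the identification of twisted IIB with BCOV on $\C^5$, and the decomposition into an A-factor and a B-factor suggests organizing the comparison by tensoring the polyvector field data on $\C^4$ with the de Rham complex on $\R^2$ as the appropriate field content.

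The third step is a sequence of consistency checks, each outlined in the subsequent sections of the paper: (i) compute the $Q$-cohomology of $\mf{superstring}_{IIA}$, including the fundamental-string central extension, and identify it with the residual symmetries of the hybrid topological string field theory; (ii) analyze $D_{2k}$-branes wrapping $\R \times \C^k \subset \R^2 \times \C^4$ and prove, by dimensionally reducing Baulieu's result exactly as in the IIB argument, that the holomorphic-topological twist of the $D_{2k}$ gauge theory agrees with the open-string field theory on the corresponding hybrid brane; (iii) verify $T$-duality between the twist of IIA on $\R \times S^1 \times \C^4$ and the twist of IIB on $\C^\times \times \C^4$, where mirror symmetry between the $A$-model on $S^1$ and the $B$-model on $\C^\times$ supplies the key input; and (iv) check that both twisted theories reduce to the same theory in nine dimensions.

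The main obstacle is the first step: making precise the worldsheet effect of the $\d \zbar_1$ component of $Q$ in the RR-deformed type IIA sigma model. Twisting by a Ramond-Ramond bosonic ghost is not a standard manipulation of the worldsheet, so a direct worldsheet proof is out of reach---this is the same obstacle that prevents a proof of the IIB conjecture. The realistic plan is therefore to take the collection of consistency checks in the third step as the primary evidence, rather than attempting a first-principles worldsheet derivation.
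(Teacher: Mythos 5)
Since the statement is a conjecture, the paper does not prove it; it offers evidence, and your proposal identifies exactly the evidence the paper gives: the $Q$-cohomology of $\mf{superstring}_{IIA}$ (with the fundamental-string extension) matching the residual symmetries of the hybrid theory, the identification of $D_{2k}$-brane twists with branes on $\R\times\C^k$ via dimensional reduction of Baulieu's result, $T$-duality with the IIB twist via the $A$-model on $\R\times S^1$ equaling the $B$-model on $\C^\times$, and agreement of the two supergravity reductions to nine dimensions. This is essentially the same approach as the paper, and your closing paragraph correctly diagnoses that a first-principles worldsheet derivation is out of reach — the paper does not attempt your step~1 either.

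One small caution on your step~1: the statement that adding $Q_A = \d\zbar_1$ ``re-twists $z_1$ from B-type to A-type'' is supported indirectly in the paper only at the level of the supersymmetry algebra, where one computes that $\op{Im}[Q,-] \subset V$ now contains both $\partial_{z_1}$ and $\partial_{\zbar_1}$ (so the $z_1$-plane becomes topological) but still only $\partial_{\zbar_i}$ for $i>1$. The claim that this corresponds to the $A$-twist on that plane, rather than some other topological twist, is an additional identification you would need to argue separately; the paper infers it from the $T$-duality and brane checks rather than from a worldsheet mechanism, so your step~3 is carrying more of the load than your outline suggests.
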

Of course, our main focus is not the twist of the string theory but of the supergravity theory.  To describe this we need to understand the closed-string field theory associated to the low-energy limit of the topological $A$-string. 

The topological $A$-model is described by Gromov-Witten theory. In the field theory limit, we will discard all instanton contributions, and only consider the Gromov-Witten theory of constant maps to the target Calabi-Yau $X$.  In this limit, the closed-string states is the de Rham cohomology $H^\ast_{dR}(X)$.  The three-point function for three  closed string states is simply $\int_X \alpha \wedge \beta \wedge\gamma$. 

This tells us that the closed-string field theory associated to the topological $A$-model is a topological theory on the target, unlike the closed-string field theory from the $B$-model.  After all, the de Rham cohomology of $X$, and the operations of wedging and integrating forms, are invariant under diffeomorphisms of $X$.

How would one describe the closed-string field theory associated to a mixture of the topological $A$ and $B$-models?  Let us propose an answer in the case of interest, when $X = \R^2 \times \C^4$.   The space of states of the string is the complex
$$
\Omega^\ast(\R^{2}) \what{\otimes} \PV(\C^4)
$$  
where we use the notation $\what{\otimes}$ to indicate the completed projective tensor product. This is essentially a notational shorthand for saying that the space of states is the sections of a bundle on $\R^2 \times \C^4$ which is the exterior algebra of $T^\ast \R^{2} \oplus (T^\ast)^{0,1} \C^4 \oplus T^{1,0} \C^4$. 

The differential on the space of states of the string is $\d_{dR}^{\R^2} + \dbar^{\C^4}$, a sum of the de Rham operator on $\Omega^\ast(\R^{2})$ and the Dolbeault operator on $\C^4$. 
As is usual in topological string theory, the space of states of the string has an action of the homology algebra $H_\ast(S^1)$ of the group $S^1$, coming from rotation of the string. The non-identity element in $H_1(S^1)$ acts by the operator $\partial$ on $\PV(\C^4)$.  The $S^1$-equivariant states of the string -- which become the fields of the corresponding closed-string field theory -- are 
$$
\Omega^\ast(\R^{2}) \what{\otimes} \PV(\C^4)\left\llbracket t\right\rrbracket 
$$
with differential $\d_{dR}^{\R^2} + \dbar^{\C^4} + t \partial^{\C^4}$.  

The fields of the closed string field theory are these equivariant states, with a shift of $[2]$. The theory is, as with the other versions of BCOV theory we consider, a degenerate theory in the BV formalism.  Thus we need to specify the kernel for the odd Poisson structure. This kernel is
$$
\pi = (\partial^{\C^4} \otimes 1) \delta_{Diag}
$$
where the delta function $\delta_{Diag}$ is a form with distributional coefficients on $(\R^2 \times \C^4)^2$, and we are using the identification
$$
\Omega^\ast(\R^2) \what{\otimes} \PV(\C^4) \iso \Omega^\ast(\R^2) \what{\otimes}\Omega^{\ast,\ast}(\C^4)
$$
provided by the holomorphic volume form on $\C^4$ to interpret $\delta_{Diag}$ as being an element of the tensor square of the space of fields.

There is an integration map $\op{Tr} : \PV_c(\C^4) \to \C$ which is zero on $\PV_c^{i,j}$ unless $i = j = 4$, and on $\PV^{4,4}_c(\C^4)$ is simply integration, after we have used the holomorphic volume form to make the identifications
$$
\PV^{4,4}(\C^4) = \Omega^{0,4}(\C^4) = \Omega^{4,4}(\C^4).
$$
If 
$$
\alpha = \sum \alpha_k t^k \in \Omega^\ast(\R^2) \what{\otimes} \PV(\C^4)\left\llbracket t\right\rrbracket
$$
is a field of the theory, then the interaction is of the form
$$
I(\alpha) = \int \alpha_0^3 + \text{ higher order terms.} 
$$
\begin{conjecture}
This field theory describes the $\mf{sl}(4)$-equivariant twist of type IIA supergravity.
\end{conjecture}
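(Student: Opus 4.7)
The strategy is to establish the conjecture in the same manner the paper establishes its type IIB counterpart, namely by amassing several independent consistency checks rather than attempting a direct proof from first principles. The four checks I would pursue in order are: (i) matching of residual supersymmetries, (ii) compatibility with the brane picture and open--string gauge theories, (iii) T-duality to the type IIB conjecture along a circle, and (iv) reduction to $9$ dimensions matching the reduction of twisted type IIB.

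First, I would compute the $Q$-cohomology of $\mf{superstring}_{IIA}$ with $Q = 1 + \d \zbar_1 \in S_+ \oplus S_-$, paralleling the computation already carried out for type IIB. The image of $[Q,-]$ on $V$ should be a $9$-dimensional subspace: one real direction in $\R^2$ is exact, and the four anti-holomorphic directions of $\C^4$ are exact; so the surviving even translations are a copy of $\R \oplus V^{1,0}_{\C^4}$. I would then exhibit an explicit embedding of this $Q$-cohomology into the fields $\Omega^\ast(\R^2) \what{\otimes} \PV(\C^4)\left\llbracket t\right\rrbracket[2]$, sending constant translations along $\R^2$ to constant $0$-forms, holomorphic translations to $\dpa{z_i} \in \PV^{1,0}(\C^4)$, surviving elements of $\mf{sl}(4) \subset \mf{so}(10,\C)$ to the obvious linear polyvector fields $z_i \dpa{z_j}$, and the residual fermionic generators to the appropriate combinations in $\PV^{k,0}_{const}(\C^4)$ and $\Omega^{0,0}(\C^4)$. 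The central element from the fundamental--string extension should be realized as the constant function $1$, exactly as in the IIB case, with the $L_\infty$ correction in $L(\psi)$ providing the non-trivial commutator between a translation and a supercharge.

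Second, I would analyze the branes. The conjecture predicts that a $D_{2k}$ brane corresponds to a brane in the mixed topological string on a submanifold of the form $\R \times \C^k \subset \R^2 \times \C^4$, so I would identify the open--string field theory associated to such a brane (a mixture of a one-dimensional BF-type theory in the $\R$ direction and holomorphic Chern-Simons on $\C^{k \mid 4-k}$) and show, using Baulieu's calculation and dimensional reduction, that it is the holomorphic twist of the maximally supersymmetric gauge theory on the $D_{2k}$ worldvolume, in direct analogy with the lemma already proved in the IIB section. I would then repeat the open-closed map argument: BCOV-type theory of our mixed $A/B$ target should be the universal deformation of this open-string theory through single-trace operators, giving an abstract map from whatever the twisted type IIA supergravity is into our proposed theory, and matching the residual supersymmetries on the brane on both sides.

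Third, I would verify T-duality and dimensional reduction. For T-duality, putting our theory on $S^1 \times \R \times \C^4$ and passing to the $S^1$-invariant fields should yield, via the identification $\Omega^\ast(S^1) \simeq \C[x,\d x]$, precisely the fields of BCOV theory on $\C^\times \times \C^4$ (with $\log|w| \leftrightarrow x$ on the $\C^\times$ factor); compatibility of the differentials and Poisson kernels should be a direct check. For the $9$-dimensional reduction, one reduces along a real direction in $\R^2$ on one side and along $S^1$ on the other, and checks that the resulting field content and action agree.

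The main obstacle is the first part of step (ii): establishing a clean, model-independent characterization of twisted type IIA supergravity as the universal thing coupling to the brane gauge theory, given that the $(1,1)$ super-Poincar\'e algebra with its fundamental-string central extension is more delicate than the $(2,0)$ case (both chiralities contribute, and the cocycle we chose is not unique). Controlling the ambiguity in the central extension and verifying that our choice is the one compatible with the topological $A$-model factor on $\R^2$---and hence that the $L_\infty$ correction to the lifted supercharges actually reproduces the $\d_{dR}^{\R^2}$ term in the BRST differential of the proposed theory---is where the real work lies; the rest is computational bookkeeping modelled on the IIB case.
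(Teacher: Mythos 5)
Your overall strategy --- treat the conjecture as in the IIB case and pile up circumstantial evidence via residual supersymmetry, branes, T-duality, and $9$-dimensional reduction --- is exactly what the paper does, and items (ii)--(iv) track the paper's own checks closely (including the use of Baulieu's result via dimensional reduction for the $D_{2k}$ brane, the cyclic-cohomology open--closed map, and the $\Omega^\ast(S^1)\leftrightarrow H^\ast(S^1)$ comparison that reconciles the $A$-model on $\R\times S^1$ with the $B$-model on $\C^\times$).

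However, there is a factual slip in step (i). With $Q = 1 + \d\zbar_1$, the image of $[Q,-]$ in the complexified translations contains \emph{both} $\partial_{z_1}$ and $\partial_{\zbar_1}$ (the latter from bracketing $1$ against $\Omega^{0,4}$, the former from bracketing $\d\zbar_1$ against $\Omega^{0,5}$), together with $\partial_{\zbar_i}$ for $i=2,\dots,5$. In other words, the twist is \emph{topological}, not half-topological, in the entire $z_1$-plane; both real translations of $\R^2$ are $Q$-exact, and the surviving even translations form only $W=\C^4$, with no $\R$ factor. This matters for the embedding into the proposed field theory: the de Rham complex $\Omega^\ast(\R^2)$ is contractible and translations along $\R^2$ act homotopically trivially, which is consistent with the paper's answer and would be inconsistent with a surviving $\R$ of translations. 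Your ``$9$-dimensional'' count for the image is also internally inconsistent with your itemized ``$1 + 4$.'' The rest of your proposed decomposition of $\mf{superstring}_{IIA}^Q$ should be corrected accordingly: the surviving even part is $\mf{sl}(4)\oplus W\oplus\wedge^2W^\vee$ and the odd part is $W^\vee\oplus\wedge^2W^\vee\oplus\C\cdot c$, with $c$ mapped to $1\in\PV^0$ exactly as in the IIB computation via the integral operator $H$ applied to the central-extension term. Your worry about the non-uniqueness of the fundamental-string cocycle is a non-issue: the paper picks $\Gamma_{\Omega^1}:S_-\otimes S_-\to\Omega^1$ and notes all choices are equivalent.
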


\subsection{Evidence for the conjecture}
Let us now summarize the evidence for the conjecture, which we will examine in more detail in subsequent sections.  The evidence is similar to that we presented for type IIB. Recall that we conjecture that the twist of the type IIA superstring is the topological string theory on $\R^2 \times \C^4$ which is the $A$-model on $\R^2$ and the $B$-model on $\C^4$.   Branes in this theory are products of $A$-branes on $\R^2$ and $B$-branes on $\C^4$. Since $A$-branes are Lagrangian submanifolds, these branes are given by submanifolds of $\R^2 \times \C^4$ of the form $\R \times \C^k$ for some $k \le 4$.   We will show that the theory living on the $\R \times \C^k$ brane is a twist of the theory living on a $D_{2k}$ brane in physical type IIA. 

We will also analyze the cohomology of the $10$-dimensional $(1,1)$ supersymmetry algebra, and show that this cohomology appears in the fields of our candidate twist of type IIA, just like we did for type IIB. We will also show that the residual supersymmetries present in our twist of type IIA which preserve a given brane are precisely the residual supersymmetries of the theory living on the brane.

Finally, we will show that our candidate for twisted type IIA string theory is $T$-dual to our candidate twist for type IIB, and that the supergravity theories become the same upon dimensional reduction to $9$ dimensions. 

\section{Residual supersymmetry in type IIA}
Recall that $\mf{superstring}_{IIA}$ is the central extension of the ten-dimensional $(1,1)$ supersymmetry algebra
$$
\mf{siso}_{IIA} = \mf{so}(10,\C) \ltimes \left( V \oplus \Pi S_+ \oplus \Pi S_-  \right) 
$$  
by the de Rham complex $\Omega^\ast(\R^{10})[1]$. Let $Q$ be the $\mf{sl}(4)$-invariant supercharge in $\mf{siso}_{IIA}$ discussed above.

We will calculate the Lie algebra   $\mf{superstring}_{IIA}^{Q}$, which is the cohomology of $\mf{superstring}_{IIA}$ with respect to the sum of the de Rham differential and the operator $[Q,-]$.  To do this, we need some notation. As above, let us decompose $\C^{5} = \R^2 \times \C^4$, and let $W = \C^4$ denote the fundamental representation of the $\mf{sl}(4)$ which preserves our chosen supercharge. 
\begin{lemma}
As an $\mf{sl}(4)$ representation, $\mf{superstring}^Q_{IIA}$ decomposes as 
$$
\mf{sl}(4)  \oplus W \oplus \wedge^2 W^\vee  \oplus \Pi \left( W^\vee \oplus \wedge^2 W^\vee \oplus \C \cdot c    \right) 
$$ 
where $c$ is the central element. 

The non-zero commutators between the fermionic elements are given by the map
$$
W^\vee \otimes \wedge^2 W^\vee \xto{\wedge} \wedge^3 W^\vee = W. 
$$
The commutator of $\mf{sl}(4)$ with anything is given by the natural $\mf{sl}(4)$ action. The only remaining commutator is between $W$ and $\Pi W^\vee$ which pair to give the central term $\Pi (\C\cdot c)$. 

\end{lemma}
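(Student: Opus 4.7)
The plan is to follow the template of the $(2,0)$ calculation just above for $\mf{superstring}^Q_{IIB}$. The proof has three parts: first compute the $Q$-cohomology of the uncentrally extended algebra $\mf{siso}_{IIA}$; next decompose the result as an $\mf{sl}(4)$-representation; and finally incorporate the central extension by lifting cocycle representatives along the de Rham resolution of $\C$ by $\Omega^\ast(\R^{10})[1]$.

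For the first part, the key observation is that $Q = Q_+ + Q_-$ with $Q_+ = 1 \in \Omega^{0,0}_{const}$ and $Q_- = \d\zbar_1 \in \Omega^{0,1}_{const}$ are both pure spinors of opposite chirality. Denote their Clifford annihilators by $L_\pm \subset V$; each is maximal isotropic of complex dimension $5$. An explicit coordinate calculation gives $L_+ = V^{0,1}$ and $L_- = \C\langle \partial_{z_1} \rangle \oplus W^\vee$, where $W \subset V^{1,0}$ is the $4$-dimensional subspace spanned by $\partial_{z_j}$ for $j \neq 1$, on which the residual $\mf{sl}(4)$ acts as the fundamental representation. Under this group, $L_+ \cap L_- \cong W^\vee$ and $V/(L_+ + L_-) \cong W$. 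The pure-spinor property says that $\Gamma(Q_\pm, -) : S_\pm \to V$ surjects onto $L_\pm$, which determines the image of $[Q,-]$ on the fermionic part of $\mf{siso}_{IIA}$.

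Next, compute the $Q$-cohomology of $\mf{siso}_{IIA}$ piece by piece. The differential $[Q,-]$ vanishes on $V$, so $V$ contributes $V/(L_+ + L_-) \cong W$, which is the bosonic $W$-summand in the stated answer. The contribution from $S_+ \oplus S_-$ is the kernel of $(\Gamma(Q_+,-), \Gamma(Q_-,-))$ modulo the image of $\mf{so}(10,\C) \ni A \mapsto A \cdot Q \in S_+ \oplus S_-$; decomposing $S_\pm = \Omega^{0,ev/odd}_{const}$ under $\mf{sl}(4)$ and using the $SL(4)$-isomorphisms $\wedge^k W \cong \wedge^{4-k} W^\vee$ (coming from $\det W = \C$), the surviving fermionic summands become $W^\vee \oplus \wedge^2 W^\vee$, with the $\C\cdot c$ summand supplied by the next step. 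The contribution from $\mf{so}(10,\C)$ comes from $\op{Stab}(Q) = \op{Stab}(Q_+) \cap \op{Stab}(Q_-)$, computed using the standard decomposition $\mf{so}(10,\C) = \wedge^2 V^{1,0} \oplus \mf{gl}(V^{1,0}) \oplus \wedge^2 V^{0,1}$ and the Clifford action formulas on each piece; this contributes the $\mf{sl}(4) \oplus \wedge^2 W^\vee$ summand.

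Finally, incorporate the central extension. The cocycle $\Gamma_{\Omega^1}: S_- \otimes S_- \to \Omega^1_{const}$ contributes an extra term in $[Q, \psi_-]$ valued in constant $1$-forms. Since every constant $1$-form on $\R^{10}$ is exact, this contribution is trivial in de Rham cohomology, but only after replacing $\psi_-$ by the cochain-level representative $\psi_- - H(\Gamma_{\Omega^1}(Q_-,\psi_-))$, where $H: \Omega^1_{const} \to \Omega^0$ is the homotopy sending a constant $1$-form to the linear function (vanishing at the origin) that differentiates to it. The central generator $c$ then arises as the class of $1 \in \Omega^0 \subset \Omega^\ast(\R^{10})[1]$ in degree $-1$. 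As in the $(2,0)$ case, the central bracket $[W, \Pi W^\vee] \to \C \cdot c$ is produced precisely because this lift fails to be translation-invariant: bracketing $v \in W \subset V$ with the corrected $\psi_-$ picks up the central contribution $c \langle v, \Gamma(Q_-, \psi_-)\rangle$. The remaining nonzero commutator $W^\vee \otimes \wedge^2 W^\vee \to W$ is inherited directly from $\Gamma: \Sym^2 S_- \to V$ composed with the projection onto $V/(L_+ + L_-) = W$, using the identification $\wedge^3 W^\vee = W$; the vanishing of all other fermion-fermion brackets follows from $\Gamma: S_+ \otimes S_- \to 0$ in $\mf{siso}_{IIA}$, together with an explicit check of which $\Sym^2 S_\pm \to V$ contractions survive the quotient. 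The labour-intensive step I expect to be the main obstacle is the $\mf{sl}(4)$-isotypic bookkeeping of the $\mf{so}(10,\C)$ contribution, rather than any conceptually novel difficulty beyond the $(2,0)$ case.
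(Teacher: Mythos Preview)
Your approach is essentially the paper's: compute the $Q$-cohomology of $\mf{siso}_{IIA}$ first, then incorporate the central extension via the same homotopy $H$ used in the $(2,0)$ case. The paper works entirely in the explicit basis $S_\pm = \Omega^{0,ev/odd}_{const}$ and tracks the operators $\d\zbar_1\wedge$ and contraction directly, whereas you package the same computation in pure-spinor language via the annihilator subspaces $L_\pm$. Your formulation is more conceptual and makes the identification $V/(L_++L_-)\cong W$ immediate; the paper's coordinate approach makes the explicit cocycle representatives more visible.

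Two points deserve care. First, when you carry out $\op{Stab}(Q)=\op{Stab}(Q_+)\cap\op{Stab}(Q_-)$ in full, you will find two additional bosonic $W^\vee$ summands beyond $\mf{sl}(4)\oplus\wedge^2 W^\vee$; the paper's own proof records these (they arise from $\op{Stab}_{\mf{sl}(5)}(Q)$ and from $\wedge^2\overline{\C^5}$), and they reappear later when the algebra is mapped into BCOV theory, even though the lemma statement omits them. Second, your attribution of the bracket $W^\vee\otimes\wedge^2 W^\vee\to W$ to $\Gamma:\Sym^2 S_-\to V$ is representative-dependent: in the paper's cocycle representatives the $\wedge^2 W^\vee$ classes sit in $\Omega^{0,2}_{const}\subset S_+$, and the $W^\vee$ classes are mixed in $\Omega^{0,4}\oplus\Omega^{0,3}$, so the bracket as written there passes through $\Sym^2 S_+$. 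Either chirality can be arranged by choosing cohomologous representatives (since $\d\zbar_i\d\zbar_j\equiv \d\zbar_1\d\zbar_i\d\zbar_j$ in cohomology), but you should be explicit about which you use.
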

\begin{proof}
Recall that we can identify
\begin{align*} 
 S_+ &= \Omega^{0,ev}_{const} \\
S_- &= \Omega^{0,odd}_{const}.
\end{align*}
Let us introduce a basis $z_1,\dots,z_5$ for $\C^5$ where the $z_2,\dots,z_4$ are a basis for $W = \C^4$ and $z_1$ is a basis for the direction in which we have the topological $A$-model. Our supercharge is $Q = 1 + \d \zbar_1$.  

We will start by calculating the cohomology of $\mf{siso}_{IIA}$, that is, without taking into account the central extension.  

Let us first calculate the kernel of $[Q,-]$ in the space of spinors.  The only non-zero brackets involving the spinor we denote by $1 \in \Omega^{0,0}_{const} \subset S_+$ are given by the map
$$
[1,-] : \Omega^{0,4} \to \C^5 \oplus \br{\C}^5
$$ 
which sends a form
$$
\d \zbar_1 \wedge \dots \wedge \what{\d \zbar_i} \dots \wedge \d \zbar_5 \mapsto \dpa{\zbar_i}. 
$$
The only non-zero brackets involving the spinor $\d \zbar_1$ are the maps
\begin{align*} 
 [\d \zbar_1,-] : \Omega^{0,3}_{const} & \to \br{\C}^5\\
 \d \zbar_2\wedge \dots \wedge \what{\d \zbar_i} \dots\wedge \d \zbar_5 & \mapsto \dpa{\zbar_i} \\
[\d \zbar_1,- ] : \Omega^{0,5}_{const} & \to \C^5 \\
\d \zbar_1\wedge \dots \wedge \d \zbar_5 & \mapsto \dpa{z_1}. 
\end{align*} 
Thus, the image of $[Q,-]$ in the space of vectors $\C^5 \oplus \br{\C}^5$ consists of $\dpa{z_1}$ and $\dpa{\zbar_i}$ for $i = 1,\dots,5$. This tells us that the twist is topological in the $z_1$ plane and holomorphic in the four other directions. 

The kernel of $[Q,-]$ therefore consists of $\Omega^{0,2}_{const}$, $\Omega^{0,1}_{const}$, the image of $\d \zbar_1 \wedge$ inside $\Omega^{0,3}_{const}$, and a subspace
$$
\br{\C}^4 = W^\vee \subset \Omega^{0,4}_{const} \oplus \Omega^{0,3}_{const}
$$
spanned by the elements 
$$
\dpa{z_i} \vee \left( \d \zbar_1 \wedge \dots \wedge \d \zbar_5 - \d \zbar_2 \wedge \dots \wedge \d \zbar_5  \right)
$$
for $i = 2,\dots,4$. 

The cokernel of $[Q,-]$ in $\br{\C}^5 \oplus \br{\C}^5$ consists, of course, of $W = \C^4$ spanned by $\dpa{z_2}, \dots,\dpa{z_4}$.  

Next let us analyze how $Q$ behaves under rotation by $\mf{so}(10,\C)$.  We can decompose
$$
\mf{so}(10,\C) = \mf{sl}(5,\C) \oplus \wedge^2 \C^5 \oplus \wedge^2 \br{\C}^5
$$
where $\mf{sl}(5,\C)$ acts in the evident way, $\wedge^2 \C^5$ acts on forms by multiplication with $\d \zbar_i \d \zbar_j$, and $\wedge^2 \br{\C}^5$ acts by contraction with $\dpa{\zbar_i} \dpa{\zbar_j}$.

The image of the map 
$$
[Q,- ] : \mf{so}(10,\C) \to \Omega^{0,\ast}_{const} = S_+ \oplus S_-
$$
consists of $\op{Im} \d \zbar_1 \wedge$ inside $\Omega^{0,2}_{const}$,  $\Omega^{0,1}_{const}$, and a subspace
$$
\wedge^2 \C^4 = \wedge^2 W \subset \Omega^{0,2}_{const} \oplus \Omega^{0,3}_{const}
$$
consisting of elements of the form 
$$\d \zbar_i \wedge \d \zbar_j - \d \zbar_1 \d \zbar_i \d \zbar_j$$
for $1 < i,j \le 5$.   

From this it follows that the fermionic part of the cohomology of $\mf{siso}_{IIA}$ consists of $W^\vee \oplus \wedge^2 W$, and a basis of cochain representatives is given by
\begin{align*} 
 \dpa{z_i} \vee \left( \d \zbar_1 \wedge \dots \wedge \d \zbar_5 - \d \zbar_2 \wedge \dots \wedge \d \zbar_5  \right)& \text{ for } i = 1,\dots 4\\
\d \zbar_i \d \zbar_j  & \text{ for } 2 \le i < j \le 4. 
\end{align*}

Next let us calculate the stabilizer of $Q$ in $\mf{so}(10,\C)$. This has two parts. First, we have the stabilizer in $\mf{sl}(5)$, which is
$$
\op{Stab}_{\mf{sl}(5,\C)} (Q) = \mf{sl}(4,\C) \oplus W^\vee 
$$
where we have used the decomposition 
$$
\mf{sl}(5,\C) =  \mf{sl}(4,\C) \oplus W \oplus W^\vee
$$
into $\mf{sl}(4,\C)$ representations. 

Then, we have the rest of the stabilizer, which consists of $\wedge^2 \br{\C^5}$ inside $\mf{so}(10,\C)$ in the decomposition given above. Thus,
$$
\op{Stab}(Q) = \mf{sl}(4,\C) \oplus W^\vee \oplus \wedge^2 \br{\C^5} = \mf{sl}(4,\C) \oplus W^\vee \oplus W^\vee \oplus \wedge^2 W^\vee.
$$ 
This argument shows us that the $Q$-cohomology of $\mf{siso}_{IIA}$ consists (as an $\mf{sl}(4)$ representation) fermionic elements $W^\vee \oplus \wedge^2 W$, translations $W$, and $\op{Stab}(Q)$ inside rotations. 

From the representation given above, it is easy to verify that the only non-zero bracket between fermionic elements is the composition
$$
W^\vee \otimes \wedge^2 W^\vee \to \wedge^3 W^\vee = W 
$$
landing in translations.  The $\mf{sl}(4)$ inside the stabilizer $\op{Stab}(Q)$ commutes with everything in the evident way. The two copies of $W^\vee$ inside the stabilizer bracket with each other via the map
$$
W^\vee \otimes W^\vee \to \wedge^2 W^\vee \subset \op{Stab}(Q). 
$$ inside rotations. 

The final non-zero brackets are that each copy of $W^\vee$ inside $\op{Stab}(Q)$ bracket with the copy of $W^\vee$ inside the fermions to give an element of $\wedge^2 W^\vee$ inside the fermions.

Finally, let us calculate the cohomology of the central extension. Using an argument similar to that given for the type IIB supersymmetry algebra, we find that the commutator between a fermion in $w^\vee \in W^\vee$ and a translation in $w \in W$ is $c \ip{w^\vee,w}$ where $c$ is the central element.  
\end{proof}

Now let us represent this supersymmetry algebra $\mf{superstring}_{IIA}^{Q}$ in the fields of our conjectural description of twisted type IIA supergravity.  Recall that the space of fields is $\Omega^\ast(\R^{2}) \what{\otimes}\PV(\C^4)\left\llbracket t\right\rrbracket[1]$.  The representation is as follows:
\begin{enumerate} 
\item Translations in $W= \C^4$ map to the vector fields $\dpa{z_i}$.  
\item Rotations $A \in \mf{sl}(4)$ map to the vector fields $\sum A_{ij} z_i \dpa{z_j}$. 
\item Fermions in $W^\vee$ map to the linear superpotentials $z_i$ for $i = 2,\dots, 5$. 
\item Fermions in $\wedge^2 W$ map to the bivectors $\dpa{z_i} \dpa{z_j}$ for $2 \le i < j \le 5$. 
\item  Rotations in one of the copies of $W^\vee$ map to the tri-vectors $\dpa{z_i} \dpa{z_j} \dpa{z_j}$. 
\item Rotations in the other copy of $W^\vee$ and in $\wedge^2 W^\vee$ map to zero.
\item The central element $c$ maps to the polyvector field $1 \in \PV^0$.  
\end{enumerate}
\begin{lemma}
This defines a homomorphism of super Lie algebras from
$$
\mf{superstring}_{IIA}^{Q} \to \Omega^\ast(\R^2) \what{\otimes} \PV(\C^4)[1]\left\llbracket t\right\rrbracket 
$$
where on the right hand side we use the Schouten  bracket. 
\end{lemma}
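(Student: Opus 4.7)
The plan is to verify two things: (a) that every target polyvector field is closed under the BCOV differential $\dbar + t\partial$, so that the map really lands in the appropriate Lie algebra at the cochain level, and (b) that the Schouten bracket reproduces every bracket of $\mf{superstring}_{IIA}^Q$ dictated by the preceding lemma. Both are explicit computations; the content of the statement is essentially that the representation-theoretic bracket table matches a short list of polyvector-field identities.

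For (a), each target is a holomorphic polyvector field of pure Dolbeault type $(p,0)$ with polynomial coefficients and carries no power of $t$, so $\dbar$ annihilates it. For the $t\partial$ piece one checks that the divergence $\partial$ vanishes on the constant-coefficient polyvectors $\dpa{z_i}$, $\dpa{z_i}\dpa{z_j}$, $\dpa{z_i}\dpa{z_j}\dpa{z_k}$, on the scalar $1$, and on the linear functions $z_i \in \PV^0$. For the rotation generator one computes $\partial \bigl(\sum A_{ij} z_i \dpa{z_j}\bigr) = \op{tr}(A) = 0$, using $A \in \mf{sl}(4)$. Hence all images are $(\dbar+t\partial)$-closed.

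For (b), I would go through the brackets case by case, matching the list in the previous lemma. The $\mf{sl}(4)$-action is immediate: the Schouten bracket of $\sum A_{ij} z_i \dpa{z_j}$ with any polyvector field is its Lie derivative, which realizes the standard $\mf{sl}(4)$ action on the coordinate functions and coordinate vector fields, hence acts correctly on every one of the listed targets. The fermion-fermion bracket $W^\vee \otimes \wedge^2 W^\vee \to W$ is reproduced by
\[
\{z_i,\, \dpa{z_j}\dpa{z_k}\} \;=\; \delta_{ij}\dpa{z_k} - \delta_{ik}\dpa{z_j},
\]
landing precisely in translations. The central bracket $W \otimes \Pi W^\vee \to \Pi \C \cdot c$ is reproduced by $\{\dpa{z_i}, z_j\} = \delta_{ij}\cdot 1$, and the image $1 \in \PV^0$ is exactly the image of $c$. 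The bracket of the trivector copy of $W^\vee \subset \op{Stab}(Q)$ against a fermion $z_l$ gives
\[
\{\dpa{z_i}\dpa{z_j}\dpa{z_k},\, z_l\} \;=\; \delta_{il}\dpa{z_j}\dpa{z_k} - \delta_{jl}\dpa{z_i}\dpa{z_k} + \delta_{kl}\dpa{z_i}\dpa{z_j},
\]
a bivector, matching the image of the corresponding $\wedge^2 W^\vee$ fermion (using the $SL(4)$-invariant identifications $\wedge^3 W^\vee \cong W$ and $\wedge^2 W^\vee \cong \wedge^2 W$).

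The main obstacle is not analytic but bookkeeping. All remaining Schouten brackets between pairs of constant-coefficient polyvectors (trivector-trivector, trivector-bivector, bivector-bivector) vanish identically, so one must check that every corresponding bracket in $\mf{superstring}_{IIA}^Q$ also lies in the kernel of the map — that is, in the subspace $\wedge^2 W^\vee \oplus W^\vee_{\mathrm{stab,2}}$ sent to zero, or vanishes for parity/representation-theoretic reasons. Concretely, one verifies that the $W^\vee$-copy in $\op{Stab}(Q)$ sent to zero commutes, on the nose, with $W^\vee$ and $\wedge^2 W^\vee$ among the fermions (the bracket lands in $\wedge^2 W^\vee \subset \op{Stab}(Q)$, which is also killed), and likewise for $\wedge^2 W^\vee \subset \op{Stab}(Q)$. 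Once these consistency checks are carried out for the finite list of pairs, the lemma follows.
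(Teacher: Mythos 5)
The paper's own proof of this lemma is a single line ("It is easy to verify that the relevant commutation relations hold"), so carrying out the verification in detail is exactly what's called for, and your parts (a) and the explicit Schouten-bracket formulas in (b) are correct and match the intended computation.

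However, your handling of the kernel at the end contains a genuine error that leaves a nontrivial step unverified. You assert that the copy of $W^\vee \subset \op{Stab}(Q)$ sent to zero ``commutes, on the nose'' with $W^\vee$ among the fermions, and justify this parenthetically by saying that the bracket lands in $\wedge^2 W^\vee \subset \op{Stab}(Q)$. That justification is impossible on parity grounds: $W^\vee_{\mathrm{rot}}$ is bosonic, $W^\vee_{\mathrm{ferm}}$ is fermionic, so their bracket must be fermionic and therefore cannot land in the bosonic $\wedge^2 W^\vee \subset \op{Stab}(Q)$. Indeed, the paper's proof of the preceding structure lemma states explicitly that \emph{each} copy of $W^\vee$ in $\op{Stab}(Q)$ brackets with $W^\vee_{\mathrm{ferm}}$ into $\wedge^2 W^\vee$ among the \emph{fermions} — which your map sends to the nonvanishing bivectors $\dpa{z_i}\dpa{z_j}$, not to zero. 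So this is precisely the bracket that threatens the homomorphism property, and the kernel of the map is an ideal only if, for the particular linear combination of the two $W^\vee$'s that you choose to kill, the bracket with $W^\vee_{\mathrm{ferm}}$ is actually zero. This requires an argument: by Schur's lemma the two $\mf{sl}(4)$-equivariant maps $W^\vee_{\mathrm{rot},i}\otimes W^\vee_{\mathrm{ferm}} \to \wedge^2 W^\vee_{\mathrm{ferm}}$ are proportional, so there is a unique (up to scale) combination with vanishing bracket, and one must identify that combination with the killed copy and check that the surviving one has bracket matching $\{\dpa{z_i}\dpa{z_j}\dpa{z_k}, z_l\}$ with the correct coefficient. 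As written, your proof asserts the needed vanishing but gives a reason that cannot be right, so this step is a gap.
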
 
\begin{proof}
It is easy to verify that the relevant commutation relations hold. 
\end{proof}

\section{Further twists of type IIA supergravity}\label{Further_twist_IIA}
Further twists of type IIA supergravity can be realized by setting some of the supersymmetries in the fields of the twisted theory to a non-zero value.  As in type IIB, there are two classes of further twists one can consider: we can make some of the four holomorphic directions non-commutative by setting some bivector $\dpa{z_i} \dpa{z_j}$ to a non-zero value, or we can turn on a linear superpotential $z_i$.  As in type IIB, turning on a linear superpotential makes the theory trivial in perturbation theory, and seems to lead to a theory where the partition function is given by ``counting'' gravitational instantons. Since we don't understand this very well right now, we will focus on the other kind of twist. 

If we turn on the bivector $\dpa{z_2} \dpa{z_3}$ in our conjectural twist of type IIA, we find a theory on $\R^6 \times \C^2$ which looks like the topological $A$-model on $\R^6$ and the topological $B$-model on $\C^2$. At the level of supergravity, the fields are
$$
\Omega^\ast(\R^6) \what{\otimes} \PV(\C^2)\left\llbracket t\right\rrbracket [2]
$$ 
with differential $\dbar + t \partial$.    It is natural to conjecture that, at the level of the string theory, we find something which is the topological $A$-model string in $6$ directions and the topological $B$-string in $2$ complex directions.  

Finally, let us discuss the maximally topological twist of type IIA. We can, of course, make all of $\C^4$ non-commutative by introducing the bivector $\dpa{z_1} \dpa{z_2} + \dpa{z_3} \dpa{z_4}$. The result is a theory where the fields are just $\Omega^\ast(\R^{10})\left\llbracket t\right\rrbracket$ with differential just the de Rham operator. At the level of supergravity, this theory is trivial. However, it is natural to guess that if we perform this twist at the level of the string theory we find the topological $A$-string on $\R^{10}$. 

\section{$T$-duality with type IIB and reduction to $9$ dimensions}\label{section-reduction}
We need to present some further evidence that our conjectural description of the twist of type IIA string t heory and supergravity is correct.   It is known that type IIA and type IIB superstring theories become equivalent (by $T$-duality) when compactified along a circle. In our situation, this becomes immediately clear.  If we compactify type IIA, we find a theory on $\R \times S^1\times \C^4$ which is the $A$-model on $\R \times S^1$ and the $B$-model on $\C^4$.  If we compactify type IIB, we find the $B$-model on $\C^\times \times \C^4$.  

It is a standard result in mirror symmetry (see \cite{AboAurEfiKat13}) that the topological $A$-model on $\R \times S^1$ is equivalent to the $B$-model on $\C^\times$. For example, the space of closed-string states in the $A$-model string can be identified with the space $\C[z,z^{-1}, \dpa{z}]$ of polyvector fields on $\C^\times$. We can see this as follows.  Closed-string states are computed as symplectic cohomology, which is the Floer cohomology theory describing Morse theory on the loop space.  Because $\R \times S^1$ is the cotangent bundle of $S^1$, the symplectic cohomology is equivalent \cite{Vit99, Abo13} to the homology of the free loop space of $S^1$.  The loop space of $S^1$ is homotopy equivalent to $\Z \times S^1$, where $\Z$ describes the winding number and $S^1$ describes the starting point of the loop in $S^1$.  The homology of this is, of course, $\C[z,z^{-1}, \eps]$ where $\eps$ is an odd parameter corresponding to a basis element of $H_1(S^1)$. The element $\eps$ maps to the element $\dpa{z}$ in polyvector fields on $\C^\times$.

As well as having an equivalence of string theories, we would like to have an equivalence of supergravity theories between the reductions of type IIA and type IIB on a circle.   Our candidate for twisted type IIA has fields $\Omega^\ast(\R^2) \what{\otimes} \PV(\C^4)\left\llbracket t\right\rrbracket[2]$. If we replace $\R^2$ by $\R \times S^1$ and consider the theory as a $9$-dimensional theory, we simply replace the de Rham complex of $S^1$ by its cohomology. We find a theory whose fields are
$$
\Omega^\ast(\R) \what{\otimes} \PV(\C^4)\left\llbracket t\right\rrbracket [\eps] [2]
$$ 
where $\eps$ corresponds to the generator of $H^1(S^1)$. 

Let us see how this theory arises by dimensional reduction from BCOV theory on $\C^5$.  We will first discuss some generalities about dimensional reduction of a holomorphic theory on $\C$ along a circle, which we will then apply to our situation.  Suppose we have a field theory on $\C$ where the fields are of the form $\Omega^{0,\ast}(\C,V)$ where $V$ is a graded vector space. Suppose that the linearized BRST operator is a sum of the Dolbeault operator with some holomorphic differential operator which preserves Dolbeault degree.  Let us put the theory on $\C^\times$ which we think of as a cylinder with coordinates $(x,\theta)$ where $x$ is a coordinate on $\R$ and $\theta$ is a coordinate on $S^1$.  We will reduce along the $\theta$-circle to get a one-dimensional field theory.  This means that we will only consider those fields which are invariant under rotation of the $\theta$-circle. If we do this, the Dolbeault complex becomes $\cinfty(\R)[\d \zbar]$ where $z = x + i \theta$ is a holomorphic coordinate.  The Dolbeault differential $\d \zbar \dpa{\zbar}$ becomes the operator $\dpa{x}$. After identifying $\d \zbar$ with $\d x$, we find that the $S^1$-invariant part of the Dolbeault complex has become the de Rham complex on $\R$. 

This general argument tells us that a field theory on $\C$ whose fields are of the form $\Omega^{0,\ast}(\C,V)$ with linearized BRST operator given by $\dbar$ becomes, upon dimensional reduction along $S^1$, a field theory on $\R$ whose fields are $\Omega^\ast(\R,V)$ with linearized BRST operator $\d_{dR}$. 

Applied to our situation, this argument tells us that BCOV theory on $\C^\times$, reduced along the circle to give a one-dimensional theory, gives a theory with fields $\Omega^\ast(\R)[\eps]\left\llbracket t\right\rrbracket [1]$ and where the linearized BRST operator is just $\d_{dR}$.  The odd parameter $\eps$ comes from the polyvector field $\dpa{z}$.  One might expect a term of the form $t \dpa{\eps} \dpa{x}$ to appear in the linearized BRST operator, coming from the operator $t \partial$ appearing in BCOV theory. However, the operator $\dpa{x}$ is cohomologous to zero under the de Rham operator on $\Omega^\ast(\R)$, so this term does not arise.

Finally, we can apply this to see that the reduction of BCOV theory on $\C^5$ to $\R \times \C^4$ along a circle gives rise to a theory whose fields are $\Omega^\ast(\R) \what{\otimes} \Omega^{0,\ast}(\C^4)[\eps]\left\llbracket t\right\rrbracket[1]$, where the linearized BRST operator is $\dbar + t \partial$.  This is the same as what we got from reducing our conjectural description of twisted type IIA, as desired. One can further check that this dimensional reduction is compatible with the kernel for the BV Poissin bracket and the interaction in BCOV theory. 

\section{$D$-branes in type IIA} 
Type IIA has $D$-branes living on odd dimensional submanifolds.  In this section we will explain how to realize these branes in our twisted version of type IIA.  We will also verify, as we did for type IIB, that the gauge theory living on these branes is a certain twist of maximally supersymmetric gauge theory, and that the residual supersymmetry acting on these twisted theories arises from the residual supersymmetry in type IIA. 

We will consider the $\mf{sl}(4)$ invariant twist of type IIA, which is (according to our conjecture) the theory on $\R^2 \times \C^4$ which is the topological $A$-model on $\R^2$ and the $B$-model on $\C^4$.  Branes in the topological $A$-model are given by Lagrangian submanifolds; of course, every one-dimensional submanifold of $\R^2$ is Lagrangian.  This suggests that the natural branes live on submanifolds of the form $\R \times \C^k \subset \R^2 \times \C^4$, where, for simplicity, we think of $\R \subset \R^2$ and $\C^k \subset \C^4$ as linearly embedded submanifolds. 
 
In the $A$-model, the gauge theory on a brane has fields built from the Hom complexes in the Fukaya category. If we discard instanton contributions -- which we are doing because we are interested in the supergravity limit -- we can model Hom's in the Fukaya category from a Lagrangian $L$ to itself by the de Rham complex $\Omega^\ast(L)$ of $L$.  

Applying this reasoning to our situation, we would guess that the fields in the $D$-brane gauge theory for a brane living on $\R \times \C^k$ inside $\R^2 \times \C^4$ are
$$
\Omega^\ast(\R) \what{\otimes} \Omega^{0,\ast}(\C^k) [\eps_1,\dots,\eps_{4-k} ] \otimes \gl_N[1]
$$ 
where we consider a stack of $N$ branes. The odd variables $\eps_\alpha$ appear for the same reason they do in our consideration of branes in type IIB.

The action functional should be the Chern-Simons type action functional
$$
\int_{\R \times \C^{k \mid 4-k} } \d w_1 \dots \d w_k \d \eps_1 \dots \d \eps_{4-k} \left(\tfrac{1}{2} \op{Tr} ( A (\dbar^{\C^4} + \d_{dR}^{\R^2} ) A ) + \tfrac{1}{3} \op{Tr} A^3 \right).
$$
As usual, integration of a differential form of mixed degree means that we pick out the part of top degree and integrate that.   

We will call this theory Chern-Simons theory on $\R_{dR} \times \C^{k \mid 4-k}$, where the inclusion of $\R_{dR}$ indicates that we use the de Rham complex on this factor, whereas we use the Dolbeault complex on $\C^k$. 

It turns out that this guess is correct:
\begin{lemma}
The minimal twist of the maximally supersymmetric gauge theory in dimensions $2k+1$ is equivalent to holomorphic Chern-Simons theory on $\R_{dR} \times \C^{k \mid 4-k}$. Further, the residual supersymmetries are implemented by the vector fields $\dpa{\eps_{\alpha}}$ and $\eps_{\alpha} \dpa{w_i}$ on $\C^{k \mid 4-k}$.  
\end{lemma}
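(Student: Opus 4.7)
The plan is to mimic the strategy used for type IIB branes: start from Baulieu's theorem identifying holomorphic Chern-Simons on $\C^5$ with the $SU(5)$-invariant twist of $N=1$, $D=10$ super Yang-Mills, and then perform dimensional reduction, using that reduction commutes with holomorphic twisting. The maximally supersymmetric gauge theory in dimension $2k+1$ is the reduction of $10$-dimensional super Yang-Mills along $9-2k$ real directions, and I will split this as $1 + 2(4-k)$: first a single circle reduction that turns one complex direction into a real topological direction, and then $4-k$ complex reductions of the kind used in the proof of the type IIB brane lemma.

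Concretely, write $\C^5 = \C \times \C^4$ and reduce HCS along the $S^1$ in the first factor, exactly as in Section~\ref{section-reduction}. The $S^1$-invariant subcomplex of $\Omega^{0,\ast}(\C)$ is quasi-isomorphic to $\Omega^\ast(\R)$, with $\dbar$ becoming $\d_{dR}$, so the fields of HCS on $\C^5$ reduce to $\Omega^\ast(\R) \what{\otimes} \Omega^{0,\ast}(\C^4) \otimes \gl_N[1]$ with linearized BRST operator $\d_{dR}^{\R} + \dbar^{\C^4}$, and with cubic action inherited from HCS. Next, split $\C^4 = \C^k \times \C^{4-k}$ and reduce further along the $\C^{4-k}$ directions by setting fields to be constant in the normal holomorphic coordinates $z_1,\dots,z_{4-k}$. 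The constant-coefficient part of $\Omega^{0,\ast}(\C^{4-k})$ is $\C[\d \zbar_1,\dots,\d \zbar_{4-k}]$, and identifying $\d \zbar_\alpha$ with the odd variables $\eps_\alpha$ presents the reduced fields as $\Omega^\ast(\R) \what{\otimes} \Omega^{0,\ast}(\C^k)[\eps_1,\dots,\eps_{4-k}] \otimes \gl_N[1]$. An inspection of the reduced cubic interaction then recovers the Chern-Simons action on $\R_{dR} \times \C^{k \mid 4-k}$ written above. Since both reductions commute with holomorphic twisting and the total number of reduced real directions is $9-2k$, the resulting field theory is the minimal twist of the maximally supersymmetric gauge theory in dimension $2k+1$.

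For the residual supersymmetry, I would run the same analysis as for Theorem~\ref{thm-IIB-SUSY}: compute the $Q$-cohomology of the super-Poincar\'e algebra in dimension $2k+1$ (obtained by dimensional reduction of the $10$-dimensional super-Poincar\'e algebra with $16$ supercharges), and match the resulting odd classes to derivations of the dg algebra $\Omega^{0,\ast}(\C^k)[\eps_\alpha]$. On representation-theoretic grounds the only candidate odd derivations with the correct weights under $SU(k) \times SU(4-k)$ are the $4-k$ operators $\dpa{\eps_\alpha}$ and the $k(4-k)$ operators $\eps_\alpha \dpa{w_i}$; checking the brackets $\{\eps_\alpha \dpa{w_i}, \eps_\beta \dpa{w_j}\} = 0$ and $\{\dpa{\eps_\alpha}, \eps_\beta \dpa{w_i}\} = \delta_{\alpha\beta} \dpa{w_i}$ confirms that these reproduce the residual super-translation algebra, and a dimension count shows that no further supercharges survive.

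The main obstacle is verifying that the two dimensional reductions commute both with each other and with the holomorphic twist, and that the reduced action inherits no spurious factors from the holomorphic volume form on $\C^5$ when the circle direction is integrated out. Granting these compatibilities, the supersymmetry statement becomes essentially the type IIB analysis of Appendix~\ref{Appendix-A}, with the extra $\R$ factor carrying only a de Rham complex and hence contributing no new residual supercharges beyond translations along $\R$.
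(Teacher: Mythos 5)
Your proposal is correct and uses the same essential ingredients as the paper (Baulieu's theorem, the circle-reduction argument of Section~\ref{section-reduction}, and the commutativity of dimensional reduction with twisting), but the paper takes a shorter route: rather than starting from dimension $10$ and performing one circle reduction followed by $4-k$ complex reductions, it reduces along a single circle from the already-established type IIB result in dimension $2k+2$, where the holomorphic twist of the $D_{2k+1}$-brane theory was identified with holomorphic Chern-Simons on $\C^{k+1\mid 4-k}$. This one-step reduction also lets the paper dispatch the residual supersymmetry more economically: instead of recomputing the $Q$-cohomology of the super-Poincar\'e algebra in dimension $2k+1$ and matching odd classes to derivations by representation theory, the paper simply takes the supersymmetries $\dpa{\eps_\alpha}$ and $\eps_\alpha\dpa{w_i}$ already determined on $\C^{k+1\mid 4-k}$ in Theorem~\ref{thm-IIB-SUSY}, keeps the ones acting only on the $\C^{k\mid 4-k}$ factor that survives when one $\C$ becomes $\R_{dR}$, and invokes a brief cohomology count to rule out any extra supercharges. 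Your route amounts to re-deriving the intermediate $\C^{k+1\mid 4-k}$ result on the fly, so it is more self-contained but duplicates work already done in the type IIB section; conversely, it has the small virtue of making the full reduction chain from $\C^5$ explicit in one place. One thing worth flagging in your write-up: the commutativity of the two reductions (circle then complex, versus complex then circle) is indeed unproblematic since they act on disjoint directions, but it is the reuse of the prior IIB lemma rather than this commutativity that the paper leans on, so the ``main obstacle'' you anticipate at the end is largely a non-issue once the proof is organized the paper's way.
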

\begin{remark}
By the ``minimal'' twist, we mean the twist that is as close to the physical theory as possible, so that as few as possible directions are made topological.  The corresponding supercharge $Q$ is $\mf{sl}(k,\C)$ invariant and has the feature that the image of $[Q,-]$ consists of the smallest possible number of complexified translations, which in this case is $k+1$.  
\end{remark}
\begin{proof}
This twist is reduced form the holomorphic twist of the maximally supersymmetric gauge theory in dimension $2k+2$, which we have already seen is holomorphic Chern-Simons theory on $\C^{k+1 \mid 4-k}$.  The argument in section \ref{section-reduction} tells us that reducing this theory to $2k+1$ dimensions amounts to replacing one copy of $\C$ by $\R_{dR}$.  

We know that the residual supersymmetries for the $D_{2k+1}$ brane theory in type IIB are represented by the vector fields $\dpa{\eps_{\alpha}}$ and $\eps_\alpha \dpa{w_i}$ on $\C^{k+1 \mid 4-k}$. It follows that the vector fields $\dpa{\eps_{\alpha}}$ and $\eps_\alpha \dpa{w_i}$, where $i > 1$, which act on the $\C^{k \mid 4-k}$ factor of $\R_{dR}\times  \C^{k \mid 4-k}$, must be part of the residual supersymmetry algebra of the minimal twist of the $D_{2k}$-brane theory in type IIA.  A simple cohomology calculation tells us that there are no more residual supersymmetries. 
\end{proof}
This tells us that $D$-branes in  our twisted IIA supergravity, just like in type IIB, behave exactly as one expects from the usual physics story.  

\subsection{Coupling the supergravity theory to the theory on a $D$-brane}
One can show that the fields of our candidate for the twist of type IIA supergravity couple to the $D$-brane theory by an argument similar to the one we employed for type IIB.  The theory on a $D_{2k}$-brane is, as we have seen, Chern-Simons on $\R_{dR} \times \C^{k \mid 4-k}$.  One can show that the local Hochschild cohomology of $\Omega^\ast(\R) \what{\otimes} \Omega^{0,\ast}(\C^k) [\eps_{\alpha}]$ is 
$$
\Omega^\ast(\R) \what{\otimes} \PV^{\ast,\ast}(\C^{k \mid 4-k})
$$ 
where the polyvector fields on the complex supermanifold $\C^{k \mid 4-k}$ were discussed in section \ref{section-HH}. 

It follows that the local cyclic cohomology of the same algebra, which describes the universal single-trace deformations of the Chern-Simons action, is
$$
\Omega^{\ast}(\R) \what{\otimes} \PV^{\ast,\ast}(\C^{k \mid 4-k} )\llbracket t \rrbracket
$$ 
with a differential which includes a term $t \partial$ where $\partial$ is the divergence operator. 

In section \ref{section-HH} we explained how to write down a cochain map
$$
\PV^{\ast,\ast}(\C^5) \llbracket t \rrbracket \to \PV^{\ast,\ast}(\C^{k \mid 5-k})\llbracket t \rrbracket. 
$$
The same formula applies if we replace $5$ by $4$. Together with the pull-back map $\Omega^\ast(\R^2) \to \Omega^\ast(\R)$ this map gives us a cochain map
$$
\Omega^\ast(\R^2) \what{\otimes} \PV(\C^4)\llbracket t \rrbracket  \to \Omega^\ast(\R) \what{\otimes} \PV^{\C^{k \mid 4-k}} ) )\llbracket t \rrbracket  
$$
which implements the desired coupling.

As in type IIB, one can calculate that the residual supersymmetries of the theory on a $D$-brane arise from supersymmetries living in the fields of twisted type IIA supergravity by this map.

\appendix
 
\section{}\label{Appendix-A}
In this section we will prove theorems \ref{thm-IIB-SUSY}  showing that the residual supersymmetry algebra on the minimal twists of $D$-brane gauge theories in type IIB and type IIA acts in the way we described, and arises from the supersymmetries of twisted supergravity.  

\begin{theorem} 
 Consider the maximally supersymmetric gauge theory in dimension $2k$. Let $\mc{T}_{2k}$ denote the corresponding supertranslation algebra, which is of the form
$$
\mc{T}_{2k} = \C^{2k} \oplus \Pi S
$$ 
where $\C^{2k}$ is the complexification of the translations $\R^{2k}$ and $S$ is a $16$-dimensional spin representation of $\mf{so}(2k,\C)$ restricted from an irreducible spin representation of $\mf{so}(10,\C)$.  The super Lie algebra $\mc{T}_{2k}$ is acted on by space-time rotations $\mf{so}(2k,\C)$ and by the $R$-symmetry group $\mf{so}(10-2k,\C)$.

Let $Q \in S_{2k}$ be the unique up to scale element which is invariant under $\mf{sl}(5,\C) \subset \mf{so}(10,\C)$.  The supercharge $Q$ gives rise to the holomorphic twist.

Let 
$$
\mf{siso}^{R}(2k) = \left(\mf{so}(2k,\C) \oplus \mf{so}(10-2k,\C) \right) \ltimes \mc{T}_{2k}
$$
be the Lie algebra obtained by adding space-time rotations and the $R$-symmetry Lie algebra to the supertranslation algebra. 

Then, the $Q$-cohomology of $\mf{siso}^{R}(2k)$ acts on the twisted theory, which is holomorphic Chern-Simons on $\C^{k \mid 5-k}$.  This $Q$-cohomology consists of: 
\begin{enumerate} 
 \item The stabilizer 
$$\op{Stab}(Q) \subset \mf{so}(2k,\C) \oplus \mf{so}(10-2k,\C).$$
This is subalgebra includes $\mf{sl}(k,\C) \oplus \mf{sl}(5-k,\C)$ (and we will largely ignore elements of $\op{Stab}(Q)$ which are not in $\mf{sl}(k,\C) \oplus \mf{sl}(5-k,\C)$). The group $\mf{sl}(k,\C) \oplus \mf{sl}(5-k,\C)$ acts on the twisted theory via the obvious action on $\C^{k \mid 5-k}$, where the $k$ bosonic directions are in the fundamental representation of $\mf{sl}(k,\C)$ and the $5-k$ fermionic directions are in the anti-fundamental representation of $\mf{sl}(5-k,\C)$. 
\item Translations $\C^k$ in the holomorphic directions $\dpa{w_i}$ on $\R^{2k} = \C^k$. These act on the twisted theory by the vector fields $\dpa{w_i}$ of translation in the bosonic directions of $\C^{k \mid 5-k}$. 
\item $(5-k)(k+1)$ fermionic elements, living in a representation of $\mf{sl}(k,\C) \oplus \mf{sl}(5-k,\C)$ of the form $W^\vee \oplus W \otimes V$, where $W$ is the fundamental representation of $\mf{sl}(5-k,\C)$ and $V$ is the fundamental representation of $\mf{sl}(k,\C)$.   The commutator of a supercharge $w^\vee \in W^\vee$ with $w \otimes v \in W \otimes V$ is 
$$
[w^\vee, w \otimes v] = \ip{w^\vee,w} v
$$ 
where we view $v$ as a translation. 

These act on the twisted theory by the vector fields $\dpa{\eps_\alpha}$ and $\eps_\alpha \dpa{w_j}$ on $\C^{k \mid 5-k}$. 
\end{enumerate}
 
\end{theorem}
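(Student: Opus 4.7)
The plan is to exploit dimensional reduction from $10$ dimensions, combined with a direct computation of Lie-algebra cohomology. The maximally supersymmetric gauge theory in dimension $2k$ is the dimensional reduction of $D=10$, $N=1$ super-Yang-Mills, and twisting commutes with dimensional reduction. Baulieu's theorem, already used in the $k=5$ case of the previous lemma, identifies the $10$-dimensional $\mf{sl}(5,\C)$-invariant twist with holomorphic Chern-Simons on $\C^5$; reducing along $\C^{5-k} \subset \C^5$ collapses $\Omega^{0,\ast}(\C^5)$ to $\Omega^{0,\ast}(\C^k)[\d \zbar_{k+1},\dots,\d \zbar_5]$, with each $\d \zbar_\alpha$ identified with $\eps_\alpha$. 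Any residual supersymmetry must therefore act by a vector field on $\C^{k\mid 5-k}$, so the task reduces to identifying exactly which vector fields arise and checking that they match the algebraically computed $Q$-cohomology of $\mf{siso}^R(2k)$.

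For the algebraic side, I would use the $Q$-determined complex structure to split $V_{10} = V_{2k} \oplus V_{10-2k}$ with each factor further as $V^{1,0} \oplus V^{0,1}$, so that $V_{10}^{0,1} \iso \br\C^k \oplus \br\C^{5-k}$. The differential $[Q,-]\colon S \to V_{2k}$ inside $\mf{siso}^R(2k)$ is the composition of the $10$-dimensional map $\Gamma(Q,-)\colon S \to V_{10}$ with the projection onto $V_{2k}$; its image is $V_{2k}^{0,1}$, yielding cohomology $V_{2k}^{1,0} \iso \C^k$ in the translation slot. The stabilizer calculation is immediate: the Levi factor of $\op{Stab}_{\mf{so}(10,\C)}(Q)$ is $\mf{sl}(5,\C)$, whose intersection with $\mf{so}(2k,\C) \oplus \mf{so}(10-2k,\C)$ is precisely $\mf{sl}(k,\C) \oplus \mf{sl}(5-k,\C)$. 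The crucial new feature is in the fermionic sector: because we have discarded the $V_{10-2k}$ translations from the Lie algebra, those elements of $S$ whose image under $\Gamma(Q,-)$ happens to lie entirely inside $V_{10-2k}$ now become closed, producing cohomology classes that were killed in the full $10$-dimensional analysis. Decomposing $S \iso \Omega^{0,ev}_{const}(\C^5)$ as a representation of $\mf{sl}(k,\C) \oplus \mf{sl}(5-k,\C)$ via the Künneth splitting along $\C^5 = \C^k \oplus \C^{5-k}$, and tracking both this enlarged kernel and the image of $[Q,-]$ coming from $\mf{so}(2k,\C) \oplus \mf{so}(10-2k,\C)$, I would identify the surviving fermions as $W^\vee \oplus W \otimes V$ with exactly $(5-k)(k+1)$ components.

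To match these with the vector fields $\dpa{\eps_\alpha}$ and $\eps_\alpha \dpa{w_i}$, I would read off the action by tracking the reduction directly: components of the $10$-dimensional gaugino $\lambda \in \Omega^{0,\ast}(\C^5) \otimes \g$ that survive the reduction correspond to the HCS fields on $\C^{k\mid 5-k}$, and the residual supercharges act as derivations of the supermanifold $\C^{k\mid 5-k}$ of the listed form. The commutators match by direct verification of the relevant Lie-bracket/Schouten-bracket identities; for instance $[\dpa{\eps_\alpha},\eps_\beta\dpa{w_i}] = \delta_{\alpha\beta}\dpa{w_i}$ realizes the claimed bracket $[w^\vee, w \otimes v] = \ip{w^\vee,w}\,v$, while all other brackets among these vector fields vanish, in agreement with the algebraic computation. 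The main obstacle, and the only genuinely delicate part of the argument, is the bookkeeping in the spin-representation decomposition along the chain $\mf{so}(2k,\C) \oplus \mf{so}(10-2k,\C) \supset \mf{sl}(k,\C) \oplus \mf{sl}(5-k,\C)$ and the careful choice of cochain representatives; I would manage this by realizing $S$ as $\Omega^{0,ev}_{const}(\C^5)$ throughout, so that the decomposition is transparent and the representatives can be written in a form that directly matches the reduction picture, after which the Lie-algebra structure of the residual supersymmetries is forced by consistency with the HCS action.
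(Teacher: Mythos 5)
Your step~(1) --- computing the $Q$-cohomology of $\mf{siso}^R(2k)$ --- is sound in outline and uses a genuinely different, arguably cleaner decomposition than the paper. The paper works case by case in $k$, splitting the $16$-dimensional spin representation into products of spin representations of $\mf{so}(2k,\C)$ and $\mf{so}(10-2k,\C)$ (e.g.\ $S_+^{(2)}\otimes S_+^{(8)}\oplus S_-^{(2)}\otimes S_-^{(8)}$ for $k=1,4$; $S_\pm^{(4)}\otimes S_\pm^{(6)}$ for $k=2,3$), whereas you propose to realize $S$ uniformly as $\Omega^{0,\mathrm{ev}}_{const}(\C^5)$ and split via the K\"unneth decomposition $\C^5 = \C^k \oplus \C^{5-k}$. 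Your key observation --- that spinors whose image under $\Gamma(Q,-)$ lies in the $V_{10-2k}$ directions become $Q$-closed once those translations are dropped, enlarging the kernel relative to the $10$-dimensional analysis --- is correct and is the mechanism behind the count $(5-k)(k+1)$. One still has to compute $\dim\op{Im}\left([Q,-]\colon \mf{so}(2k)\oplus\mf{so}(10-2k)\to S\right)$ (which is $k^2-5k+11$), and you would need to carry that out, but the framework you propose handles it in one go rather than case by case, which is a real gain in transparency.

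The genuine gap is in step~(2), matching the residual supercharges to the vector fields $\dpa{\eps_\alpha}$, $\eps_\alpha\dpa{w_j}$ on $\C^{k\mid 5-k}$. You write that one can ``read off the action by tracking the reduction directly'' through the gaugino fields, but this presumes an explicit off-shell formulation of maximal supersymmetry in $10$ or $2k$ dimensions in which the supercharges act by manageable formulas --- and the paper explicitly flags in a remark that precisely this is unavailable, which is why the argument there is indirect. Moreover, the supercharges that matter for $k<5$ are exactly those that are \emph{not} $Q$-closed in the $10$-dimensional algebra (their bracket with $Q$ lands in the discarded translations), so Baulieu's $k=5$ result does not hand you their action on the twisted theory for free; you would have to track a non-$Q$-closed transformation through the twist and the reduction. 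Finally, nothing in the ``direct reduction'' sketch rules out a priori that the residual supercharges act by higher polyvector fields (e.g.\ degree-zero functions like $\eps_1\cdots\widehat{\eps_\alpha}\cdots\eps_{5-k}$) or carry higher $L_\infty$ corrections. The paper's argument closes this gap by appealing to the cyclic-cohomology characterization of single-trace deformations of holomorphic Chern--Simons from \cite{CosLi15}: the residual supersymmetries must land in $\PV(\C^{k\mid 5-k})\llbracket t\rrbracket[1]$, $\mf{sl}(k)\oplus\mf{sl}(5-k)$-equivariance and invariant theory then cut the possibilities to a short list, and parity together with compatibility with the $k=0$ reduction forces the answer to be exactly $\dpa{\eps_\alpha}$ and $\eps_\alpha\dpa{w_j}$. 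Some version of this indirect argument is needed; the direct reduction you sketch does not, as stated, produce a proof.
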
 

\begin{remark}
We have not excluded the possibility that there might be some higher $L_\infty$ corrections that appear in the action of the residual supersymmetry algebra on the twisted theory. We believe, however, that this is not the case.  This is difficult to prove directly because of the lack of an off-shell formulation of maximally supersymmetric gauge theory in which supersymmetry is evident.  Probably one could prove this by a deformation-theory argument, showing that no such higher $L_\infty$ corrections are possible. 
\end{remark}
\begin{proof}
The proof of the theorem  is rather lengthy. The main steps are the following:
\begin{enumerate} 
 \item Compute the $Q$-cohomology of the supersymmetry algebra (including $R$-symmetry) in various dimensions and verify that the correct commutation relations hold.  This is very similar to the calculation we already performed for the $(2,0)$ supersymmetry algebra in $10$ dimensions. 
\item Check that this $Q$-cohomology acts in the way we expect on the twisted theory.  The proof of this is rather indirect. We know, from general arguments, that there must be such an action on the twisted theory and that translations, $R$-symmetry, and space-time rotations must act in the obvious way.  We then find that there's a unique way to make the fermionic elements act so that the commutation relations hold. 
\end{enumerate}

Let us start by computing the $Q$-cohomology of the supersymmetry algebra in various dimensions, starting with $k = 0$.  In this case, there are no translations in the supersymmetry algebra, so that every element in $S$ is in the kernel of $[Q,-]$.  The image of $[Q,-]$ consists of those elements of $S$ which can be obtained from $Q$ by rotation by an element of $\mf{so}(10,\C)$.  It is easy to verify (following our discussion in the case of the $(2,0)$ supersymmetry algebra in $10$ dimensions) that $\op{Im}[Q,-]$ is $11$ dimensional, so that the fermionic elements in $Q$-cohomology are a $5$ dimensional space, which forms the dual of the fundamental representation of $\mf{sl}(5,\C)$.

The case of $k = 5$ is entirely parallel to the case $k = 0$. In this case, $[Q,-]$ takes spinors to vector fields $\dpa{\br{w_i}}$ for $i = 1,\dots,5$. Thus, the kernel of $[Q,-]$ in $S$ is $11$ dimensional, and coincides with those spinors that can be obtained by rotating $Q$ by an element of $\mf{so}(10,\C)$. Therefore, in this case, there are no fermionic elements in the $Q$-cohomology.

Next, let us discuss the cases $k = 1$ and $k = 4$.  In this case, the $16$ dimensional spin representation $S$ decomposes as
$$
S = S_+^{(2)} \otimes S_+^{(8)} \oplus S_-^{(2)} \otimes S_-^{(8)}
$$  
where $S_{\pm}{(2)}$ are the two one-dimensional spin representations of $\mf{so}(2,\C)$ and $S_{\pm}^{(8)}$ are the two distinct eight-dimensional spin representations of $\mf{so}(8,\C)$. The representations $S_{\pm}^{(8)}$ are self-dual, whereas $S_{+}^{(2)}$ and $S_-^{(2)}$ are dual to each other.  The vector representation of $\mf{so}(8,\C)$ is a summand of $S_+^{(8)} \otimes S_-^{(8)}$, and the vector representation of $\mf{so}(2,\C)$ is $S_+^{(2)} \otimes S_+^{(2)} \oplus S_-^{(2)} \otimes S_-^{(2)}$.  The supercharge we use to twist is an element of $S_+^{(2)} \otimes S_+^{(8)}$ which is the tensor product of a basis element of $S_+^{(2)}$ with a null vector in $S_+^{(8)}$.    

In the case $k = 1$, so that we are discussing a $D1$ brane, the Lie bracket on the supersymmetry algebra is such that elements in $S_+^{(2)} \otimes S_+^{(8)}$ commute with those in $S_-^{(2)}\otimes S_-^{(8)}$. The kernel of $[Q,-]$ contains all of $S_-^{(2)} \otimes S_-^{(8)}$ and a seven dimensional subspace of $S_+^{(2)} \otimes S_+^{(8)}$.  The image of $Q$ under rotations by $\mf{so}(8,\C)$ is also seven-dimensional subspace of $S_+^{(2)} \otimes S_+^{(8)}$.      It follows that in this case, the fermionic elements in the $Q$-cohomology consist of all of $S_-^{(2)} \otimes S_-^{(8)}$.  Further, as a representation of $\mf{sl}(4,\C) \subset \mf{so}(8,\C)$, $S_-^{(8)}$ decomposes as a direct sum of two $4$-dimensional representations, which are the fundamental and anti-fundamental representations of $\mf{sl}(4,\C)$. Let $U$ denote the fundamental representation of $\mf{sl}(4,\C)$. Then, the fermions in the $Q$-cohomology are $U \oplus U^\vee$, and the commutator is such that $$[u^\vee,u] = \ip{u^\vee,u}\dpa{w}$$ 
is translation in the holomorphic direction in the space $\R^2 = \C$ where our $D_1$ brane lives.    Thus, the fermionic elements form the same representation of $\mf{sl}(4,\C)$ and have the same commutators as the vector fields $\dpa{\eps_\alpha}$ and $\eps_\alpha \dpa{w}$ on $\C^{1\mid 4}$.

Next let us focus on $k = 4$, so that we are discussing a $D7$ brane.  The $16$ supercharges form the same representation $S_+^{(2)} \otimes S_+^{(8)} \oplus S_-^{(2)} \otimes S_-^{(8)}$ of $\mf{so}(2,\C) \oplus \mf{so}(4,\C)$ and we are twisting with respect to the same $Q$ as in the $k = 1$ case.  In this case, the non-zero commutators in the supersymmetry algebra are between $S_+^{(2)} \otimes S_+^{(8)}$ and $S_-^{(2)} \otimes S_-^{(8)}$, where we use the fact that the vector representation of $\mf{so}(8,\C)$ is a summand of $S_+^{(8)} \otimes S_-^{(8)}$ and that the representations $S_+^{(2)}$ and $S_-^{(2)}$ are dual.  It follows that the kernel of $[Q,-]$ consists of all of $S_+^{(2)} \otimes S_+^{(8)}$ and a $4$-dimensional subspace of $S_-^{(2)} \otimes S_-^{(8)}$ which transforms under the fundamental representation of $\mf{sl}(4,\C)$.  The image of $[Q,-]$ in the space of fermions consists of a $7$-dimensional subspace of $S_+^{(2)} \otimes S_+^{(8)}$. Therefore, at the level of cohomology, we have $5$ supercharges, $4$ of which form the fundamental representation of $\mf{sl}(4,\C)$ and one of which is invariant under $\mf{sl}(4,\C)$.  These $5$ supercharges correspond to $ \dpa{\eps}$ and $\eps \dpa{w_i}$ for $i = 1,\dots,4$ (and have the same commutation relations).

The final two cases are $k = 2$ and $k = 3$.  In these cases, the $16$ dimensional spin representation $S$ is decomposed as a direct sum $S = S_+^{(4)} \otimes S_+^{(6)} \oplus S_-^{(4)} \otimes S_-^{(6)}$ where $S_{\pm}^{(4)}$ are the two irreducible spin representations of $\mf{so}(4,\C)$, each of rank $2$; and $S_{\pm}^{(6)}$ are the two rank $4$ spin representations of $\mf{so}(6,\C)$.  The representations $S_{\pm}^{(4)}$ are self dual, whereas $S_+^{(6)}$ is dual to $S_-^{(6)}$.  The vector representation of $\mf{so}(4,\C)$ is $S_+^{(4)} \otimes S_-^{(4)}$ and the vector representation of $\mf{so}(6,\C)$ is $\wedge^2 S_+^{(6)} = \wedge^2 S_-^{(6)}$.  The supercharge we choose is the tensor product of an element in $S_+^{(4)}$ with an element in $S_+^{(6)}$.  

In the case $k = 2$, the non-trivial commutators in the supersymmetry algebra are between $S_+^{(4)} \otimes S_+^{(6)}$ and $S_-^{(4)} \otimes S_-^{(6)}$.  The kernel of $[Q,-]$ thus consists of all of $S_+^{(4)} \otimes S_+^{(6)}$ and a $6$ dimensional subspace of $S_-^{(4)} \otimes S_-^{(6)}$.  This $6$-dimensional subspace transforms under $\mf{sl}(2,\C) \oplus \mf{sl}(3,\C)$ as the tensor product of the fundamental representations.  The image of $[Q,-]$ in the space of fermions is a $5$ dimensional subspace of $S_+^{(4)} \otimes S_+^{(6)}$. Therefore there are $9$ fermionic elements in the $Q$-cohomology, $6$ of whom form the tensor product of the fundamental representation of $\mf{sl}(2,\C)$ with that of $\mf{sl}(3,\C)$ and the remaining $3$ form the dual of the fundamental representation of $\mf{sl}(3,\C)$. Thus, as representations of $\mf{sl}(2,\C) \oplus \mf{sl}(3,\C)$, they behave in the same way as the vector fields $\eps_\alpha \dpa{w_j}$ and $\dpa{\eps_\alpha}$ on $\C^{2 \mid 3}$  we wrote down earlier. To check that the commutators are correct, note that there is a unique up to scale $\mf{sl}(2) \oplus \mf{sl}(3)$-invariant map from the symmetric square of our $9$-dimensional space of fermions in the $Q$-cohomology to the space of translations on $\C^2$, which form the fundamental representation of $\mf{sl}(2,\C)$.  Since we know the commutators at the level of $Q$-cohomology must be non-zero, this representation theory argument tells us that they must be correct.  

Finally, let us discuss the case $k = 3$, where the fermions form the same representation of $\mf{so}(4,\C) \oplus \mf{so}(6,\C)$ as in the case $k = 2$, and we use the same supercharge.  In this case, the non-zero commutators between fermions are between two elements of $S_+^{(4)} \otimes S_+^{(6)}$ and between two elements of $S_-^{(4)} \otimes S_-^{(6)}$. Therefore, the kernel of $[Q,-]$ in the space of fermions consists of all of $S_-^{(4)} \otimes S_-^{(6)}$ and a $5$ dimensional subspace of $S_+^{(4)} \otimes S_+^{(6)}$.  This $5$-dimensional subspace must be a subspace of the image of $Q$ under rotation by $\mf{so}(4,\C) \oplus \mf{so}(6,\C)$, and we have already seen that this image is $5$ dimensional, so they coincide. It follows that the $Q$-cohomology consists of all of $S_-^{(4)} \otimes S_-^{(6)}$. The space $S_-^{(4)}$ is the fundamental representation of $\mf{sl}(2,\C)$ (in this case fundamental and anti-fundamental coincide). The space  $S_-^{(6)}$ is a direct sum of the fundamental representation of $\mf{sl}(3,\C)$ with the trivial representation. It follows that the $8$ supercharges present in $Q$-cohomology transform in the same way under $\mf{sl}(2,\C) \oplus \mf{sl}(3,\C)$ as the vector fields $\dpa{\eps_\alpha}$, $\eps_\alpha \dpa{w_j}$ on $\C^{3 \mid 2}$ we wrote down earlier.  As in the case $k = 2$, one can check that the commutation relations we find in the $Q$-cohomology of the supersymmetry algebra must coincide with those among these vector fields on $\C^{3 \mid 2}$ using a representation theory argument.   

So far, we have checked that for $k = 0,\dots, 5$, the fermions in the $Q$-cohomology of the supersymmetry algebra of the maximally supersymmetric gauge theory on $\R^{2k}$ match with the vector fields we wrote down on $\C^{k \mid 5-k}$ at the level of representations of $\mf{sl}(k,\C) \oplus \mf{sl}(5-k,\C)$ and that the commutation relations also match.  It remains to verify that these fermions act on the space  of fields of the twisted theory (which is holomorphic Chern-Simons on $\C^{k \mid 5-k}$) by these vector fields on $\C^{k \mid 5-k}$.  

To prove this, we will start with the case $k = 0$.  In this case the field theory is holomorphic Chern-Simons theory on $\C^{0 \mid 5}$ and the residual supersymmetry consists of $5$ fermionic symmetries transforming in the dual of the fundamental representation of $\mf{sl}(5)$.  We will let $V$ denote the fundamental representation of $\mf{sl}(5)$. The supermanifold on which we are considering holomorphic Chern-Simons is $\Pi V^\ast$. 

The supersymmetries will act on $\gl_N$ holomorphic Chern-Simons theory on $\C^{0 \mid 5}$ for all $N$, in a way compatible with the inclusions from $\gl_N \into \gl_{N+k}$. Using the arguments presented in \cite{CosLi15}, this implies that the supersymmetries must be represented by a map to the cyclic cohomology groups of the exterior algebra on $V$, with a shift of one. This cyclic cohomology group is, as we have discussed earlier, the space of polyvector fields on $\pi V$, which we can identify (after introducing a shift)
$$
\wedge^\ast V \otimes \what{\Sym}^\ast V^\ast\left\llbracket t\right\rrbracket[1] = \C\left\llbracket\eps_\alpha, \dpa{\eps_\alpha}, t \right\rrbracket[1]
$$
with differential $t\partial$ where $\partial$ is the divergence operator. 

The cohomology of this complex is the subspace 
$$
\op{Ker} \partial \subset \wedge^\ast V \otimes \what{\Sym}^\ast V^\ast[1].
$$  
Thus, we need a Lie algebra homomorphism from the Abelian super Lie algebra $\pi V^\ast$ to this space, and in particular a Lie algebra homomorphism to the larger space $\wedge^\ast V \otimes \what{\Sym}^\ast V^\ast$.  This homomorphism must be $\mf{sl}(5)$-invariant.

Such a homomorphism is, in particular, an $\mf{sl}(5)$- invariant element in 
$$
V \otimes \wedge^\ast V \otimes \what{\Sym}^\ast V^\ast.
$$
Invariant theory for $\mf{sl}(5)$ tells us that the space of invariants is of rank two, with one invariant element in $V \otimes \wedge^4 V$ and the other in $V \otimes \Sym^1 V^\ast$.  In coordinates, the first invariant element corresponds to the supersymmetries acting by the polyvector fields $\eps_1 \dots \what{\eps_\alpha} \dots \eps_5$, and the other corresponds to the supersymmetries acting by $\dpa{\eps_\alpha}$. The first possibility can be excluded because these are bosonic elements in the super Lie algebra of polyvector fields.  It follows that we have proved that the only possible action of the residual supersymmetries in the case $k = 0$ is by the vector fields $\dpa{\eps_\alpha}$ acting on $\C^{0 \mid 5}$. 

Next, let us discuss the case $k > 0$.  We find, by an argument similar to that used in the $k = 0$ case,  that we need a homomorphism of Lie algebras from our residual supersymmetry algebra to the cyclic cohomology of $\Omega^{0,\ast}(\C^{k \mid 5-k})$, which is $\PV(\C^{k \mid 5-k})\left\llbracket t\right\rrbracket$.  At the level of cohomology, only polyvector fields that are of Dolbeault degree $0$ and in the kernel of $\dbar$ survive.  We find that the cohomology of the space of polyvector fields is
$$
\op{Ker} (\partial) \subset \op{Hol}(\C^k) \left \llbracket \dpa{w_i}, \eps_\alpha, \dpa{\eps_\beta} \right \rrbracket
$$
where $\op{Hol}(\C^k)$ is the space of holomorphic functions on $\C^k$ and $\partial$ is the divergence operator.  

Our residual supersymmetries will be represented by $(5-k)(k+1)$ even polyvector fields (even because of the parity change one needs to make polyvector fields into a Lie algebra).  Since, if we dimensionally reduce the theory to a point, the supersymmetries are represented by the vector fields $\dpa{\eps_\alpha}$ on $\C^{0 \mid 5}$, we conclude that $5-k$ of the desired supersymmetries are represented by the vector fields $\dpa{\eps_\alpha}$ on $\C^{k \mid 5-k}$.  The remaining $(5-k)k$ polyvector fields must commute with the $\dpa{\eps_\alpha}$ to give a translation $\dpa{w_j}$. Therefore, they must be of the form
$$
\eps_\alpha \dpa{w_j} + A_{\alpha j}
$$
where the $A_{\alpha j}$ commute with all $\dpa{\eps_\alpha}$ and all $\dpa{w_i}$ and transform in the tensor product of the fundamental representations of $\mf{sl}(k)$ and $\mf{sl}(5-k)$.  The $A_{\alpha j}$ can not involve any $\eps$'s, and as functions of $w \in \C^k$ they must be constant, so that they are in 
$$
\C \left \llbracket \dpa{w_i}, \dpa{\eps_\alpha} \right \rrbracket. 
$$ 
If $1 < k < 4$, then the only elements which transform in the correct representation are  of the form
$$
A_{\alpha j} = \dpa{w_i} \dpa{\eps_1} \dots \what{\dpa{\eps_\alpha}} \dots \dpa{\eps_{5-k}}.
$$
But these elements are odd polyvector fields, whereas to correspond to supersymmetries they must be even.  Therefore $A_{\alpha j} = 0$ in the cases where $1 < k < 4$.  The remaining cases are $k = 1$ and $k = 4$.  For $k = 4$, there is only one fermionic direction, so rotations by $\mf{sl}(5-k) = \mf{sl}(1)$ give no constraints. This implies that there is an extra possibility where $A_{1 j} = \dpa{w_i}$. This is again an odd polyvector field and so can not correspond to a supersymmetry. For $k = 1$, there is an extra possibility of the form 
$$A_{\alpha 1} =  \dpa{\eps_1} \dots \what{\dpa{\eps_\alpha}} \dots \dpa{\eps_{5-k}}.$$
This possibility can be excluded because in this case the supersymmetries are not dimensionally reduced from the case $k = 2$.  

Therefore, $A_{\alpha j} = 0$ and we find that the action of the residual supersymmetry on the holomorphic twist is what we claimed it is.  
\end{proof}

\section{}\label{appendix-B}
In this section, we will prove a technical proposition concerning the supersymmetries of BCOV theory in the AdS background, that we stated but did not prove earlier.   
\begin{proposition}
The polyvector fields giving a copy of $\mf{psl}(3 \mid 3)$ have $N$-dependent corrections which make them  closed under $\dbar + t \del + \{F,-\}$.  Further, inside the cohomology of $\PV(\C^5 \setminus \C^2)\left\llbracket t\right\rrbracket [1]$ with differential $\dbar + t \del + N \{F,-\}$, these polyvector fields again form a copy of $\mf{psl}(3 \mid 3)$. 
\end{proposition}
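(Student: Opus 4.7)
The plan is to treat the deformed differential $D_N = \dbar + t\del + N\{F,-\}$ via obstruction theory in the formal parameter $N$, and show that every generator $v$ of the undeformed copy $\mf{g} \subset \PV(\C^5 \setminus \C^2)\llbracket t\rrbracket[1]$ of $\mf{psl}(3\mid3)$ admits a polynomial-in-$N$ lift $\tilde v = v + \sum_{k\geq 1} N^k v_k$ satisfying $D_N \tilde v = 0$. Concretely, expanding $D_N \tilde v = 0$ in powers of $N$ yields the recursion $(\dbar + t\del) v_k = - \{F, v_{k-1}\}$ with $v_0 = v$. The obstruction at order $k$ is the class $[\{F,v_{k-1}\}]$ in the cohomology of $\PV(\C^5 \setminus \C^2)\llbracket t\rrbracket[1]$ with respect to $\dbar+ t\del$.

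The first observation, and the reason the procedure is finite, is that $F \in \PV^{2,2}$ and the Schouten bracket with an element in $\PV^{p,q}$ lands in $\PV^{p+1, q+2} \oplus \PV^{p+2, q+1}$ (schematically), so since the polyvector degree is bounded by $5$, after at most three iterations the sequence $v_k$ is forced to be zero. Thus $\tilde v$ is in fact a polynomial of bounded degree in $N$. The same bound ensures that the problem is of finite homological complexity: at each step one only needs to exhibit a primitive for a bracket in a fixed Dolbeault/holomorphic-polyvector bi-degree.

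Next, I would verify the vanishing of obstructions generator by generator. The bosonic generators of $\mf{g}$ divide into three families: (i) the rotations $\sum A_{ij} z_i \partial_{z_j}$ and $\sum A_{ij} w_i\partial_{w_j}$ with $A \in \mf{sl}(3)$ or $\mf{sl}(2)$, (ii) the translations $\partial_{w_i}$, and (iii) the scaling $\sum w_i\partial_{w_i} - \tfrac{2}{3}\sum z_j \partial_{z_j}$ and the special conformal transformations $w_i(\sum_j w_j\partial_{w_j} - \sum_k z_k\partial_{z_k})$. For (i) and (ii), $\{F,v\} = 0$ on the nose: $F$ is manifestly $SU(3) \times SU(2)$ invariant under the rotations, and is annihilated by $\partial_{w_i}$ because its scalar coefficient depends only on the $z,\bar z$ variables. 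For the scaling, a direct weight count shows that $F$ has $w$-weight $-2$ (from the two $\partial_{w_i}$'s) and $z$-weight $-3$ (from $r^{-6}$), so $\mc{L}_{v_{scale}} F = -2 F + (-\tfrac{2}{3})(-3 F) = 0$. The remaining bosonic generators, the $w_i$ special conformal transformations, give nontrivial $\{F,v\}$; for these I would write down a primitive $v_1$ explicitly, using that $F$ extends across $\C^2$ with $\dbar F = \delta_{\C^2}$ and $\del F = 0$, which gives enough control to solve the cohomological equation in the complex $(\PV(\C^5 \setminus \C^2)\llbracket t\rrbracket[1], \dbar + t\del)$. The fermionic generators (the linear and quadratic superpotentials $z_i, z_i w_j$ and the bivectors $\partial_{w_i}\partial_{z_j}$ and $\partial_{z_i}(\sum w_l \partial_{w_l} - \sum_k z_k\partial_{z_k})$) are handled similarly: the first three families commute with $F$ by degree or invariance considerations, while the last acquires a bounded $N$-correction which one writes down by integrating once in $\bar z$ against the explicit coefficient $r^{-6}(\bar z_1 d\bar z_2 d\bar z_3 - \cdots)$.

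Finally, to see that the lifted elements still form a copy of $\mf{psl}(3\mid 3)$ inside the cohomology of $D_N$, I would invoke a general principle: for any Maurer-Cartan element $F$, the assignment $v \mapsto \tilde v$ defines an $L_\infty$ morphism from the strict Lie subalgebra $\mf{g}$ to the deformed dg Lie algebra, provided the lifts are chosen coherently. The possible failure is that $\{\tilde v_1, \tilde v_2\} - \widetilde{\{v_1,v_2\}}$ need only be $D_N$-exact, not zero, and higher $L_\infty$ brackets could in principle appear. To rule these out, I would argue by representation-theoretic weight: the deformed Lie bracket on cohomology must be $\op{Stab}(Q)$-equivariant and respect the internal $\Z$-grading coming from the scaling vector field, and the only $\op{Stab}(Q)$-invariant bilinear maps of the correct degrees are the Lie brackets of $\mf{psl}(3\mid 3)$ itself. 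The hardest step is the explicit construction of primitives for $\{F, v\}$ for the special conformal symmetries and for the superconformal fermions, where one must compute with the distributional coefficients of $F$; once these primitives are in hand, the closure of the algebra follows from the rigidity argument and from the Jacobi identity for $F$ (which is automatic from $\{F,F\}=0$).
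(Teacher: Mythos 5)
Your overall strategy---treating $N$ as a formal parameter and solving the recursion $(\dbar + t\partial)v_k = -\{F, v_{k-1}\}$ order by order, with termination guaranteed because $\{F,-\}$ raises Dolbeault degree by $2$---is a genuine alternative to the paper's route. The paper instead introduces a decreasing filtration by holomorphic polyvector degree shifted by $t$-degree, for which $\dbar$, $t\partial$ and $\{F,-\}$ are all compatible and $\{F,-\}$ strictly lowers filtration. The resulting spectral sequence has $E_1 = H^\ast\bigl(\PV(\C^5\setminus\C^2)\llbracket t\rrbracket, \dbar + t\partial\bigr)$ with differential $\{[F],-\}$, where $[F] = z_1^{-1}z_2^{-1}z_3^{-1}\partial_{w_1}\partial_{w_2}$ lives in the module $H^2_{\dbar}(\C^3\setminus 0)$ on which multiplication by any nonnegative power of $z_i$ acts by zero; the relevant brackets then vanish by a one-line computation (e.g.\ $\{[F], w_1 z_1\} = z_1\cdot z_1^{-1}z_2^{-1}z_3^{-1}\partial_{w_2}=0$) and the spectral sequence degenerates for degree reasons. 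The trade-off is clear: your approach would yield explicit corrected cochain representatives, while the paper's approach sidesteps the construction of primitives entirely by computing in cohomology, where the algebra is trivial.

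There is, however, a concrete error in your generator-by-generator check. You claim that for the $\mf{sl}(3)$-rotations $v = \sum A_{ij}z_i\partial_{z_j}$ the bracket $\{F,v\}$ vanishes on the nose by $SU(3)\times SU(2)$-invariance of $F$. This fails for general $A \in \mf{sl}(3,\C)$: the coefficient $r^{-6} = (\sum |z_i|^2)^{-3}$ of $F$ is not holomorphic, so it is invariant under the compact real form $\mf{su}(3)$ but not under the complexification. Explicitly,
$$
\{z_i\partial_{z_j}, F\} \;=\; -3\, z_i\bar z_j\, r^{-8}\;\partial_{w_1}\partial_{w_2}\bigl(\bar z_1 d\bar z_2 d\bar z_3 - \bar z_2 d\bar z_1 d\bar z_3 + \bar z_3 d\bar z_1 d\bar z_2\bigr),
$$
which is nonzero whenever $i\neq j$ (and nonzero for generic traceless diagonal $A$ as well). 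This bracket \emph{is} $(\dbar + t\partial)$-exact---passing to cohomology, $\{[F], z_i\partial_{z_j}\} = -z_i z_j^{-1}[F]$ vanishes in the truncated module for $i\neq j$, and the diagonal contributions cancel by tracelessness---but a genuine first-order correction $v_1$ is needed for these generators, contrary to your assertion. So the list of generators requiring corrections is larger than you allowed, and the bulk of the cochain-level work you deferred (constructing primitives) cannot be avoided for the $\mf{sl}(3)$-rotations. Your verifications for the scaling vector field, the translations $\partial_{w_i}$, and the $\mf{sl}(2)$-rotations are correct as written.
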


\begin{proof}
The operation $\{F,-\}$ maps $\PV^{i,j}$ to $\PV^{i+1,j+2}$.  There is a spectral sequence which computes the cohomology of $\PV(\C^5 \setminus \C^2)\left\llbracket t\right\rrbracket$ associated to the filtration where $F^k \PV(\C^5 \setminus \C^2)\left\llbracket t\right\rrbracket$ consists of those series $\sum t^n \alpha_n$ where $\alpha_n \in \PV^{\ge k-n, \ast}(\C^5 \setminus \C^2)\left\llbracket t\right\rrbracket$.  The operators $\dbar$, $t \partial$, and $\{F,-\}$ all preserve this filtration.  On the associated graded, the operator $\{F,-\}$ becomes zero, so that the first page of this spectral sequence is the cohomology of $\PV(\C^5 \setminus \C^2)\left\llbracket t\right\rrbracket$ with respect to the differential $\dbar + t \partial$.    

Let us compute the first page of this spectral sequence. It is convenient to use an additioanl spectral sequence, obtained from the filtration by powers of $t$, whose first page gives the cohomology of $\PV(\C^5 \setminus \C^2)\left\llbracket t\right\rrbracket$ with respect to just the $\dbar$ operator.

Since the exterior powers of the tangent bundle of $\C^5 \setminus \C^2$ are all trivial, the Dolbeault cohomology of polyvector fields will be the Dolbeault cohomology of the structure sheaf tensor an exterior algebra.   We can further decompose this (up to issues of completion) as a tensor product of the Dolbeault cohomology of $\C^2$ with that of $\C^3 \setminus 0$.
 
Let us recall how to describe Dolbeault cohomology of $\C^3 \setminus 0$.  We find that $H^0_{\dbar}(\C^3 \setminus 0)$ is the space $\Oo(\C^3)$ of holomorphic functions on $\C^3$.  Further, $H^i_{\dbar}(\C^3 \setminus 0) = 0$ if $i = 1,3$. Finally, we have a dense embedding
$$
z_1^{-1} z_2^{-1} z_3^{-1} \C[z_1^{-1}, z_2^{-1} z_3^{-1}] \into H^2_{\dbar}(\C^3 \setminus 0). 
$$
This is a module for the algebra $\C[z_i]$ of polynomials in the $z_i$, where we multiply in the evident way but set any monomial which contains a non-negative power of any $z_i$ to zero.

Alternatively, up to completion, we can describe $H^2_{\dbar}(\C^3 \setminus 0)$ as being the linear dual of the space of holomorphic top-forms on $\C^3$. The pairing is given by an integral over the unit $5$-sphere.  

In this description, the cohomology class $[F]$ of $F$ is
$$
[F] = z_1^{-1} z_2^{-1} z_3^{-1} \dpa{w_1} \dpa{w_2}
$$ 
(up to some factors of $\pi$). 

It follows that (up to completion)
$$
H^\ast( \PV(\C^5 \setminus \C^2), \dbar ) = H^\ast(\PV(\C^5)) \oplus H^2_{\dbar}(\C^3 \setminus 0) [w_i, \partial_{w_i}, \partial_{z_i}].  
$$
In a similar way,  we have a quasi-isomorphism (again up to completion) 
\begin{equation*}
H^\ast(\PV(\C^5 \setminus \C^2)\left\llbracket t\right\rrbracket, \dbar + t \partial ) \simeq H^\ast(\PV(\C^5)\left\llbracket t\right\rrbracket ) \oplus H^2_{\dbar}(\C^3 \setminus 0) [w_i, \partial_{w_i}, \partial_{z_i} ]\left\llbracket t\right\rrbracket \tag{$\dagger$} 
\end{equation*}
where on the second summand on the right hand side we have the differential $t\partial$ as usual. 

This describes the first page of the spectral sequence converging to the cohomology including the term $\{F,-\}$. On the next page, we simply introduce a differential given by the Schouten bracket with the cohomology class $[F]$ of $F$. This differential maps the terms in Dolbeault degree $0$ (which extend to holomorphic objects on $\C^5$) to those in Dolbeault degree $2$.   There are no further differentials in the spectral sequence, for degree reasons.

The copy of $\mf{psl}(3 \mid 3)$ we wrote down above lives inside $H^\ast(\PV(\C^5))$ and is in the kernel of the operator $\partial$.  We need to check it survives on all pages of the spectral sequence.  Since the spectral sequence degenerates after the second page, we need only verify that these operators are in the kernel of $\{[F],-\}$.  This is easy to verify explicitly: for example, 
\begin{align*} 
 \{[F], w_1 z_1   \} & = \{ z_1^{-1} z_2^{-1} z_3^{-1} \partial_{w_1} \partial_{w_2}, w_1 z_1\} \\
&= z_1 (z_1^{-1} z_2^{-1} z_3^{-1} ) \partial_{w_2} \\
&= 0 
\end{align*}
because the element $z_1^{-1} z_2^{-1} z_3^{-1}$ is in the kernel of the operators of multiplication by each $z_i$.

\end{proof}

\newcommand{\etalchar}[1]{$^{#1}$}


\begin{thebibliography}{AAE{\etalchar{+}}13}

\bibitem[AAE{\etalchar{+}}13]{AboAurEfiKat13}
M.~Abouzaid, D.~Auroux, A.~Efimov, L.~Katzarkov and D.~Orlov, \textsl{
  Homological mirror symmetry for punctured spheres},
\newblock Journal of the American Mathematical Society \textbf{ 26}(4),
  1051--1083 (2013).

\bibitem[Abo13]{Abo13}
M.~Abouzaid,
\newblock \textsl{ Symplectic cohomology and Viterbo's theorem},
\newblock 2013.

\bibitem[Bau10]{Bau10}
L.~Baulieu, \textsl{ $SU(5)$-invariant decomposition of ten-dimensional
  Yang-Mills supersymmetry},
\newblock (2010), {1009.3893}.

\bibitem[BCOV94]{BerCecOog94}
M.~Bershadsky, S.~Cecotti, H.~Ooguri and C.~Vafa, \textsl{ Kodaira-{S}pencer
  theory of gravity and exact results for quantum string amplitudes},
\newblock Comm. Math. Phys. \textbf{ 165}(2), 311--427 (1994).

\bibitem[BH]{BaeHue11}
J.~C. Baez and J.~Huerta, \textsl{ Division algebras and supersymmetry II},
\newblock Adv. Theor. Math. Phys. \textbf{ 15}(5), 1373--1410.

\bibitem[CdF91]{CasdAuFre91}
L.~Castellani, R.~d'Auria and P.~Fr{\'e},
\newblock \textsl{ Supergravity and Superstrings. A Geometric Perspective. Vol:
  Supergravity},
\newblock World Scientific, 1991.

\bibitem[CL12]{CosLi12}
K.~Costello and S.~Li, \textsl{ The open-closed {BCOV} theory on {C}alabi-{Y}au
  manifolds},
\newblock (2012).

\bibitem[CL15]{CosLi15}
K.~Costello and S.~Li, \textsl{ Quantization of open-closed BCOV theory, I},
\newblock (2015), {arXiv:1505.06703}.

\bibitem[Cos13]{Cos11b}
K.~Costello, \textsl{ Notes on supersymmetric and holomorphic field theories in
  dimensions $2$ and $4$},
\newblock Pure and applied mathematics quarterly  (2013), {arXiv:1111.4234}.

\bibitem[dW02]{deW02}
B.~de~Wit, \textsl{ Supergravity},
\newblock Unity from duality: Gravity, gauge theory and strings. Proceedings,
  NATO Advanced Study Institute, Euro Summer School, 76th session, Les Houches,
  France, July 30-August 31, 2001 , 1--135 (2002), {hep-th/0212245}.

\bibitem[FSS15]{FioSatSch15}
D.~Fiorenza, H.~Sati and U.~Schreiber, \textsl{ Super-Lie n-algebra extensions,
  higher WZW models and super-p-branes with tensor multiplet fields},
\newblock International Journal of Geometric Methods in Modern Physics \textbf{
  12}(02), 1550018 (2015).

\bibitem[KR88]{KarRoc88}
A.~Karlhede and M.~Ro{\v{c}}ek, \textsl{ Topological quantum field theory and
  N= 2 conformal supergravity},
\newblock Physics Letters B \textbf{ 212}(1), 51--55 (1988).

\bibitem[KW06]{KapWit06}
A.~Kapustin and E.~Witten, \textsl{ Electric-Magnetic Duality And The Geometric
  Langlands Program},
\newblock (2006), {arXiv:hep-th/0604151}.

\bibitem[RDV16]{DijHeiJefVaf16}
P.~J. Robbert~Dijkgraaf, Ben~Heidenreich and C.~Vafa, \textsl{ Negative Branes,
  Supergroups and the Signature of Spacetime},
\newblock (2016), {1603.05665}.

\bibitem[SSS09]{SatSchSta09}
H.~Sati, U.~Schreiber and J.~Stasheff,
\newblock $L_\infty$-algebra connections and applications to String- and
  Chern-Simons $n$-transport,
\newblock pages 303--424, Birkh\"auser, Basel, 2009.

\bibitem[SZ94]{SenZwi94}
A.~Sen and B.~Zwiebach, \textsl{ Quantum background independence of
  closed-string field theory},
\newblock Nuclear Phys. B \textbf{ 423}(2-3), 580--630 (1994).

\bibitem[Tow95]{Tow95}
P.~Townsend, \textsl{ P-brane democracy},
\newblock (1995), {hep-th/9507048}.

\bibitem[Vit]{Vit99}
C.~Viterbo, \textsl{ Functors and computations in Floer homology with
  applications. I},
\newblock Geom. Funct. Anal. \textbf{ 9}(5), 985--1033.

\bibitem[VW94]{VafWit94}
C.~Vafa and E.~Witten, \textsl{ A strong coupling test of {S}-duality},
\newblock (1994), {arXiv:hep-th/9408074}.

\bibitem[Wit88]{Wit88a}
E.~Witten, \textsl{ Topological quantum field theory},
\newblock Comm. Math. Phys. \textbf{ 117}(2), 014, 21 pp. (electronic) (1988).

\bibitem[Zwi93]{Zwi93}
B.~Zwiebach, \textsl{ Closed string field theory: an introduction},
\newblock arXiv preprint hep-th/9305026  (1993).

\bibitem[Zwi98]{Zwi98}
B.~Zwiebach, \textsl{ Oriented open-closed string theory revisited},
\newblock Ann. Physics \textbf{ 267}(2), 193--248 (1998).

\end{thebibliography}
\end{document}